\newif\ifdispComments
\newtheorem*{theorem*}{Theorem}
\newcommand{\ball}[1]{\hspace{.7em}\tikz[baseline=(myanchor.base)] \node[circle,fill=.,inner sep=1pt,minimum size=1pt] (myanchor) {\color{-.}\bfseries\footnotesize #1};\hspace{.5em}}
\newcommand{\ballns}[1]{\tikz[baseline=(myanchor.base)] \node[scale=0.9,circle,fill=.,inner sep=1pt,minimum size=1pt] (myanchor) {\color{-.}\bfseries\footnotesize #1};}
\newcommand{\sball}[1]{\hspace{.7em}\tikz[baseline=(myanchor.base)] \node[scale=0.75,circle,fill=.,inner sep=1pt,minimum size=1pt] (myanchor) {\color{-.}\bfseries\footnotesize #1};\hspace{.5em}}
\newcommand{\sballns}[1]{\hspace{.7em}\tikz[baseline=(myanchor.base)] \node[scale=0.75,circle,fill=.,inner sep=1pt,minimum size=1pt] (myanchor) {\color{-.}\bfseries\footnotesize #1};}
\newcommand{\sballlrns}[1]{\tikz[baseline=(myanchor.base)] \node[scale=0.75,circle,fill=.,inner sep=1pt,minimum size=1pt] (myanchor) {\color{-.}\bfseries\footnotesize #1};}
\let\save@mathaccent\mathaccent
\newcommand*\if@single[3]{%
  \setbox0\hbox{${\mathaccent"0362{#1}}^H$}%
  \setbox2\hbox{${\mathaccent"0362{\kern0pt#1}}^H$}%
  \ifdim\ht0=\ht2 #3\else #2\fi
  }
\newcommand*\rel@kern[1]{\kern#1\dimexpr\macc@kerna}
\newcommand*\widebar[1]{\@ifnextchar^{{\wide@bar{#1}{0}}}{\wide@bar{#1}{1}}}
\newcommand*\wide@bar[2]{\if@single{#1}{\wide@bar@{#1}{#2}{1}}{\wide@bar@{#1}{#2}{2}}}
\newcommand*\wide@bar@[3]{%
  \begingroup
  \def\mathaccent##1##2{%
    \let\mathaccent\save@mathaccent
    \if#32 \let\macc@nucleus\first@char \fi
    \setbox\z@\hbox{$\macc@style{\macc@nucleus}_{}$}%
    \setbox\tw@\hbox{$\macc@style{\macc@nucleus}{}_{}$}%
    \dimen@\wd\tw@
    \advance\dimen@-\wd\z@
    \divide\dimen@ 3
    \@tempdima\wd\tw@
    \advance\@tempdima-\scriptspace
    \divide\@tempdima 10
    \advance\dimen@-\@tempdima
    \ifdim\dimen@>\z@ \dimen@0pt\fi
    \rel@kern{0.6}\kern-\dimen@
    \if#31
      \overline{\rel@kern{-0.6}\kern\dimen@\macc@nucleus\rel@kern{0.4}\kern\dimen@}%
      \advance\dimen@0.4\dimexpr\macc@kerna
      \let\final@kern#2%
      \ifdim\dimen@<\z@ \let\final@kern1\fi
      \if\final@kern1 \kern-\dimen@\fi
    \else
      \overline{\rel@kern{-0.6}\kern\dimen@#1}%
    \fi
  }%
  \macc@depth\@ne
  \let\math@bgroup\@empty \let\math@egroup\macc@set@skewchar
  \mathsurround\z@ \frozen@everymath{\mathgroup\macc@group\relax}%
  \macc@set@skewchar\relax
  \let\mathaccentV\macc@nested@a
  \if#31
    \macc@nested@a\relax111{#1}%
  \else
    \def\gobble@till@marker##1\endmarker{}%
    \futurelet\first@char\gobble@till@marker#1\endmarker
    \ifcat\noexpand\first@char A\else
      \def\first@char{}%
    \fi
    \macc@nested@a\relax111{\first@char}%
  \fi
  \endgroup
}
\newcommand{\rone}{(\emph{i})~}
\newcommand{\rtwo}{(\emph{ii})~}
\newcommand{\ex}{\mathsf{e}}
\newcommand{\examples}{\mathsf{E}}
\newcommand{\xmark}{\ding{55}}%
\def\namenay{\textsc{nay}\xspace}
\def\nope{\textsc{nope}\xspace}
  \newcommand {\loris}[1]{{\color{blue}\bf{L: #1}\normalfont}}
  \newcommand {\heping}[1]{{\color{red}\bf{H: #1}\normalfont}}
  \newcommand {\tom}[1]{{\color{green}\bf{T: #1}\normalfont}}
  \newcommand {\jinwoo}[1]{{\color{violet}\bf{J: #1}\normalfont}}
  \newcommand {\loris}[1]{}
  \newcommand {\heping}[1]{}
  \newcommand {\tom}[1]{}
  \newcommand {\jinwoo}[1]{}
\DeclareRobustCommand\ourtool{\textsc{MESSY}\xspace}
\newcommand{\name}{\ourtool}
\def\nope{\textsc{Nope}\xspace}
\def\nay{\textsc{Nay}\xspace}
\def\semgus{\textsc{SemGuS}\xspace}
\def\sygus{\textsc{SyGuS}\xspace}
\def\nat{\mathbb{N}}
\newcommand{\type}{\tau}
\newcommand{\sem}[1]{\llbracket{#1}\rrbracket}
\newcommand{\Etrue}{{{\mathsf{true}}}}
\newcommand{\Efalse}{{{\mathsf{false}}}}
\newcommand{\Eifthenelse}[3]{{\mathsf{if}}\ {#1}\ {\mathsf{then}}\ {#2}\ {\mathsf{else}}\ {#3}}
\newcommand{\Eseq}[2]{{#1}{\mathsf{;}}\ {#2}}
\newcommand{\Earrayaccess}[2]{{#1}[#2]}
\newcommand{\Eassign}[2]{{{#1}\ \mathsf{:=}\ {#2}}}
\newcommand{\Earrayupdate}[3]{{{#1}[{#2}]\ \mathsf{:=}\ {#3}}}
\newcommand{\Vect}[1]{\vec {#1}}
\newcommand{\Vproj}[2]{\textsc{proj}({#1}, {#2})}
\newcommand{\Vjoin}[2]{\textsc{merge}({#1}, {#2})}
\newcommand{\Ereduce}[3]{\sem{#1}(#2,#3)}
\newcommand{\Sreduce}[3]{\sem{#1}(#2,#3)}
\newcommand{\eEreduce}[3]{\sem{#1}_\examples(#2,#3)}
\newcommand{\eSreduce}[3]{\sem{#1}_\examples(#2,#3)}
\newcommand{\Ewhile}[2]{\mathsf{while} \; {#1} \; \mathsf{do} \; {#2}}
\newcommand{\aL}{\mathcal{L}}
\newcommand{\VG}{\Vect{\Gamma}}
\newcommand{\Vv}{\Vect{v}}
\newcommand{\rWT}{\mathsf{WTrue}}
\newcommand{\aI}{\mathcal{L}}
\newcommand{\aT}{\mathsf{T}}
\newcommand{\syntax}{\texttt{syn}}
\newcommand{\semant}{\texttt{sem}}
\newcommand{\Frreduce}[3]{\syntax_{#3}(#1,#2)}
\newcommand{\BFrreduce}[2]{\Frreduce{#1}{#2}{B}}
\newcommand{\SFrreduce}[2]{\Frreduce{#1}{#2}{S}}
\newcommand{\Startrreduce}[2]{\Frreduce{#1}{#2}{\mathit{Start}}}
\newcommand{\FrreduceTree}[2]{\syntax_{#2}(#1)}
\newcommand{\EFrreduceTree}[1]{\FrreduceTree{#1}{E}}
\newcommand{\BFrreduceTree}[1]{\FrreduceTree{#1}{B}}
\newcommand{\SFrreduceTree}[1]{\FrreduceTree{#1}{S}}
\newcommand{\NFrreduceTree}[1]{\FrreduceTree{#1}{N}}
\newcommand{\StartrreduceTree}[1]{\FrreduceTree{#1}{\mathit{Start}}}
\newcommand{\FrreduceFused}[3]{\syntax^{\mathsf{fused}}_{#3}(#1,#2)}
\newcommand{\BFrreduceFused}[2]{\FrreduceFused{#1}{#2}{B}}
\newcommand{\SFrreduceFused}[2]{\FrreduceFused{#1}{#2}{S}}
\newcommand{\NFrreduceFused}[2]{\FrreduceFused{#1}{#2}{N}}
\newcommand{\EFrreduceFused}[2]{\FrreduceFused{#1}{#2}{E}}
\newcommand{\NFnrreduceFused}[3]{\FrreduceFused{#1}{#2}{N_{#3}}}
\newcommand{\tNode}[1]{\mathsf{}{#1}}
\newcommand{\uTree}[2]{\mathsf{Tree}_{\tNode{#1}}({#2})}
\newcommand{\bTree}[3]{\mathsf{Tree}_{\tNode{#1}}({#2}, {#3})}
\newcommand{\NFnrreduceTree}[2]{\FrreduceTree{#2}{N_{#1}}}
\newcommand{\Flreduce}[3]{\semant_{#3}(\langle #1\rangle,\langle #2\rangle)}
\newcommand{\EFlreduce}[2]{\Flreduce{#1}{#2}{E}}
\newcommand{\BFlreduce}[2]{\Flreduce{#1}{#2}{B}}
\newcommand{\SFlreduce}[2]{\Flreduce{#1}{#2}{S}}
\newcommand{\NFlreduce}[2]{\Flreduce{#1}{#2}{N}}
\newcommand{\NFnlreduce}[3]{\Flreduce{#2}{#3}{N_{#1}}}
\newcommand{\Startlreduce}[2]{\Flreduce{#1}{#2}{\mathit{Start}}}
\newcommand{\FlreduceTree}[3]{\semant_{#3}(\langle #1\rangle,#2)}
\newcommand{\BFlreduceTree}[2]{\FlreduceTree{#1}{#2}{B}}
\newcommand{\SFlreduceTree}[2]{\FlreduceTree{#1}{#2}{S}}
\newcommand{\NFlreduceTree}[2]{\FlreduceTree{#1}{#2}{N}}
\newcommand{\RFlreduceTree}[2]{\FlreduceTree{#1}{#2}{R}}
\newcommand{\NFnlreduceTree}[3]{\FlreduceTree{#2}{#3}{N_{#1}}}
\newcommand{\StartlreduceTree}[2]{\FlreduceTree{#1}{#2}{\mathit{Start}}}
\newcommand{\rk}{\textit{rk}}
\newcommand{\Omit}[1]{}
\newcommand{\SemStart}{\mathsf{sem}_{\mathit{Start}}}
\def\q#1{\textbf{Q{#1}}}
\newcommand{\mypar}[1]{\vspace{1mm} \emph{{#1.} }}
\newcommand{\defeq}{\triangleq}
\newenvironment{mybox}[1][gray!20]{
	\begin{tcolorbox}[   
		breakable,
		left=0pt,
		right=0pt,
		top=0pt,
		bottom=-1pt,
		colback=#1,
		colframe=#1,
		width=\dimexpr\textwidth\relax,
		boxsep=4pt,
		arc=0pt,outer arc=0pt,
		]
	}{
\end{tcolorbox}
}
\newcounter{resq}
\title[Semantics-Guided Synthesis]{Semantics-Guided Synthesis}
 \author{Jinwoo Kim}
  \affiliation{
    \institution{Unversity of Wisconsin-Madison} \country{USA}} \email{pl@cs.wisc.edu}
  \author{Qinheping Hu}
  \affiliation{
    \institution{Unversity of Wisconsin-Madison} \country{USA}} \email{qhu28@wisc.edu}
  \author{Loris D'Antoni}
  \affiliation{
    \institution{Unversity of Wisconsin-Madison} \country{USA}} \email{loris@cs.wisc.edu}
  \author{Thomas Reps}
  \affiliation{
    \institution{Unversity of Wisconsin-Madison} \country{USA}} \email{reps@cs.wisc.edu}
\keywords{Program Synthesis, Semantics-Guided Synthesis (\semgus), Unrealizability}
\begin{document}

\begin{abstract}
  This paper develops a new framework for program synthesis, called \emph{semantics-guided synthesis} 
  (\semgus), that allows a user to provide both the syntax and the semantics for the constructs in the language.
  \semgus accepts a recursively defined big-step semantics, which allows it, for example, to be used to 
  specify and solve synthesis problems over an imperative programming language that may contain 
  loops with unbounded behavior.
  The customizable nature of \semgus also allows synthesis problems to be
  defined over a non-standard semantics, such as an abstract semantics.
  In addition to the \semgus framework, we develop an algorithm for solving \semgus problems 
  that is capable of both synthesizing programs and proving unrealizability, by encoding a \semgus 
  problem as a proof search over Constrained Horn Clauses:
  in particular, our approach is the first that we are aware of that can prove unrealizabilty for 
  synthesis problems that involve
  imperative programs with unbounded loops, over an infinite syntactic search space.
  We implemented the technique in a tool called \ourtool, and applied it
  to \sygus problems (i.e., over expressions), synthesis problems over
  an imperative programming language, and synthesis problems over regular expressions.
\end{abstract}


  \maketitle

\section{Introduction}
\label{Se:Introduction}

Program synthesis refers to the task of finding a program 
within a given search space that meets a given behavioral specification
(typically a logical formula or
a set of input-output examples).
Program synthesis has been studied from a variety of perspectives, which have led to
great practical advances in specific domains~\cite{swizzle, flashfill, lambda2}.

The proliferation of domain-specific synthesis tools has led to numerous attempts to build  
frameworks that allow one to define and solve synthesis problems in a general fashion.
Tools such as Sketch~\cite{sketch} and
Rosette~\cite{rosette}
have introduced the notion of a solver-aided language, which 
allows one to define a synthesis problem using a specialized language and
then solve the specified problem using a constraint solver. To retain the ability
to solve practical problems, these tools have restricted their languages 
in ways that enable the use of constraint-based synthesis methods---e.g., Sketch and Rosette do not allow arbitrary search spaces involving
programs of unbounded size.

While solver-aided languages made synthesis more ``programmable'', their mutual incompatibility
and language restrictions led to a natural questions:
\emph{Can we define synthesis problems in a language-agnostic way?}
This question was partly answered by the framework of
of \emph{syntax-guided synthesis} (\sygus)~\cite{sygus}, 
which provides a logical framework for defining synthesis problems.
In a \sygus problem, the search space is described using a context-free grammar
of terms from a given theory, and the behavioral specification is expressed using a
formula in that same logical theory.
The unified logical format offered by \sygus spurred researchers to design
synthesizers that could solve problems defined in the \sygus format~\cite{alur2017esolver,cvc4}, and
these solvers compete annually in \sygus competitions~\cite{sygus-comp}.
However, \sygus introduced its own limitation: namely, 
that the semantics of \sygus problems are limited to those from a 
fixed theory, such as linear integer 
arithmetic (LIA) or bitvectors.
This limitation has created a gap between the two approaches: solver-aided languages
are unable to express \sygus problems with infinite search spaces, 
while \sygus cannot express problems with semantics outside of a supported 
theory, such as imperative programs containing loops (which could be modeled 
using tools like Sketch and Rosette).

\mypar{The \semgus Framework} In this paper, we bridge this gap and present a new synthesis framework, called 
\emph{semantics-guided synthesis} (\semgus), that attempts to encompass and generalize 
the two approaches.
Like \sygus, the goal of \semgus is to provide a \emph{general,
logical framework} that expresses the \emph{core computational
problem} of program synthesis~\cite{sygus}, without being tied to a
specific solution or implementation.
However, in addition to a syntactic search space and a behavioral
specification, \semgus also allows the user to define the
\emph{semantics} of constructs in the grammar in terms of a set of
inference rules---hence the name ``semantics-guided synthesis''.

\cbstart
\jinwoo{Shepherd: Explanation of what we mean by a synthesis framework.}
By a \emph{framework}, we mean the conceptual underpinnings
that allows one to build a tool to automate the creation of
solutions for problems in some domain.
The canonical example is the theory of parsing, which provides
the underpinnings of the tool \texttt{yacc} \cite{BellLabs-TR32:J75},
which automates the construction of parsers.

For example, consider the problem that \texttt{yacc} addresses.

\begin{mdframed}[innerleftmargin = 3pt, innerrightmargin = 3pt, skipbelow=-0.25em]
\begin{itemize}[labelsep=*, leftmargin=*]
  \item
    An instance of a parsing problem, Parse($L$,$s$), has two parameters:
    $L$, a context-free language; and $s$, a string to be parsed.
    String $s$ changes more frequently than language $L$.
  \item
    Context-free grammars are a formalism for specifying
    context-free languages.
  \item
    Create a tool that implements the following specification:
    \begin{itemize}
      \item
        Input: a context-free grammar that describes language $L$.
      \item
        Output: a parser, \texttt{yyparse()}, such that invoking
        \texttt{yyparse()} on $s$ computes Parse($L$,$s$).
    \end{itemize}
\end{itemize}
\end{mdframed}

\noindent
Thus, a \emph{framework for synthesis} should follow a similar scheme.

\begin{mdframed}[innerleftmargin = 3pt, innerrightmargin = 3pt, skipbelow=-0.25em]
\begin{itemize}[labelsep=*, leftmargin=*]
  \item
    An instance of a synthesis problem $\textrm{Synthesize}(\mathcal{L}, \sem{\cdot}_{\mathcal{L}}, \varphi)$
    has three parameters:
    $\mathcal{L}$, a formal language;
    $\sem{\cdot}_{\mathcal{L}}$, a semantics to ascribe to $\mathcal{L}$; and
    $\varphi$, a behavioral specification for some desired member of $\mathcal{L}$.
    The behavioral specification $\varphi$ changes more frequently
    than $\mathcal{L}$ and $\sem{\cdot}_{\mathcal{L}}$.
  \item
    Let $F_{\textrm{syntax}}$ and $F_{\textrm{semantics}}$ be appropriate formalisms
    for specifying $\mathcal{L}$ and $\sem{\cdot}_{\mathcal{L}}$, respectively.
  \item
    Create a tool that implements the following specification:
    \begin{itemize}
      \item
        Input:
        an $F_{\textrm{syntax}}$ specification of a language's syntax, and
        an $F_{\textrm{semantics}}$ specification of the language's semantics.
      \item
        Output: a function $\textrm{Synth}_{\mathcal{L},\sem{\cdot}_{\mathcal{L}}}(\cdot)$
        such that $\textrm{Synth}_{\mathcal{L},\sem{\cdot}_{\mathcal{L}}}(\varphi)$ computes
        $\textrm{Synthesize}(\mathcal{L}, \sem{\cdot}_{\mathcal{L}}, \varphi)$
    \end{itemize}
\end{itemize}
\end{mdframed}

\noindent
As in \sygus, the formalism $F_{\textrm{syntax}}$ used in \semgus is
regular-tree grammars.
In \semgus, $F_{\textrm{semantics}}$ is Constrained Horn Clauses,
which are a class of logical formulas that are expressive enough to
define a recursive big-step semantics.
(In contrast, \sygus has no explicit formalism $F_{\textrm{semantics}}$;
instead, it relies on a shallow embedding into some decidable theory.)
\cbend

\Omit{\semgus formalizes the concept of semantics through 
Constrained Horn Clauses, which are a 
class of logical formulas that are expressive enough 
to define a recursive big-step semantics.}
The flexibility of Constrained Horn Clauses allows \semgus
to address synthesis problems for imperative programming languages.
In \S\ref{Se:MotivatingExample}, we show how the semantics
of assignments and while loops can be defined in \semgus.
The customizable aspect of the semantics also provides 
a natural way to define synthesis problems over an
alternative semantics (see \S\ref{Se:semgus}).
In essence, \semgus extends the ``logical framework'' 
of \sygus towards semantics, resulting in a framework 
that is capable of defining \sygus problems, as well as problems that currently require a solver-aided language.

\mypar{Solving \semgus Problems} Following the definition of the \semgus framework, this paper 
develops a method for solving general \semgus problems capable of producing 
\emph{two-sided answers} to a problem: either \emph{synthesizing a solution}, 
or proving that the problem is \emph{unrealizable}, i.e., has no solution.
Proving the unrealizability of synthesis problems has applications 
in synthesizing programs that are optimized with respect to some 
metric \cite{qsygus}, and can 
be employed in tandem with general synthesis algorithms as well.
However, existing program synthesizers are generally unable to 
prove unrealizability, and focus only on synthesizing terms.

Although \semgus can be used for much more than imperative program synthesis, 
solving \semgus problems over an imperative programming language 
illustrates many of the challenges in computing solutions to 
general \semgus problems:



\begin{description}
  \item[Reasoning while lacking a direct background theory.]
    Unlike \sygus, in which problems are defined over
    decidable theories, such as linear integer arithmetic,
    \semgus over an imperative programming language must 
    deal with factors such as state, and there is typically no
    decidable theory of the programming language involved.
  \item[Loops.]
    Loops provide a double challenge in the context of program
    synthesis:
    each loop could have (i) an infinite number of syntactic elaborations (of the condition and the loop-body),
    each of which may execute for (ii) an arbitrary number of iterations.
    Thus, a synthesis algorithm must reason about
    \emph{sets} of loop-body elaborations instead of \emph{individual}
    ones---otherwise, the search space becomes intractable.
    Existing constraint-based methods often deal with loops by setting an unrolling bound, which is a factor that 
    limits the kinds of synthesis problems they can define or solve: \semgus explicitly avoids this approach.
\end{description}


In \S\ref{Se:MotivatingExample} and \S\ref{Se:encoding}, we show that 
an entire \semgus problem---syntax, semantics, 
and behavioral specification---can be encoded using CHCs, effectively 
reducing program synthesis into a proof-search problem 
that can be solved with off-the-shelf CHC solvers, such as Z3~\cite{z3}.\footnote{
  In general, the problem of finding a solution to a set of CHCs is undecidable.
}
If a proof for the specification exists within the 
CHC-encoded syntax and semantic rules, the \semgus problem is 
realizable, and the proof identifies a specific term 
satisfying the specification.
If, on the other hand, the solver can prove that the specification is
unsatisfiable using the given rules, then the problem is unrealizable.
\semgus is semantics-guided not only in the sense that it accepts a semantics, 
but in this proof-search step as well: among the lemmas established during 
the proof search (by an external solver), some may involve the semantics 
supplied to \semgus by the user.

\mypar{Contributions} 
This paper makes the following contributions:
\begin{itemize}
  \item
    The \semgus framework, which allows the user to supply inference rules that
    specify the syntax and semantics of the target language.
    In particular, the \semgus framework can be used to specify synthesis problems over
    an imperative programming language (\S\ref{Se:semgus}).
  \item
    A constraint-based approach for solving \semgus problems using CHCs 
    (\S\ref{Se:MotivatingExample} and \S\ref{Se:encoding}), capable of producing 
    both a synthesized program for realizable problems, and a proof of unrealizability 
    for unrealizable ones.
  \item 
    Multiple instantiations of the framework---with different kinds of semantics---that express 
    variant \semgus problems whose solutions can sometimes be obtained more efficiently (\S\ref{Se:Semantics}).
  \item
    An implementation of a \semgus solver using Z3~\cite{z3,spacer},
    called \ourtool.
    We instantiate \ourtool to come with a variety of semantics out-of-the-box, 
    allowing users to easily define and solve \semgus problems.
    Moreover, \ourtool
    is the first tool capable of both (i) solving synthesis problems,
    and (ii) proving unrealizability for imperative-language
    problems involving a search space with an infinite number
    of programs.
\end{itemize}
\S\ref{Se:problem-def} provides background material.
\S\ref{Se:related} discusses related work.
\S\ref{Se:conclusion} concludes.
Proofs for the various theorems throughout the paper can be found in Appendix~\ref{Appendix:Proofs}.



\section{Motivating Example}
\label{Se:MotivatingExample}


Consider the problem of synthesizing an imperative
 program that stores
the bitwise-xor of two variables $x$ and $y$ in the variable $x$, 
using only 
bitwise-and and bitwise-or operations and no auxiliary variables.
We show how one can define this problem in \semgus and prove it
unrealizable.

\subsection{Defining a \semgus Problem}
\label{Se:semgus-def}

The first contribution of this paper is the \semgus framework (\S\ref{Se:semgus}).
A \semgus problem
is defined using three components: (i) a search space 
given by a regular tree grammar $G$, (ii) a semantics for the grammar 
$G$, and (iii) a specification of the desired behavior of the program.

\mypar{Supplying \semgus with a grammar}
In this example,
the grammar $G_{ex}$ in Figure~\ref{fig:gex} describes a language of
single-loop programs
that can contain an arbitrary number of assignments to
$x$ and $y$, but involve only bitwise-and and bitwise-or operations.
In the figure, the numbers in the black circles are used as 
unique identifiers for each production.
Note that \sygus cannot describe the language $L(G_{ex})$  due
to the presence of assignments and loops.

\begin{figure}[t!]
  $$
\begin{array}{rlll}
  \mathit{Start} & ::= &  \Ewhile{B}{S}  \ball{1}\\
  					  B & ::= &  E < E  \ball{2}\\
  					  S & ::= & \Eseq{S}{S}  \ball{3} \mid   \Eassign{x}{E}  \ball{4} \mid   \Eassign{y}{E}   \ball{5}\\ 
  					  E & ::= & x  \ball{6} \mid  y  \ball{7} \mid E~\&~E  \ball{8} \mid E~|~E  \ball{9}
\end{array}
  $$
\caption{Example grammar $G_{ex}$.}\label{fig:gex}
\end{figure}

\mypar{Supplying \semgus with a semantics}
The next component of a \semgus problem is a semantics for terms in the language $L(G_{ex})$.
There are many possible ways to define the formal semantics of an
imperative language.
For example, if we let
 $\Gamma$, $\Gamma_1$, and $\Gamma_2$ denote valuations of the variables $x$ and $y$,
Equation~\eqref{eq:while-true} defines a semantics that a user might give 
for the term ``$\Ewhile{b}{s}$''.
\begin{equation}
\label{eq:while-true}
  \infer[sem^{\mathit{WTrue}}]
{\sem{\Ewhile{b}{s}}(\Gamma,\Gamma_2)}
{
  \sem{b}(\Gamma, \Etrue) 
  \quad
  \sem{s}(\Gamma,\Gamma_1)
  \quad 
  \sem{\Ewhile{b}{s}}(\Gamma_1,\Gamma_2)
}
\end{equation}

Equation~\eqref{eq:while-true} is a common way to define program semantics, but it contains 
 some ambiguities, such as the method of defining the semantic function $\sem{\cdot}$.
\semgus takes an extra level of formalization and requires that semantic rules such as 
$sem^{\mathit{WTrue}}$ are expressed using \emph{logical relations} and \emph{Constrained Horn Clauses} (CHCs), 
which are implications 
that are defined over logical relations and a single logical constraint.
As an example, we show how $sem^{\mathit{WTrue}}$ can be expressed as a CHC in \semgus.

In \semgus, semantics can be specified in a compositional fashion by associating each \emph{production} 
in the grammar with one or more semantic rules,\footnote{
  The ability to define multiple semantic rules for a production is useful when defining semantics for productions
  such as $\mathit{Start} \rightarrow \Ewhile{b}{s}$, which is commonly equipped with two rules that describe looping and loop termination.
}
with the additional constraint that each rule must be expressible as a CHC.
To express the semantics of terms, which are derived from each nonterminal in a
production, we assume that each nonterminal $N$ has a corresponding logical relation 
$\mathsf{sem}_N$, which represents the behavior of the semantic function 
$\sem{\cdot}$ in Equation~\eqref{eq:while-true}.
We refer to these relations as \emph{semantic relations}.
For example, the expression $\BFlreduceTree{\Gamma, b}{v_b}$ corresponds to the 
premise $\sem{b}(\Gamma, v_b)$: the semantic relation 
$\mathsf{sem}_B$ tells us that executing the term $b \in L(B)$ on incoming state $\Gamma$
results in a value $v_b$.

\begin{equation}
  \label{eq:semantics_ex}
\infer[\mathit{sem}^{\rWT}_{\mathit{Start} \rightarrow  \Ewhile{B}{S}\sball{1}}]
  {\StartlreduceTree{\Gamma, \Ewhile{b}{s}}{\Gamma_{2}}}
{
\begin{array}{l}
  \BFlreduceTree{\Gamma, b}{\Etrue} \ \ \ 
  \SFlreduceTree{\Gamma, s}{\Gamma_1} \ \ \ 
  \StartlreduceTree{\Gamma_1, \Ewhile{b}{s}}{\Gamma_{2}}
\end{array}
}
\end{equation}

Equation~\eqref{eq:semantics_ex} uses semantic relations to express
the same semantics as Equation~\eqref{eq:while-true}, and fits our
criterion of using CHCs as semantic rules:
the relations $\mathsf{sem}_B, \mathsf{sem}_S, $ and $\mathsf{sem}_{Start}$ represent the 
semantics of terms, while the whole of Equation~\eqref{eq:semantics_ex} can be read as a CHC.

\semgus assumes  the supplied semantics are of the form in Equation~\eqref{eq:semantics_ex}:
that is, the semantic relations model the semantics of terms,
and each semantic inference rule is a CHC.\footnote{
  For clarity, we sometimes use the format from Equation~\eqref{eq:while-true} when introducing 
   semantics for \semgus to work with (\S\ref{Se:Semantics}).
}
This assumption is not restrictive, nor does it impose a complex format---CHCs are expressive enough to 
model a recursively defined big-step semantics.
Rather than restricting the kinds of problems \semgus supports, these restrictions mainly 
exist to formalize the meaning of the word ``semantics''.

\mypar{Supplying \semgus with a behavioral specification}
The behavioral specification of \semgus states what property the target program should 
satisfy.
One can provide a logical formula that relates the input and output
valuations of the program variables, or alternatively, provide
a set of examples, which is the kind of specification our algorithm for 
solving \semgus problems relies upon.\footnote{
  Given a logical specification, one can always generate a set of examples 
  and add more examples as needed through a technique known as 
  \emph{counterexample-guided inductive synthesis} (CEGIS), which is 
  applied in many synthesizers.
}

For our example, suppose that the specification is given as the 
set of input valuations $[(6, 9)$, $(44, 247)$, $(14, 15)]$
(each representing the values of $x$ and $y$, respectively), which produce the output values 
$[15, 219, 1]$ (each representing the final value of $x$).
Call this example set $\examples_{ex}$.
In \S\ref{Se:semgus-solve}, we show how our  algorithm (implemented in \ourtool) can 
synthesize a valid solution on a \emph{subset} of $\examples_{ex}$, namely,
$[(6, 9)]$ with output $[15]$, where bitwise-xor is equivalent to bitwise-or, i.e., this sub-problem is realizable.
We then describe how our algorithm proves that 
\emph{no} program in the language of $L(G_{ex})$
 can compute the bitwise-xor
for all the examples in $\examples_{ex}$, 
i.e., that the problem is unrealizable.



\subsection{Solving \semgus Problems}
\label{Se:semgus-solve}

The second contribution of this paper is a procedure for solving
\semgus problems (\S\ref{Se:encoding}).
To solve a \semgus problem, this paper utilizes two key ideas:
(i) both the syntax and the semantics of a synthesis problem can be described 
using Constrained Horn Clauses, and, 
(ii) one can phrase the synthesis problem as a proof search over CHCs.

\mypar{Syntax and Semantic Rules} 
Describing a grammar using CHCs is a straightforward process: 
taking the production 
$\mathit{Start} \rightarrow \Ewhile{B}{S} \ball{1}$ as an example, the production 
states that one can obtain a valid term for the nonterminal $\mathit{Start}$ using 
valid terms for nonterminals $B$ and $S$.
Equation~\eqref{eq:syntax_ex} encodes this property as a CHC.
\begin{equation}
\label{eq:syntax_ex}
  \infer[\textit{syntax}_{\mathit{Start} \rightarrow \Ewhile{B}{S}\sball{1}}]
  {\StartrreduceTree{\Ewhile{b}{s}}}
  {
    \begin{array}{l}
    \BFrreduceTree{b} \quad
    \SFrreduceTree{s}
    \end{array}
  }
\end{equation}
The logical relations $\mathsf{syn}_B$, 
$\mathsf{syn}_S$, and $\mathsf{syn}_{Start}$ in Equation~\eqref{eq:syntax_ex} 
model whether the supplied arguments are 
valid terms that may be derived from the corresponding nonterminals $\mathit{B}$, $\mathit{S}$, and $\mathit{Start}$.
We refer to relations such as $\mathsf{syn}_{S}$ as \emph{syntax relations}, and 
rules such as Equation~\eqref{eq:syntax_ex} as \emph{syntax rules}.

\S\ref{Se:semgus-def} illustrated how the programming-language semantics can be expressed using CHCs;
in tandem with the syntax rules, they represent the semantics of all possible programs 
in the language.

\mypar{Specification Query}
The final step to solving a \semgus problem is to create a query 
that encodes the behavioral specification, asking whether any of the programs 
generated by the grammar is consistent with the specification 
on the set of input examples $\examples$.
This question is posed via the $\mathsf{Query}$ rule below, which checks for the existence
of a term $t$ that satisfies the syntax rules and the semantic rules, each instantiated with 
input $e_i \in \examples$
and corresponding output value $o_i$.\footnote{
  In practice, we encode terms into an alternative representation
  because SMT solvers have difficulties when terms are expressed directly.
  This encoding is presented in $\S\ref{Se:encoding}$.
}

\begin{equation}
  \label{query-motive}
\infer[\mathsf{Query}]{\mathit{Realizable}}
{
  \StartrreduceTree{t} \quad
  \bigwedge_{e_i \in \examples} \StartlreduceTree{e_i, t}{o_i}
}
\end{equation}

Generally, one could choose to use symbolic variables for $o_i$ instead of concrete output examples, 
by adding an additional premise 
$\bigwedge_{e_i \in \examples} \psi(e_i, o_i)$ to ensure that the input-output pair 
$e_i, o_i$ meets the specification $\psi$.
In this section, we consider concrete output examples for ease of presentation.

Expressing the entire \semgus problem as a set of inference rules and a query
effectively reduces solving the \semgus problem to a proof search to establish that 
$Realizable$ holds using the 
given inference rules.
If one can prove that the premises of Equation~\eqref{query-motive} hold,
then the \semgus problem is realizable, and the term $t$ is a concrete answer to the problem.
If there exists no proof for $Realizable$ using the inference rules, 
then the \semgus problem is unrealizable.

\mypar{Synthesizing Programs}
\cbstart
\jinwoo{Shepherd Item 1: Add a fully worked out example}
To see how a valid program is synthesized based on our construction, consider the problem of
synthesizing a program that computes the bitwise-xor of $x$ and $y$, specified using the 
singleton example set $[(x, y)] = [(6, 9)]$ and output $[15]$.
In this case, the CHC solver is responsible for finding a term $t$ 
that satisfies the conjunction of the relations (and, as stated above, also corresponds to proving): 
\[
  \begin{array}{l@{\hspace{5ex}}l}
    \StartrreduceTree{t} & \Startlreduce{(6,9), t}{(15, y')} \\
	\end{array}
\]

\begin{figure}
{ \tiny
$$
  \infer[]{\StartrreduceTree{t}}{
    \infer[]{\BFrreduceTree{x < y}}{
      \infer[]{\EFrreduceTree{x}}{} \:
      \infer[]{\EFrreduceTree{y}}{}
    } \:
    \infer[]{\SFrreduceTree{\Eassign{x}{x~|~y}}}{
      \infer[]{\EFrreduceTree{x~|~y}}{
        \infer[]{\EFrreduceTree{x}}{}\:
        \infer[]{\EFrreduceTree{y}}{}
      }
    }
  } \:
  \infer{\Startlreduce{(6, 9), t}{(15, y')}}{
    \infer[]{\BFlreduce{(6, 9), x < y}{\Etrue}}{(6 < 9) = \Etrue} \:
    \infer[]{\SFlreduce{(6, 9), \Eassign{x}{x~|~y}}{(15, 9)}}{
      \infer[]{\EFlreduce{(6, 9), x~|~y}{15}}{
        6~|~9 = 15
      }
    } \:
    \infer[]{\Startlreduce{(15, 9), t}{(15, 9)}}{
      \infer[]{\BFlreduce{(15, 9), x < y}{\Efalse}}{(15 < 9) = \Efalse}
    }
  } 
  \vspace{-1.7mm}
  $$
  \vspace{-2.2mm}
  $\hspace{-37mm} \vcenter{\rule{0.65\textwidth}{0.4pt}} \hspace{-23mm} \mathsf{Query}$
  $$
  \hspace{-25mm} \mathit{Realizable}
$$
  \vspace{-6mm}
}
  \caption{The full proof tree for synthesizing the bitwise-xor of $x$ and $y$ from
  the input example $[(x, y)] = [(6, 9)]$ with output $[15]$, using the grammar $G_{ex}$.
  The term $t = (\Ewhile{x < y}{\Eassign{x}{x~|~y}})$ satisfies the one example provided: 
  the loop iterates once to set $x$ to $15$.}
  \label{fig:full_ex}
\end{figure}

\noindent
The $\SemStart$ literal states that the final value of $x$ should be $15$, and, via free variable $y'$, that we do not care about the final
value of $y$.
(The reason why the final value of $y$ is also present in the relation
is because nonterminal $\mathit{Start}$ represents a statement,
and thus the relation $\SemStart$ must track changes to both
$x$ and $y$.)
For this input/output pair, bitwise-xor is indistinguishable from bitwise-or ($~|~$), making
the problem realizable:
the term $t = (\Ewhile{x < y}{\Eassign{x}{x~|~y}})$ is a solution.
Figure~\ref{fig:full_ex} shows how such a solution corresponds to a proof in our system of 
CHCs, where the syntax premise $\StartrreduceTree{t}$ ensures that $t$ is indeed a valid term, 
and the semantic premise $\Startlreduce{(6,9), t}{(15, y')}$ ensures that the semantics of $t$ 
matches the specification.
Our tool \ourtool (which is based on Z3~\cite{z3} and its CHC solver Spacer~\cite{spacer}) 
succeeds in deriving the proof tree in Figure~\ref{fig:full_ex}, from which the term $t$ is then extracted.

\cbend

\mypar{Proving Unrealizability}
To see how a \semgus problem is proved unrealizable, recall our full example set $\examples_{ex}$, for which
the solver must find some term $t$ that satisfies the relations:
\[
  \begin{array}{l@{\hspace{5ex}}l}
    \StartrreduceTree{t} & \StartlreduceTree{(6,9), t}{(15, y'_1)} \\
    \StartlreduceTree{(44,247), t}{(219, y'_2)} & \StartlreduceTree{(14,15), t}{(1, y'_3)}
  \end{array}
\]
Put another way, the solver must establish that there
exists \emph{no} term $t$ that satisfies all four relations at once---i.e.,
that $Realizable$ is $\emph{unsatisfiable}$---to prove the problem unrealizable.

Note that our algorithm does not provide
additional machinery to reason about loops.
Instead,  we rely on the
CHC solver to discover lemmas about \emph{sets of loops}---as
opposed to single loops---to prune the search space.
\cbstart
\jinwoo{Extension of concrete example above: show concrete lemma}
When attempting to find a proof tree for $\mathit{Realizable}$ consistent with the examples in $\examples_{ex}$, 
the CHC solver Spacer eventually proves the following invariant for the third example, namely $(14, 15) \rightarrow 1$:
``For all values of $x$ that are reachable from the nonterminal $\mathit{Start}$, $x~\&~4 = 4$ always holds''.
The lemma $x~\&~4 = 4$ implies that in the theory of fixed-length bitvectors, the 
\emph{third bit of} $x$ must \emph{always} be $\Etrue$ when the loop terminates.
This condition conflicts with the output $1$ (in which the third bit is $\Efalse$),
which shows that the third example can never be satisfied---which, in turn, implies that the
synthesis problem is unrealizable!
Note that this lemma is an invariant of the \emph{nonterminal}
$\mathit{Start}$---i.e., an invariant of \emph{all} programs derivable from
$\mathit{Start}$---not just some specific program derivable from
$\mathit{Start}$.
\cbend

One might be tempted to give an operational reading of the $\mathsf{Query}$ rule as
following the paradigm of \emph{generate and test}:
$\StartrreduceTree{t}$ generates $t$, which then must pass the tests
$\StartlreduceTree{I_1, t}{ O_1}\, \ldots\, \StartlreduceTree{ I_n, t}{ O_n}$.
However, the ability of Spacer to prove lemmas of the sort discussed above means that
\ourtool is not merely enumerating and testing individual programs.
On the contrary, the technique for solving \semgus problems
infers lemmas about the behavior of \emph{multiple} programs in the language of the grammar,
and uses them to prune the search space!


\subsection{Instantiating \semgus with Other Semantics}
\label{Se:semgus-other}

The procedure described in the previous section gives a general way to solve \semgus problems, 
but also suffers from several limitations.
For example, one might have to prove a large number of $\mathsf{sem}$ relations from the premise 
of the $\mathsf{Query}$ rule if there are a large number of input-output examples; or, because 
solving CHCs is still difficult in general, the problem may simply be too difficult to solve.
As a third contribution, we show how, thanks to its generality,
\semgus can be supplied with alternative semantics to 
address some of these challenges (\S\ref{Se:Semantics}). 
As an example, here we show how to supply \semgus with an
\emph{abstract semantics} to prove unrealizability more efficiently.

Consider again the problem of proving that synthesizing a 
bitwise-xor program from the grammar $G_{ex}$ is unrealizable.
As described in \S\ref{Se:semgus-solve}, the lemma used to prove this fact states that 
the third bit of $x$ under the example $(14, 15) \rightarrow 1$ is always set to $\Etrue$, 
conflicting with the output $1$.
While we proved this problem unrealizable using a precise semantics, 
it is also possible to prove unrealizability using an abstract domain.
For example, 
consider the abstract domain $\mathbb{B}_3$, which only tracks the value of the third bit of every variable,
using the values $\Etrue, \Efalse, $ and $\top$ (top), 
where $\top$ represents the scenario in which the third bit may be either 
$\Etrue$ or $\Efalse$: i.e., the semantics may be imprecise.
Then, one could supply an abstract semantics for a term $e_1~\&~e_2$ (the bitwise-and of $e_1$ and $e_2$), created from 
the production $E \rightarrow E~\&~E$, as:

\begin{equation}
\label{eq:abstract-and}
  \infer[\mathsf{And^{\#}}]
  {\sem{e_1~\&~e_2}^\# (\Gamma^\#, v^\#)}
{
  \sem{e_1}^\# (\Gamma^\#, v_1^\#)
  \quad
  \sem{e_2}^\# (\Gamma^\#, v_2^\#)
  \quad
  v^\# = (\Eifthenelse{(v_1^\# = \top \vee v_2^\# = \top)}{\top}{v_1^\#~\&~v_2^\#})
}
\end{equation}

The final premise in Equation~\eqref{eq:abstract-and} represents the abstract 
transformer of bitwise-and $\mathbb{B}_3$, which 
sends the computation to $\top$ if any of $v_1^\#$ or $v_2^\#$, 
the abstract values for $v_1$ and $v_2$, are $\top$, or 
computes the exact value otherwise.
$\top$ can be generated in $\mathbb{B}_3$ by operators such as $+$, 
which always loses precision because it does not track carry bit values
from the second position.


From a \semgus point of view, an abstract semantics  is merely a different semantics, 
which allows \semgus problems with abstract semantics such as $\mathbb{B}_3$ to be solved using the same 
algorithm described in \S\ref{Se:semgus-solve}.
Although $\mathbb{B}_3$ is more lightweight compared to the 
precise semantics discussed in \S\ref{Se:semgus-def}, it is sufficient to 
prove the unrealizability of synthesizing bitwise-xor from $G_{ex}$---therefore 
resulting in a more efficient solving procedure.

In \S\ref{Se:Semantics}, we show how other semantics, such as an
underapproximating one, can  be supplied to the \semgus framework, 
each with their advantages.
These semantics illustrate one of the  benefits of allowing a 
user to supply their own semantics in \semgus---in addition to a wider range 
of definable problems, one can also describe specific strategies to 
optimize the synthesis problem at hand!



\section{Preliminaries}
\label{Se:problem-def}

In this section, we provide some background information on concepts that 
we build upon for the rest of the paper.
\S\ref{Se:horn-clause} provides background on Horn Clauses, 
which are used in \S\ref{Se:encoding} to define our procedure for solving \semgus problems.
\S\ref{Se:trees-and-semantics} is about trees, regular tree grammars, 
and program semantics, which are required for our definition of the 
\semgus problem in \S\ref{Se:semgus}.

\subsection{Constrained Horn Clauses}
\label{Se:horn-clause}

\emph{Constrained Horn Clauses} (CHCs) are a class of logical rules that we use to 
formalize the concept of semantics, as well as use in our algorithm for solving 
\semgus problems.

\begin{definition}[Constrained Horn Clauses.]
  \label{Def:horn-clause}
  A \emph{Constrained Horn Clause} is a first-order formula of the form 
  $\forall \vec{x}, \vec{x_1},  \ldots, \vec{x_n}. 
  (\phi \wedge R_1(\vec{x_1}) \wedge \cdots \wedge R_n(\vec{x_n}) \implies H(\vec{x}))$, 
  where $\phi$ is a constraint over some background theory that may contain variables from 
  $\vec{x}, \vec{x_1}, \ldots, \vec{x_n}$, and 
  $R_1, \ldots, R_n$ and $H$ are uninterpreted relations.
\end{definition}

\cbstart
\jinwoo{Review B: First-order terms in CHC arguments}
CHCs often allow first-order terms directly in their arguments, which can be viewed as a form 
of syntactic sugar with respect to Definition~\ref{Def:horn-clause}: for a first-order formula $f(x)$,
one can rewrite $R(f(x))$ to $R(y)$ and add the constraint $f(x) = y$ to $\phi$.
For the remainder of the paper, we assume that CHCs allow first-order terms as their arguments.
\cbend

\begin{example}
  \label{ex:horn-clause}
  Equations~\eqref{eq:syntax-horn-example} and~\eqref{eq:semantic-horn-example} give an example of how the syntax and semantic rules from 
  \S\ref{Se:MotivatingExample} can be interpreted as CHCs.
  \begin{equation}
    \label{eq:syntax-horn-example}
    \forall b, s. \:
    \BFrreduceTree{b} \wedge
    \SFrreduceTree{s} \implies 
    \StartrreduceTree{\Ewhile{b}{s}}
    \vspace{2mm}
  \end{equation}
  \begin{equation}
    \label{eq:semantic-horn-example}
    \begin{split}
      \forall \Gamma, \Gamma_1, \Gamma_2, b, s. \: & 
      (v_b=true) \wedge
      \BFlreduceTree{\Gamma, b}{v_b} \\
      & \wedge \: \SFlreduceTree{\Gamma, s}{\Gamma_1} \wedge 
      \StartlreduceTree{\Gamma_1, \Ewhile{b}{s}}{\Gamma_2} \\
       & \implies \StartlreduceTree{\Gamma, \Ewhile{b}{s}}{ \Gamma_2}
    \end{split}
  \end{equation}
  Syntax and semantic relations 
  such as $\mathsf{syn}_B$ or $\mathsf{sem}_{Start}$ are 
  expressed as uninterpreted relations, while atomic semantic operations such as addition 
  are represented using the constraint $\phi$.
\end{example}

CHCs occur frequently in program verification, and many efficient algorithms 
for solving CHCs have been developed~\cite{vampire, chcinterpolation, spacer}.
In terms of CHC solving, the proof search described in \S\ref{Se:MotivatingExample} 
is augmented with an extra CHC $\mathit{Realizable} \implies \Efalse$, which asserts $\lnot \mathit{Realizable}$.
If there exists an interpretation of the syntax and semantic relations that can derive $\mathit{Realizable}$, 
then the system of CHCs is not valid---and, as shown in \S\ref{Se:MotivatingExample}, 
one can extract a program from the proof tree for $\mathit{Realizable}$.
If $\mathit{Realizable}$ cannot be derived from any interpretation of the syntax and semantic relations, 
the system of CHCs is valid, which corresponds to unrealizability.
In this paper, we use the (un)satisfiability of $\mathit{Realizable}$ itself, instead of the validity of the augmented CHC problem, 
to illustrate our algorithm.

\subsection{Trees, Tree Grammars, and Semantics}
\label{Se:trees-and-semantics}

A \emph{ranked alphabet} is a tuple $(\Sigma,\rk_\Sigma)$, where
$\Sigma$ is a finite set of symbols, and $\rk_\Sigma:\Sigma\to\nat$
associates a rank to each symbol.
For every $m\ge 0$, the set of all symbols in $\Sigma$ with rank $m$
is denoted by $\Sigma^{(m)}$.
In our examples, a ranked alphabet is specified by showing the set
$\Sigma$ and attaching the respective rank to every symbol as a
superscript---e.g., $\Sigma=\{+^{(2)},c^{(0)}\}$.
(For brevity, the superscript is often omitted.)
We use $T_\Sigma$ to denote the set of all (ranked) trees over
$\Sigma$---i.e., $T_\Sigma$ is the smallest set such that
\rone $\Sigma^{(0)} \subseteq T_\Sigma$,
\rtwo if $\sigma^{(k)} \in \Sigma^{(k)}$ and $t_1,\ldots,t_k\in T_\Sigma$, then
$\sigma^{(k)}(t_1,\cdots,t_k)\in T_\Sigma$.
In what follows, we assume a fixed ranked alphabet $(\Sigma,\rk_\Sigma)$.

In this paper, we focus on \emph{typed} regular tree grammars, in which
each nonterminal and each symbol is associated with a type.
There is a finite set of types $\{ \type_1, \ldots, \type_k \}$.
Associated with each symbol $\sigma^{(i)} \in \Sigma^{(i)}$, there is a type assignment
$a_{\sigma^{(i)}} = (\type_0, \type_1, \ldots, \type_i)$, where $\type_0$ is called the \emph{left-hand-side type}
and $\type_1, \ldots, \type_i$ are called the \emph{right-hand-side types}.
Tree grammars are similar to word grammars, but generate trees over a ranked alphabet
instead of words. 


\begin{definition}[Regular tree grammar]
A \emph{typed regular tree grammar} (RTG) is a tuple $G = (N,\Sigma,S,a,\delta)$, 
where $N$ is a finite set of non-terminal symbols of arity 0; 
$\Sigma$ is a ranked alphabet; $S \in N$ is an initial nonterminal; 
$a$ is a type assignment that gives types for members of $\Sigma \cup N$; 
and $\delta$ is a finite set of productions of the form 
$A_0 \rightarrow \sigma^{(i)}(A_1,...,A_i)$, where for $1 \leq j \leq i$, each $A_j \in N$ is a nonterminal 
such that if $a_{\sigma^{(i)}} = (\type_0,\type_1,...,\type_i)$ then $a_{A_j} = \type_j$.
\end{definition}
Given a tree $t\in T_{\Sigma\cup N}$,
applying a production $r = A\to\beta$ to $t$ produces the tree $t'$ resulting from replacing
the leftmost occurrence of $A$ in $t$ with the right-hand side $\beta$.
A tree $t\in T_\Sigma$ is generated by the grammar $G$---denoted by $t\in
L(G)$---iff it can be obtained by applying a sequence of
productions $r_1\cdots r_n$ to the tree whose root is the initial
non-terminal $S$.
\Omit{Similarly, for $A \in N$, $L(A)$ denotes the tree language generated from nonterminal $A$
(i.e., $L(A) = L(G')$, where $G' = (N,\Sigma,A,a,\delta)$).}

Figure~\ref{fig:gex} from \S\ref{Se:MotivatingExample} shows an example of a typed regular tree grammar.
For readability, the grammar does not contain explicit symbols---e.g.,
the production $Start \rightarrow \Ewhile{B}{S}$ should be more correctly stated as
a production $Start \rightarrow \mathsf{while}(B,S)$, where $\mathsf{while}$ is a binary symbol. 
We will use the former notation for readability, and assume that all expressions are well-typed.

We note that terms can be represented using trees of productions, which 
makes it easier to distinguish terms created by different productions 
with identical operators: we use this representation in our tool \ourtool.

\begin{example}
  \label{ex:term-tree}
  Recall the grammar $G_{ex}$ from Figure~\ref{fig:gex}, where each production is labeled with a unique identifier $\sballlrns{n}$.
  The term ``$\Ewhile{x < x}{\Eassign{x}{y}}$'' can be represented using the tree
  $\bTree{\sballlrns{1}}{\bTree{\sballlrns{2}}{\tNode{\sballlrns{6}}}{\tNode{\sballlrns{6}}}}{\uTree{\sballlrns{4}}{\tNode{\sballlrns{7}}}}$.
  The first child tree $\bTree{\sballlrns{2}}{\tNode{\sballlrns{6}}}{\tNode{\sballlrns{6}}}$ represents the condition ``$x < x$'', while
  the second child tree $\uTree{\sballlrns{4}}{\tNode{\sballlrns{7}}}$ represents the assignment ``$\Eassign{x}{y}$''.
\end{example}


When defining a \semgus problem, one has to provide a semantics for the productions
in the RTG.
The semantic definitions are allowed to use terms from a theory $\mathcal{T}$ (e.g., linear integer arithmetic).

\cbstart
\jinwoo{Review B: Replace $v$ with $\Upsilon$ (because $v$ is evocative of a value) / added explanation on signatures of CHCs}

\begin{definition}[Production-based semantics]
\label{Def:prod-semantics}
  Given an RTG $(N, \Sigma, S, a, \delta)$ and a theory $\mathcal{T}$,
  a \textit{semantics for the grammar}
  is a function $\sem{\cdot}$ that maps every production
  $A_0 \rightarrow \sigma^{(i)}(A_1,...,A_i)$ of type
  $a_{\sigma^{(i)}} = (\type_0,\type_1,...,\type_i)$
  to a set of Constrained Horn Clauses of the form 
  $\phi \wedge \mathsf{sem}_{A_1}(\Gamma_1, t_{A_1}, \Upsilon_1) \wedge \cdots 
  \mathsf{sem}_{A_i}(\Gamma_i, t_{A_i}, \Upsilon_i) \implies \mathsf{sem}_{A_0}(\Gamma_0, t_{A_0}, \Upsilon_0)$,
  where $\mathsf{sem}_{A_0}, \mathsf{sem}_{A_1}, \cdots \mathsf{sem}_{A_i}$ are uninterpreted 
  relations, $\Gamma_0, \Gamma_1, \cdots \Gamma_i$ are variables that represent state, 
  $t_{A_k}$ is a variable that represents a term $t \in L(A_k)$, 
  $\Upsilon_0, \Upsilon_1, \cdots \Upsilon_i$ are variables of type $\type_0, \type_1, \cdots \type_i$,
  and $\phi$ is a constraint within the theory $\mathcal{T}$.
\end{definition}

The function $\sem{\cdot}$ can be lifted to trees as follows:
for every subtree $t'$ of $t$,
if $t'=\sigma^{(i)}(t_1,...,t_i)$, then $\sem{t'}=\sem{\sigma^{(i)}}(\sem{t_1},\ldots,\sem{t_i})$.
In practice, the type signatures for $\Gamma$ and $\Upsilon$ are
nearly unrestricted---one has great flexibility in defining for
each production
(i) the kind of program state that is tracked in $\Gamma$, as well as
(ii) the kind of value returned in $\Upsilon$.
This approach allows one to use arbitrary CHCs to define the
semantics, as long as it contains the term $t$ as an argument.

\begin{example}
  The semantics for the statement nonterminal $S$, from Figure~\ref{fig:gex} and Equation~\eqref{eq:semantics_ex}, 
  uses the product type $\mathsf{Int} \times \mathsf{Int}$ for both $\Gamma$ and $\Upsilon$.
\end{example}
\cbend

\Omit{The semantic function $\sem{\cdot}$ can be lifted to partial derivation trees in
$T_{\Sigma\cup N}$ as follows:
the semantics of a nonterminal $A$ is just the set of meanings of terms in $L(A)$---i.e.,
$\sem{A} = \{ \sem{s} \mid s \in L(A) \}$.
Then, for every subtree $t'$ of $t \in T_{\Sigma\cup N}$,
if $t'=\sigma^{(i)}(t_1,...,t_i)$, then
$\sem{t'} = \{ \sem{\sigma^{(i)}}(u_1,\ldots,u_i) \mid u_1 \in \sem{t_1},\ldots, u_i \in \sem{t_i} \}$.}

As is common in many semantic definitions, Def.~\ref{Def:prod-semantics}
defines the semantics of terms in the grammar inductively.
This ability to \textit{equip the grammar with customized semantics} is the
defining characteristic that distinguishes \semgus from \sygus. 
In \sygus, the underlying theory---e.g., LIA---is what corresponds to the specified semantics.
In \semgus, the semantics can be any Constrained Horn Clause defined over 
the relations $\mathsf{sem}_{A_0}, \mathsf{sem}_{A_1}, \cdots \mathsf{sem}_{A_i}$.


\begin{example}
  The big-step semantics of simple imperative languages can be expressed using rules like the
  one illustrated in Equation~\eqref{eq:semantics_ex}, which inductively defines the semantic of 
  the production $Start \rightarrow \Ewhile{B}{S}$ through the semantic relations for nonterminals 
  $B$ and $S$.
\end{example}

\section{Semantics-Guided Synthesis and its Properties}
\label{Se:semgus}

We now provide a formal definition of the Semantics-Guided Synthesis problem:



\begin{definition}[\semgus]
   \label{def:semgus}
   A \textit{\semgus problem} over a theory $\mathcal{T}$
   is a tuple $sem = (G, \forall x. \psi(x, f(x)))$,
   where $G$ is a regular tree grammar with a production-based semantics $\sem{\cdot}$,
   and $\forall x.\psi(x, \sem{f}(x))$ is a Boolean formula over the theory $\mathcal{T}$
   that specifies the desired behavior of $f$, where $f$ is a free second-order variable.
   A \textbf{solution} to the  \semgus problem $sem$ is a term $s\in L(G)$ such that
   $\forall x.\psi(x, \sem{s}(x))$ holds.
   We say that $sem$ is \textbf{realizable} if a solution exists and \textbf{unrealizable} otherwise.
\end{definition}

\begin{example}
  \label{ex:semgus-xor}
  The problem of synthesizing a program for bitwise-xor 
  described in \S\ref{Se:MotivatingExample} can be written as a \semgus problem 
  $sem = (G_{ex}, \forall x, y. f(x, y) = x \oplus y)$ (with $\oplus$ denoting bitwise-xor), 
  where $G_{ex}$ is equipped with a semantics that contains the rule given in Equation~\eqref{eq:semantics_ex}.
\end{example}

Example~\ref{ex:semgus-xor} gives an example of a \semgus problem where the grammar 
is equipped with a semantics that one would normally expect for imperative programs.
Definition~\ref{def:semgus}, which defines \semgus problems, shows that \semgus can be instantiated 
with different kinds of semantics, as long as the semantics satisfies the definition of a
production-based semantics (Definition~\ref{Def:prod-semantics}).
This feature allows \semgus problems to be instantiated with a semantics that is \textit{approximate} with
respect to some original semantics.
An approximate semantics can be used to efficiently compute 
one-sided answers to the original problem---either synthesis or unrealizability---depending on the relation between the
approximating and the original semantics.

\subsection{Unrealizability of \semgus Problems with Overapproximating Abstract Semantics}
\label{Se:semgus-abstract}

In this section, we see how an \emph{overapproximating semantics}  can be used to prove 
\emph{unrealizability}.
An overapproximating semantics overapproximates the set of reachable states with 
respect to an original program semantics; in essence, they are an \emph{abstract semantics}~\cite{absint}, 
and we use the latter term for the rest of the paper.
More specifically, we show 
that if a \semgus problem $sem = (G, \psi(x, f(x)))$ is unrealizable 
when $G$ is equipped with an abstract semantics, then $sem$ 
is unrealizable when equipped with the original semantics as well.

\begin{definition}
  For a grammar $G$ equipped with a semantics $\sem{\cdot}$, we say $\sem{\cdot}^{\#}$ is an
  \emph{abstract semantics} for $G$ with respect to $\sem{\cdot}$ if there exists
  an abstraction function $\alpha$ and a concretization function $\gamma$, such that
  for all $t \in L(G)$, if
  $\sem{t}(\Gamma, v)$ holds, then
  $\sem{t}^{\#}(\alpha(\Gamma), \alpha(v))$ holds, and $\Gamma \in \gamma(\alpha(\Gamma))$, $v \in \gamma(\alpha(v))$,
  i.e., $\alpha$ and $\gamma$ form a Galois connection.
\end{definition}

In $\semgus$,  an abstract semantics $\sem{\cdot}^{\#}$ overapproximates
the set of values that are obtainable by synthesizing a term from the grammar,
again with respect to the original semantics $\sem{\cdot}$.
Because the set of values is overapproximated, a term synthesized
using the abstract semantics may not satisfy the specification
when executed with the standard semantics.
However, by showing the desired output is \emph{absent} from the set of obtainable values,
one can prove \emph{unrealizability} in a sound manner!

\begin{theorem}[Soundness of Abstract Semantics for Unrealizability]
  For a \semgus problem $sem = (G, \forall x. \psi(x, f(x)))$, if $sem$ is
  unrealizable when $G$ is equipped with an abstract semantics $\sem{\cdot}^{\#}$, then
  $sem$ is also unrealizable when $G$ is equipped with $\sem{\cdot}$.
\end{theorem}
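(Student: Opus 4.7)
The plan is to prove the theorem by contrapositive: assuming $sem$ is realizable when $G$ is equipped with $\sem{\cdot}$, I would show that $sem$ is realizable when $G$ is equipped with $\sem{\cdot}^{\#}$. Before carrying out the argument, I would first pin down what ``realizable under an abstract semantics'' means, since the specification $\psi$ is a formula over the concrete theory $\mathcal{T}$. The natural reading, and the one implicit in Section~\ref{Se:semgus-abstract}, is that a term $s \in L(G)$ counts as an abstract solution if for every concrete input $x$, the abstract output $\sem{s}^{\#}(\alpha(x))$ has some concretization witness satisfying $\psi(x, \cdot)$; equivalently, the abstract semantics does not rule out every $\psi$-consistent output.

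With this reading in place, assume a concrete solution $s \in L(G)$ exists, so that for every $x$ there is a value $v_x$ with $\sem{s}(x, v_x)$ and $\psi(x, v_x)$. The first step is to apply the Galois-connection hypothesis to each execution: since $\sem{s}(x, v_x)$ holds, we obtain $\sem{s}^{\#}(\alpha(x), \alpha(v_x))$, together with $x \in \gamma(\alpha(x))$ and $v_x \in \gamma(\alpha(v_x))$. The second step is to observe that $v_x$ is then a concrete witness in the concretization of the abstract output that satisfies $\psi$, which by the interpretation above makes $s$ an abstract solution. Contrapositively, if no abstract solution exists, then no concrete solution exists either.

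The lifting of the Galois-connection property from a single production to an entire term $t \in L(G)$ should follow by structural induction over the derivation tree of $t$, using the inductive form of the production-based semantics (Definition~\ref{Def:prod-semantics}); at each internal node, the abstract premise relations $\mathsf{sem}_{A_j}^{\#}$ are guaranteed to fire on the abstracted states and return values of the children, and the constraint $\phi$ in the abstract rule is assumed sound by the definition of abstract semantics. I would treat this induction explicitly, since the abstract-semantics definition is stated pointwise on whole terms rather than production-by-production.

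The main obstacle I anticipate is not the Galois-connection step itself, which is routine, but rather making the specification match up cleanly: the formula $\psi$ is phrased over concrete values, yet the abstract semantics produces abstract values, so one must either reinterpret $\psi$ through $\gamma$ or insist on a witness-based reading as above. A second, more subtle issue is that Definition~\ref{Def:prod-semantics} allows each production to carry its own notion of state $\Gamma$ and return type $\Upsilon$, so the abstraction $\alpha$ really denotes a \emph{family} of abstraction functions indexed by nonterminal; I would make this indexing explicit to avoid equivocation and to ensure the inductive step type-checks.
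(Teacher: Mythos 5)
Your argument is correct, and it is essentially the argument the paper intends: the paper states this theorem without an explicit proof (it is the one theorem omitted from Appendix~\ref{Appendix:Proofs}), offering only the informal justification in \S\ref{Se:semgus-abstract} that the abstract semantics overapproximates the set of obtainable values, so that the absence of a $\psi$-consistent output from the overapproximation implies its absence from the concrete set. Your contrapositive makes that justification precise, and your witness-based reading of ``realizable under the abstract semantics'' (the abstract output admits a concretization satisfying $\psi$) is the right one to make explicit --- the paper leaves it implicit, and it is the weakest notion under which the contrapositive closes, hence the one that makes the stated theorem strongest. The step from $\sem{s}(x,v_x)\wedge\psi(x,v_x)$ to $s$ being an abstract solution, via $\sem{s}^{\#}(\alpha(x),\alpha(v_x))$ and $v_x\in\gamma(\alpha(v_x))$, is exactly right. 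One piece of your plan is superfluous, however: the structural induction lifting the Galois-connection property from productions to whole terms is not needed for this theorem, because the paper's definition of an abstract semantics is already a hypothesis quantified over whole terms (for all $t\in L(G)$, if $\sem{t}(\Gamma,v)$ holds then $\sem{t}^{\#}(\alpha(\Gamma),\alpha(v))$ holds), so you may simply invoke it. The induction you describe is instead the content of Lemma~\ref{is-abstract}, which verifies that a \emph{particular} production-by-production abstract semantics (the domain $\mathbb{B}_i$) satisfies that whole-term definition; it is a proof obligation on instances of the definition, not on the theorem itself.
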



Equipping a \semgus problem with an abstract semantics still results in a \semgus problem, 
which can be solved using the procedure described in $\S\ref{Se:encoding}$.
Much like how abstract semantics are used for efficient program verification,
an abstract semantics can sometimes be used to prove the unrealizability 
of a \semgus problem with the original semantics in a much more efficient manner.

\subsection{Solving Realizable \semgus Problems with Underapproximating Semantics}
\label{Se:semgus-underapproximate}

In this section, we show that an \emph{underapproximating semantics}, 
can be used to synthesize solutions to \emph{realizable} \semgus problems.

\begin{definition}
  For a grammar $G$ equipped with a semantics $\sem{\cdot}$, we say
  $\sem{\cdot}^{\flat}$ \emph{underapproximates} $\sem{\cdot}$ on $G$, or that
  $\sem{\cdot}^{\flat}$ is an \emph{underapproximating semantics} for $G$ with respect to
  $\sem{\cdot}$, if for every term $t \in L(G)$, every state $\Gamma$, and every value $v$ on which
  $\sem{\cdot}^{\flat}$ \emph{is defined},
  $\sem{t}^{\flat}(\Gamma, v) = \sem{t}(\Gamma, v)$.
\end{definition}

Intuitively, an underapproximating semantics is defined as a subset of the
original semantics.
Outside of the subset upon which it is defined, an underapproximating
semantics is undefined, which does not mean that a term can evaluate to any value,
but rather that a term \emph{cannot} evaluate to any value.
More precisely, one cannot prove any theorems about the relation
$\sem{t}^{\flat}(\Gamma, v)$ if $\sem{\cdot}^{\flat}$ is undefined on $t, \Gamma$, and $v$.
Instead, an underapproximate semantics is precise on the subset upon which it is defined,
i.e., $\sem{t}^{\flat}(\Gamma, v) = \sem{t}(\Gamma, v)$ if $\sem{\cdot}^{\flat}$ is defined on $t, \Gamma$, and $v$.

In \semgus, an underapproximating semantics corresponds to a problem where
synthesized terms only have meaning if their semantics is defined on the input-output examples.
For the subset of terms for which the semantics is defined,
the semantics is exact, which allows underapproximating semantics to be used for
program synthesis.
Because there may be an answer to the problem outside the defined subset,
an underapproximating semantics cannot be used for unrealizability.

\begin{theorem}[Soundness of Underapproximating Semantics for Synthesis]
   For a \semgus problem $sem = (G, \forall x. \psi(x, f(x)))$, if $sem$ is
   realizable with solution $t$ when $G$ is equipped with an underapproximating semantics
   $\sem{\cdot}^{\flat}$, then $t$ is also a solution for $sem$ when $G$ is equipped with $\sem{\cdot}$.
 \end{theorem}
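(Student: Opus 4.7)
The plan is to unpack the definition of realizability for the underapproximating semantics and then use the defining equation $\sem{t}^{\flat}(\Gamma,v)=\sem{t}(\Gamma,v)$ pointwise for exactly those triples $(t,\Gamma,v)$ that appear when evaluating $t$ on the inputs mentioned by the specification. Since $\sem{\cdot}^{\flat}$ and $\sem{\cdot}$ are given as relations (CHC-defined), the care needed is to interpret the syntactic sugar $\sem{t}(x)$ appearing inside $\psi$ as ``some $v$ such that the corresponding semantic relation holds,'' and then to transport witnesses from $\sem{\cdot}^{\flat}$ to $\sem{\cdot}$.

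First I would fix notation: assume the specification is $\forall x.\,\psi(x,\sem{f}(x))$, and that a solution $t\in L(G)$ realizes $sem$ under $\sem{\cdot}^{\flat}$, meaning that for every input $x$ there is some output $v_x$ with $\sem{t}^{\flat}(\Gamma_x, v_x)$ holding and $\psi(x,v_x)$ holding. Next, for each such $x$, I would invoke the hypothesis that $\sem{\cdot}^{\flat}$ is defined on $(t,\Gamma_x,v_x)$ (it must be, since $\sem{t}^{\flat}(\Gamma_x,v_x)$ was used as a witness), and apply the underapproximation identity $\sem{t}^{\flat}(\Gamma_x,v_x) = \sem{t}(\Gamma_x,v_x)$ to conclude $\sem{t}(\Gamma_x,v_x)$. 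Then $\psi(x, v_x)$ still holds, so $\psi(x,\sem{t}(x))$ holds when we interpret $\sem{t}(x)$ as the value $v_x$ extracted from the original semantic relation. Closing over all $x$ gives that $t$ is a solution to $sem$ under $\sem{\cdot}$.

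The main obstacle is a definitional one rather than a mathematical one: the statement treats $\sem{t}(x)$ as though it denotes a value, whereas Definition~\ref{Def:prod-semantics} presents semantics as a multi-argument relation $\mathsf{sem}_{A}(\Gamma,t,\Upsilon)$. I would therefore spend a brief lemma pinning down the reading that ``$t$ satisfies $\psi$ under $\sem{\cdot}$ on input $x$'' means there exists an output value $v$ with $\sem{t}(\Gamma_x,v)$ and $\psi(x,v)$. Once this bridge is in place, the argument above goes through immediately, because the underapproximation property transports exactly such existential witnesses from $\sem{\cdot}^{\flat}$ to $\sem{\cdot}$ and, crucially, does not require anything about undefined triples. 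No induction over the structure of $t$ is needed: the underapproximation hypothesis is already stated directly at the level of whole terms $t\in L(G)$, so the theorem is essentially a one-step rewrite using that equality.
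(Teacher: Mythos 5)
Your proof is correct. Note that the paper itself does not actually include a proof of this theorem: the appendix only proves Lemma~\ref{is-underapproximating} (that the specific bounded-loop semantics of \S\ref{Se:underapproximate-semantics} is underapproximating), and treats the present theorem as immediate from the definition. Your argument is exactly the intended one-step transport: if $\sem{t}^{\flat}(\Gamma_x,v_x)$ is derivable then $\sem{\cdot}^{\flat}$ is defined on that triple (since, as the paper stipulates, nothing is provable about undefined triples), so the defining equality $\sem{t}^{\flat}(\Gamma_x,v_x)=\sem{t}(\Gamma_x,v_x)$ yields $\sem{t}(\Gamma_x,v_x)$, and $\psi(x,v_x)$ carries over unchanged. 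Your preliminary remark pinning down the relational reading of $\sem{t}(x)$ inside $\psi$ (``there exists $v$ with $\sem{t}(\Gamma_x,v)$ and $\psi(x,v)$'') is a worthwhile clarification that the paper glosses over, and you are right that no structural induction on $t$ is needed because the underapproximation hypothesis is stated at the level of whole terms.
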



An underapproximate semantics indirectly restricts
the search space for program synthesis.
This restriction is not necessarily related to the grammar supplied to a \semgus problem,
but may have a semantic meaning---for example, a bound on the number of possible loop iterations.

As is the case with an abstract semantics, \semgus can be supplied with an underapproximate 
semantics to yield a relatively more efficient procedure for program synthesis, as illustrated in \S\ref{Se:underapproximate-semantics}.



\section{Solving Semantics-Guided Synthesis Problems Via Constrained Horn Clauses}
\label{Se:encoding}

This section presents a general procedure for encoding \semgus problems so that they can be solved by 
answering a query over Constrained Horn Clauses, which in turn can be solved by an off-the-shelf
CHC solver.

\S\ref{Se:cegis} describes how general \semgus problems can be solved by solving 
\semgus-with-examples problems in tandem with
counterexample-guided inductive synthesis; it also states the correctness of our solving procedure.
\S\ref{Se:listing-sol} presents a method for using \emph{flattened representations} 
of terms as opposed to trees, to avoid the use of 
algebraic datatypes in SMT solvers.

\subsection{Solving \semgus Problems with Counterexample-Guided Inductive Synthesis}
\label{Se:cegis}

\emph{Counterexample-guided inductive synthesis} (CEGIS) is a widely implemented algorithm 
in program synthesizers.
The core idea of CEGIS is that instead of searching for a term that 
satisfies the specification for the entire input space, 
the synthesizer  searches for a solution that satisfies the specification 
on a finite set of examples $\examples$.
A verifier then attempts to prove that 
the solution is also correct on the universally quantified specification;
if not, a counterexample is added to the set of examples.
The algorithm then repeats.
The main advantage of CEGIS is that it eliminates the universal quantifier 
over the space of program inputs, yielding a simpler problem.

The algorithm sketched in \S\ref{Se:MotivatingExample}, as well as the one presented in 
\S\ref{Se:listing-sol}, is designed to solve 
\emph{\semgus-with-examples} problems, 
which are $\semgus$ problems where the specification is
given in terms of a set of examples $\examples$, and has the form
$\bigwedge_{x\in \examples}\psi(x, \sem{f}(x))$.
To solve general \semgus problems, the \semgus-with-examples 
algorithm can then be embedded within a CEGIS loop, where 
the specification is given in terms of the set of counterexamples accumulated by CEGIS.

The general idea of using CHCs to describe the syntax and semantics of a \semgus-with-examples
problem
$sem = (G, \forall x \in \examples.\psi(x, f(x)))$ has 
already been described in \S\ref{Se:MotivatingExample}: Equation~\eqref{eq:syntax_ex} and 
Equation~\eqref{eq:semantics_ex} show how the syntax and the semantics of a production 
$Start \rightarrow \Ewhile{B}{S}$ can be written as CHCs, and it is straightforward to 
describe other productions in this manner as well.

The final query that describes the specification can be formally written as the following rule.
\begin{equation}
  \label{eq:general-query}
\infer[\mathsf{Query}]{\mathit{Realizable}}
{
  \StartrreduceTree{t} \quad
  \bigwedge_{\ex_i \in \examples} \StartlreduceTree{\ex_i, t}{o_i} \quad 
  \bigwedge_{\ex_i \in \examples} \psi(\ex_i, o_i)
}
\end{equation}
$Realizable$ is the final theorem that shows whether the given \semgus-with-examples problem 
is realizable or not.
If the CHC solver finds a proof for $Realizable$, 
then the problem 
is \textit{realizable} and the program $t$
is a solution.
If the solver can establish that $Realizable$ is
\textit{unsatisfiable}, then the problem is \textit{unrealizable}.
The correctness of our algorithm can be stated as the following theorem:

\begin{theorem}[Soundness and Completeness] 
  \label{tree-correctness}
  Consider a \semgus-with-examples problem $sem = (G, \forall x \in \examples.\psi(x, f(x)))$, 
  equipped with semantic rules $\mathcal{R}_{sem}$, a specification set $\examples$, and the $\mathsf{Query}$ rule 
  (Equation~\eqref{eq:general-query}).
  Let the CHC form of $G$ be $\mathcal{R}_{syn}$.
  Then, \textit{Realizable} is a theorem over $\mathcal{R}_{sem}$ and $\mathcal{R}_{syn}$ 
  if and only if the \semgus-with-examples problem $sem$ is realizable.
  Moreover, if \textit{Realizable}  is a theorem, then the value of $t$ in the 
  $\mathsf{Query}$ rule satisfies $t \in L(G)$ and $\forall x \in \examples. \: \psi(x, \sem{t}(x))$.
\end{theorem}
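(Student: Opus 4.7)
The plan is to prove the two directions of the ``if and only if'' separately, with each direction proceeding by induction on a suitable derivation tree. The bridge between the syntactic/semantic objects of Definitions~\ref{def:semgus} and~\ref{Def:prod-semantics} and the CHC proof trees of Equation~\eqref{eq:general-query} is the key piece of machinery, and it rests on two correspondence lemmas that I would establish first.

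First I would prove a \textbf{syntax correspondence lemma}: for every nonterminal $A \in N$ of $G$ and every term $t$, the atom $\mathsf{syn}_A(t)$ is derivable from $\mathcal{R}_{syn}$ if and only if $t \in L(A)$. This follows by straightforward induction: each production $A_0 \to \sigma(A_1,\dots,A_i)$ is encoded as a CHC whose premises are exactly $\mathsf{syn}_{A_1}(t_1),\dots,\mathsf{syn}_{A_i}(t_i)$, so proof trees over $\mathcal{R}_{syn}$ and derivations in $G$ are in bijection. Next I would prove a \textbf{semantics correspondence lemma}: for every $t \in L(G)$, state $\Gamma$, and value $\Upsilon$, the atom $\mathsf{sem}_A(\Gamma, t, \Upsilon)$ is derivable from $\mathcal{R}_{sem}$ (together with the theory $\mathcal{T}$) if and only if $\sem{t}(\Gamma, \Upsilon)$ holds under the production-based semantics of Definition~\ref{Def:prod-semantics}. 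Again this is an induction over proof trees, using the fact that $\sem{\cdot}$ is defined by lifting the per-production CHCs to trees in exactly the manner that CHC derivation composes them.

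With these two lemmas in hand, both directions of the theorem are short. For \textbf{soundness} (\emph{if $\mathit{Realizable}$ is a theorem then $sem$ is realizable}), any proof of $\mathit{Realizable}$ must use the $\mathsf{Query}$ rule at the root, yielding some term $t$ such that $\mathsf{syn}_{\mathit{Start}}(t)$ holds, $\mathsf{sem}_{\mathit{Start}}(\ex_i, t, o_i)$ holds for each $\ex_i \in \examples$, and $\psi(\ex_i, o_i)$ holds. By the syntax correspondence lemma, $t \in L(G)$; by the semantics correspondence lemma, $\sem{t}(\ex_i) = o_i$ for each $i$; combining these, $\psi(\ex_i, \sem{t}(\ex_i))$ holds for every $\ex_i \in \examples$, so $t$ is a solution and $sem$ is realizable. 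For \textbf{completeness}, suppose $sem$ is realizable with witness $t \in L(G)$ satisfying $\psi(\ex_i, \sem{t}(\ex_i))$ for all $\ex_i \in \examples$. The syntax correspondence lemma yields a proof of $\mathsf{syn}_{\mathit{Start}}(t)$; for each $\ex_i$, letting $o_i = \sem{t}(\ex_i)$, the semantics correspondence lemma yields a proof of $\mathsf{sem}_{\mathit{Start}}(\ex_i, t, o_i)$, and $\psi(\ex_i, o_i)$ holds by assumption. Applying the $\mathsf{Query}$ rule to these derivations produces a proof of $\mathit{Realizable}$. The ``moreover'' clause is subsumed by the soundness argument.

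The main obstacle will be the semantics correspondence lemma: the production-based semantics of Definition~\ref{Def:prod-semantics} is given compositionally by a \emph{set} of CHCs per production (some productions, like $\mathit{Start} \rightarrow \Ewhile{B}{S}$, have multiple rules), and the lifted semantic function $\sem{\cdot}$ on trees must be shown to coincide with the least fixpoint of exactly the CHC system $\mathcal{R}_{sem}$ obtained by taking the union over all productions. The bookkeeping here --- keeping track of which CHC corresponds to which subtree, handling the variables $\Gamma$ and $\Upsilon$ of arbitrary type signatures, and discharging the theory-$\mathcal{T}$ side conditions $\phi$ --- is routine but has to be done carefully because the term argument $t$ appears inside the semantic relation, tying the syntax and semantics proof structures together.
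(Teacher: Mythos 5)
Your proposal is correct and takes essentially the same route as the paper: the paper likewise reduces the theorem to a syntax-correspondence lemma (its Lemma~\ref{syn-lemma}, proved by induction on the number of productions used to build $t$) and then reads off both directions, including the ``moreover'' clause, from the $\mathsf{Query}$ rule. The one place you do more work than the paper is your semantics-correspondence lemma, which you flag as the main obstacle: the paper dispatches it in one sentence by assuming the user-supplied semantic rules correctly encode the intended semantics---and since Definition~\ref{Def:prod-semantics} \emph{defines} $\sem{\cdot}$ as the set of CHCs $\mathcal{R}_{sem}$, there is no independent semantics to compare derivability against, so the careful bookkeeping you anticipate largely dissolves by definition.
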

Theorem~\ref{tree-correctness} can be proved by proving the correctness of the syntax rules via structural induction.

As shown in prior work~\cite{sketch}, the CEGIS algorithm is often 
powerful enough for program synthesis, where a term synthesized for the 
given examples generalizes to the entire space of possible inputs.
Prior work on unrealizability~\cite{unreal,semilinear} also shows that 
CEGIS is often powerful enough
to prove that a synthesis problem is unrealizable---i.e., the problem
does not admit a solution even when only a finite number of examples are considered.

\begin{example}
	The problem of synthesizing a program for bitwise-xor
  described in \S\ref{Se:MotivatingExample} 
  can be written as a \semgus-with-examples problem
  $sem = (G_{ex}, \forall_{x, y \in \examples_{ex}} f(x, y) = x \oplus y)$, 
  where $\examples_{ex} = [(6, 9), (44, 247), (14, 15)]$.
  As seen in \S\ref{Se:MotivatingExample}, $\examples_{ex}$ is sufficient to prove 
  that $sem$ is unrealizable.
\end{example}

In particular, for a \semgus problem $sem$, the CEGIS algorithm is sound but incomplete for unrealizability~\cite{unreal}.
As discussed in \S\ref{Se:Evaluation}, CEGIS is still able to synthesize solutions to, or prove unrealizability of, many \semgus problems.
However, this procedure is incomplete.
\begin{theorem}[CEGIS for unrealizability~\cite{unreal}]
  Let $sem_\examples$ be a \semgus-with-examples problem identical to $sem$, but where the specification is given 
  over the input examples $\examples$.
  If $sem_\examples$ is unrealizable, then $sem$ is unrealizable as well.
  However, there exists an unrealizable \semgus problem $sem$ for which $sem_\examples$ is realizable for any finite set of examples $\examples$.
\end{theorem}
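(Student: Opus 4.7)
The plan is to prove the two parts separately. The soundness direction (\emph{if $sem_\examples$ is unrealizable then $sem$ is unrealizable}) is essentially immediate from Definition~\ref{def:semgus}, since any solution to $sem$ is automatically a solution to $sem_\examples$ whenever $\examples$ is contained in the input space. The incompleteness direction requires exhibiting a concrete counterexample, and this is where most of the work lies.

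For soundness, I would prove the contrapositive. Suppose $sem$ is realizable with solution $t \in L(G)$, so $\forall x.\, \psi(x, \sem{t}(x))$ holds. For every finite $\examples$, we have $\examples$ is a subset of the full input space, so in particular $\bigwedge_{x \in \examples} \psi(x, \sem{t}(x))$ holds, witnessing that $t$ is also a solution to $sem_\examples$. Hence $sem_\examples$ is realizable.

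For incompleteness, I would construct a single \semgus problem $sem$ that is unrealizable but whose every finite-example restriction $sem_\examples$ admits a solution. A natural choice is to take $G$ to be a grammar generating all integer polynomial expressions in a variable $x$ (e.g., $E ::= x \mid c \mid E + E \mid E \cdot E$, with $c$ ranging over integer constants) under the standard semantics of arithmetic, and let the specification be $\forall x.\, f(x) = 2^x$. Since $2^x$ grows faster than any polynomial, there is no term $t \in L(G)$ with $\sem{t}(x) = 2^x$ for all $x$, so $sem$ is unrealizable. However, for any finite input set $\examples = \{x_1, \ldots, x_n\}$ with corresponding required outputs $2^{x_1}, \ldots, 2^{x_n}$, Lagrange interpolation produces a polynomial of degree at most $n{-}1$ that agrees with $2^x$ on all points of $\examples$, and this polynomial is expressible as a term in $L(G)$. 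Thus $sem_\examples$ is realizable for every finite $\examples$.

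The main obstacle I anticipate is the mild bookkeeping needed to make the counterexample fit the formal setup: one has to check that the grammar $G$ is indeed an RTG in the sense of Section~\ref{Se:trees-and-semantics} (finite production set, typed), that the semantics is a legitimate production-based semantics (Definition~\ref{Def:prod-semantics}) in an appropriate theory such as integer arithmetic, and that the specification $\psi(x, y) \equiv (y = 2^x)$ is expressible in that theory (one may prefer to use an equivalent recursively-defined predicate, or replace $2^x$ with any other non-polynomial function that is definable in the background theory, such as $|x|$ over rationals versus a grammar of linear polynomials, adjusted so that polynomial interpolation still succeeds pointwise). Once the counterexample is phrased within the \semgus formalism, the argument itself is a clean combination of a growth-rate obstruction to realizability and interpolation for example-wise realizability.
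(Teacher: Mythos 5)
The paper does not prove this theorem itself---it is imported from \cite{unreal}---so your proposal has to be judged against the known argument from that work. Your first half (soundness of CEGIS for unrealizability) is correct and is the standard argument: any solution of $sem$ satisfies $\psi$ on all inputs, hence on the finite subset $\examples$, so realizability of $sem$ implies realizability of $sem_\examples$; take the contrapositive.

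The incompleteness half has a genuine gap: your counterexample does not have the property you claim. The grammar $E ::= x \mid c \mid E + E \mid E \cdot E$ with integer constants generates only polynomials with \emph{integer coefficients}, and such polynomials satisfy the divisibility constraint $a - b \mid p(a) - p(b)$ for all integers $a, b$. Lagrange interpolation in general produces rational coefficients, so it does not stay inside $L(G)$. Concretely, take $\examples = \{0, 2\}$ with required outputs $2^0 = 1$ and $2^2 = 4$: any $p \in L(G)$ has $p(2) - p(0)$ even, so no term matches both examples, and $sem_\examples$ is already \emph{unrealizable} for this two-element $\examples$. That means your $sem$ is not a witness of incompleteness at all---CEGIS would succeed on it. Your fallback ($|x|$ against linear polynomials) fails for the same structural reason: three points such as $(-1,1), (0,0), (1,1)$ are not collinear, so that $sem_\examples$ is also unrealizable for a small finite $\examples$. (A secondary issue: ``$c$ ranging over integer constants'' is infinitely many productions, violating the finiteness requirement on RTGs; this is fixable by generating constants from $0$, $1$, $+$, and $-$.) The standard witness from \cite{unreal} avoids interpolation entirely: take a grammar that generates only constants, e.g.\ $E ::= 0 \mid 1 \mid E + E$, with specification $\forall x.\ f(x) > x$. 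For any finite $\examples$ the constant $\max(\examples) + 1$ is a solution, so every $sem_\examples$ is realizable, yet no constant $c$ satisfies $c > c$, so $sem$ is unrealizable. Replacing your example with one of this shape closes the gap; the rest of your write-up (including the bookkeeping about fitting the example into Definitions~\ref{Def:prod-semantics} and~\ref{def:semgus}) is fine.
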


\subsection{Using Flattened Representations of Terms to Solve \semgus Problems}
\label{Se:listing-sol}

While it is possible to solve \semgus-with-examples
problems using terms encoded as trees 
using the scheme given
in \S\ref{Se:trees-and-semantics}, 
current solvers sometimes fail to return an answer 
depending on how well they can handle trees encoded as algebraic datatypes.
In this section, we show how to alleviate this problem by using a 
\emph{flattened representation} of terms, which we refer to as a listing.\footnote{
  Listings may be implemented as lists or arrays in an SMT solver.
}
The idea is that a term $t$ can be encoded using a
\emph{pre-order listing} $\aI_t$ of the productions applied to derive $t$.

\begin{example}
  \label{ex:listings}
  Consider once more the term $t = \Ewhile{(x<x)}{\Eassign{x}{x}}$ from Example~\ref{ex:term-tree}, 
  constructed from the grammar $G_{ex}$ in Figure~\ref{fig:gex}.
  The pre-order listing of productions applied to derive $t$ is 
  $[\ballns{1},\ballns{2},\ballns{6},\ballns{6},\ballns{4},\ballns{6}]$, 
  where $Start \rightarrow \Ewhile{B}{S}  \ball{1}$ is the first production applied to the 
  nonterminal $Start$, the next production $B \rightarrow E < E  \ball{2}$ is applied next, 
  and the remaining productions are applied in left-to-right order as well.
\end{example}

Following the list representation of terms, 
the next step is to modify the syntax relations and rules to operate over lists.
Equation~\eqref{eq:syntax-list} describes the 
syntax rule generated using a flattened representation of terms for the 
production $A_0 \rightarrow \sigma(A_1, \cdots, A_i) \sball{n}$.
\begin{equation}
  \label{eq:syntax-list}
  \infer[syntax^\mathsf{List}_{A_0 \rightarrow \sigma(A_1, \cdots, A_i) \sball{n}}]{\mathsf{syn}_{A_0}(\aI_{in}, \sballlrns{n}::\aI_1)}
  {
  \mathsf{syn}_{A_i}(\aI_{in}, \aI_i) \quad 
  \mathsf{syn}_{A_{i-1}}(\aI_i, \aI_{i-1}) \cdots
  \mathsf{syn}_{A_1}(\aI_2, \aI_1)
  }
 \end{equation}

There are several things to notice about Equation~\eqref{eq:syntax-list}.
First, the syntax relation $\mathsf{syn}_N$ now ranges over two listings (term representations) 
as opposed to a single term, where the first listing may be interpreted as 
an incoming listing and the second an outgoing listing.
Here, the relations should evaluate to $\Etrue$ if and only if the outgoing 
listing is equivalent to the pre-order representation of the term concatenated to the 
incoming listing.

Second, the outgoing listing of a nonterminal is passed as the incoming 
listing of the next nonterminal in \emph{right-to-left order}, followed by 
prepending the number of the production to the head of the listing.
This algorithm effectively creates a pre-order representation of a term 
by performing a post-order traversal, appending each production encountered
to the head of the listing.

\begin{example}
  Consider Equation~\eqref{eq:syntax_ex} from \S\ref{Se:MotivatingExample}, which 
  describes the syntax rule for the production $\mathit{Start} \rightarrow \Ewhile{B}{S} \ball{1}$.
  Using a list representation of terms, the rule would be modified to:
  \begin{equation}
    \label{eq:syntax-list-example}
   \infer[syntax^{\mathsf{List}}_{\mathit{Start} \rightarrow \Ewhile{B}{S} \ball{1}}]
  {\Startrreduce{\aI_{in}}{\sballlrns{1}::\aI_1}}
  {
    \begin{array}{l}
    \SFrreduce{\aI_{in}}{\aI_2} \quad
    \BFrreduce{\aI_2}{\aI_1}
    \end{array}
  }
  \end{equation}
  Equation~\eqref{eq:syntax-list-example}
  traverses the nonterminals $B, S$ in right-to-left order, then 
  prepends the identifier $\ballns{1}$ to the head of the list $\aI_{1}$.
\end{example}

Having encoded a pre-order representation of a term, the semantic 
rules must interpret this representation accordingly as well.
The semantic relations now also range over 4 elements: an 
incoming listing $\aI_{in}$ and an incoming state $\Gamma$, followed 
by an outgoing listing $\aI_{out}$ and a resulting value $v$.
They should evaluate to $\Etrue$ if and only if for the list $\aI_t$
such that $\aI_{in} = \aI_t ++ \aI_{out}$, $\sem{t}(\Gamma, v)$ 
also evaluates to $\Etrue$ for the corresponding term $t$.

Keeping that in mind, a semantic rule that uses a flattened 
representation of terms for the production 
$A_0 \rightarrow \sigma(A_1, \cdots, A_i) \sball{n}$, equipped with the semantics 
$\phi \wedge \mathsf{sem}_{A_1}(\langle \Gamma_1, t_1 \rangle, v_1), \cdots, \mathsf{sem}_{A_i}(\langle \Gamma_i, t_i \rangle, v_i){\implies} \mathsf{sem}_{A_0}(\langle \Gamma, t \rangle, v_0)$ is described in 
Equation~\eqref{eq:semantics-list}.

\begin{equation}
  \label{eq:semantics-list}
  \infer[sem^{\mathsf{List}}_{A_0 \rightarrow \sigma(A_1, \cdots, A_i) \sball{n}}]
  {\mathsf{sem}_{A_0}(\langle \Gamma, \sballlrns{n}::\aL_{1} \rangle, \langle v_0, \aI_{out} \rangle)}
  {
    \phi \quad 
    \mathsf{sem}_{A_1}(\langle \Gamma_1, \aI_1 \rangle, \langle v_1, \aI_2 \rangle) \cdots 
    \mathsf{sem}_{A_i}(\langle \Gamma_i, \aI_i \rangle, \langle v_i, \aI_{out} \rangle)
  }
\end{equation}
Because the syntax rules have encoded terms as a pre-order listing, 
the semantic rules are free to interpret the current production 
by checking the head of the list, then compute values for subterms 
in \emph{left-to-right} order.
The actual semantics of the production remains encoded in $\phi$.

\begin{example}
  Consider Equation~\eqref{eq:semantics_ex} from \S\ref{Se:MotivatingExample}, which 
  describes the semantic rule for the production $\mathit{Start} \rightarrow \Ewhile{B}{S} \ball{1}$.
  Using a list representation of terms, the rule would be modified to:
  \begin{equation}
    \infer[]
		{\Startlreduce{\Gamma, \ballns{1}::\aI_1}{\Gamma_2, \aI_{out}}}
		{
		\begin{array}{l}
      \BFlreduce{\Gamma, \aI_1}{\Etrue, \aI_2} \ \ 
      \SFlreduce{\Gamma, \aI_2}{\Gamma_1, \aI_{out}} \ \ 
      \Startlreduce{\Gamma_1, \ballns{1}::\aI_1}{\Gamma_2, \aI_{out}}
		\end{array}
		}
  \end{equation}
	The list $\ballns{1}::\aI_{in}$ represents the entire term for $\Ewhile{B}{S}$ in 
  preorder---the tailing list $\aI_{out}$ represents the part that comes after $\Ewhile{B}{S}$.
\end{example}

\begin{theorem}[Correctness of Listings]
  \label{listing-correctness}
  Let $\mathcal{R}^{\mathsf{List}}_{sem}$ be a set of semantic rules using a flattened representation 
  of terms, created from the set of semantic rules $\mathcal{R}_{sem}$.
  For any nonterminal $N$, $\NFlreduce{\Gamma, \aL_{in}}{v, \aL_{out}}$ is a theorem of 
  $\mathcal{R}^{\mathsf{List}}_{sem}$ iff
  $\NFlreduceTree{\Gamma, t}{v}$ is a theorem of $\mathcal{R}_{sem}$, and $\aL_{in} =\aL_t \aL_{out}$ 
  (i.e., \texttt{concat}($\aL_t$, $\aL_{out}$)) where $\aL_t$ is the pre-order listing of a term 
  $t \in L(N)$.
\end{theorem}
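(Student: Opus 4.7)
The plan is to prove both directions of the biconditional by structural induction, exploiting the fact that the flattened rules simulate a traversal of the original derivation trees.

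For the forward direction ($\Rightarrow$), I would induct on the structure of a derivation witnessing $\NFlreduce{\Gamma, \aL_{in}}{v, \aL_{out}}$ in $\mathcal{R}_{sem}^{\mathsf{List}}$. The base cases are nullary productions, where the list-form rule strips a single production identifier $\sballlrns{n}$ off the head of $\aL_{in}$ and leaves the rest as $\aL_{out}$; in this situation the corresponding tree is a leaf, and $\aL_t = [\sballlrns{n}]$ concatenated with $\aL_{out}$ trivially matches $\aL_{in}$. For the inductive step on a rule derived from $A_0 \rightarrow \sigma(A_1, \ldots, A_i)\,\sball{n}$, Equation~\eqref{eq:semantics-list} gives $i$ sub-derivations whose intermediate lists chain together, so the inductive hypothesis yields subterms $t_1, \ldots, t_i \in L(A_k)$ with $\aL_{k} = \aL_{t_k} \aL_{k+1}$ (using the left-to-right order of Equation~\eqref{eq:semantics-list}). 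Stitching these together recovers a pre-order listing $\sballlrns{n} :: \aL_{t_1} \cdots \aL_{t_i}$ for the composite term $t = \sigma(t_1,\ldots,t_i)$, and the premise $\phi$ together with the sub-derivations $\NFlreduceTree{\Gamma_k, t_k}{v_k}$ applied to the original semantic rule from $\mathcal{R}_{sem}$ yields the tree-form conclusion.

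For the backward direction ($\Leftarrow$), I would induct on the derivation of $\NFlreduceTree{\Gamma, t}{v}$ in $\mathcal{R}_{sem}$, using arbitrary $\aL_{out}$ to build the listing witness. The rule applied at the root corresponds to some production $A_0 \rightarrow \sigma(A_1, \ldots, A_i)\,\sball{n}$, and the inductive hypothesis gives list-form derivations for each subterm, which can be threaded through in left-to-right order as required by Equation~\eqref{eq:semantics-list} to produce a derivation of $\mathsf{sem}_{A_0}(\langle \Gamma, \sballlrns{n}::\aL_{t_1}\cdots\aL_{t_i}\,\aL_{out}\rangle, \langle v, \aL_{out}\rangle)$. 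A preliminary lemma establishing the analogous correctness statement for the \emph{syntax} rules (Equation~\eqref{eq:syntax-list}) — namely that $\mathsf{syn}_N(\aL_{in}, \aL_{out})$ iff $\aL_{in} = \aL_t \aL_{out}$ for some $t \in L(N)$ — will be used to certify that the listings produced actually arise from valid derivation trees; this is itself a straightforward structural induction, once one observes that the right-to-left ordering in Equation~\eqref{eq:syntax-list} combined with the prepending of $\sballlrns{n}$ correctly builds a pre-order listing.

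The main obstacle I anticipate is bookkeeping around the two different traversal orders: the syntax rules construct the pre-order listing by a right-to-left post-order traversal (with a prepend), while the semantic rules consume it by a left-to-right pre-order traversal. I would handle this by formalizing a single lemma stating that, for any ordering convention, concatenating the per-subterm listings in left-to-right pre-order yields the same overall listing, and then instantiating this lemma on both sides. A secondary subtlety is that the flattened semantic relation takes both an incoming and an outgoing listing, so the induction must be stated uniformly in $\aL_{out}$ rather than for a fixed tail — otherwise the inductive step for composite productions cannot chain the sub-derivations together. Once those invariants are set up, the remaining arguments are routine case analyses over productions.
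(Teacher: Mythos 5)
Your proposal is correct and follows essentially the same route as the paper's proof: induction on the (finite) derivation, with leaf productions as the base case, the chained intermediate listings of Equation~\eqref{eq:semantics-list} driving the inductive step, and the syntax-rule correctness handled by a separate lemma analogous to the one used for Theorem~\ref{tree-correctness}. The only point the paper makes explicit that you leave implicit is that the induction presupposes a finite derivation tree, i.e., termination of the program being evaluated.
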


Theorem~\ref{listing-correctness} states the correctness of the 
flattened term representations, and can be proved using induction on the height of the derivation tree.
The specification query is similar to 
the one given in Equation~\eqref{eq:general-query}, except that the new syntactic and semantic relations 
are used in place of the old ones.

\section{Instantiating \semgus with Various Semantics}
\label{Se:Semantics}

We now proceed to showcase the capabilities of the \semgus framework 
by instantiating it with a variety of semantics to solve imperative
program-synthesis problems.
\cbstart
In \S\ref{Se:vectorized-semantics}, \S\ref{Se:abstract-semantics}, and 
\S\ref{Se:underapproximate-semantics}, we are concerned with various different 
semantics for the imperative programming language $G_{impv}$, from 
Figure~\ref{fig:grm}.
\cbend
Values in $G_{impv}$ range over integers, bitvectors, Boolean values, and arrays.
$G_{impv}$ contains most common imperative structures, such as
assignments, branches and loops.
Imperative grammars that use the same operators but different productions can be 
viewed as being derived from $G_{impv}$, which means that the techniques introduced in
this section are applicable to \emph{any} imperative grammar, as long as they 
use a subset of the operators in $G_{impv}$.

\begin{figure}[t!]
  {
$$
\begin{array}{llcl}
  \mathit{Stmt} & S & ::= &
        \Eassign{x}{E} \mid \Eassign{x}{C} \mid \Earrayupdate{arr}{E}{E} \mid \Eseq{S}{S} \ | \
        \Eifthenelse{B}{S}{S}  \mid \Ewhile{B}{S}\\
  \mathit{BVExpr} & C & ::= &
        x  \mid \bar{0} \mid \bar{1} \mid C~\&~C \mid (C~|~C) \mid\ !C \mid C+C \mid \Eifthenelse{B}{C}{C} \\        
  \mathit{IntExpr} & E & ::= &
        x  \mid 0 \mid 1 \mid x \mid E + E \mid  \Eifthenelse{B}{E}{E} \mid \Earrayaccess{arr}{E} \\
  \mathit{BoolExpr} & B & ::= &
        \Etrue \mid \Efalse \mid \lnot B \mid B \wedge B  \mid E < E  \mid C < C
\end{array}
$$
  }
\caption{The general imperative grammar $G_{\mathit{Impv}}$ that we are interested in.} \label{fig:grm}
\end{figure}

In \S\ref{Se:vectorized-semantics}, we discuss how to instantiate an imperative \semgus 
problem with an alternative exact semantics.
This semantics, called a \emph{vectorized semantics}, sidesteps the 
problem of having to consider 
multiple examples separately.
In \S\ref{Se:abstract-semantics}, we show how \semgus can be instantiated with 
an \emph{abstract semantics} to prove the unrealizablity of a synthesis problem, and 
in \S\ref{Se:underapproximate-semantics}, how an \emph{underapproximating semantics} can 
be used to more efficiently compute solutions for a realizable problem.

\cbstart
Finally, in \S\ref{Se:regex-semantics}, we sketch how \semgus can be
instantiated with a semantics for regular expressions to create a tool
to synthesize regexes from positive and negative examples.
\cbend

\subsection{Instantiating \semgus with an Alternative Exact Semantics}
\label{Se:vectorized-semantics}

A straightforward way of instantiating a \semgus problem is to supply 
$\semgus$ with a standard semantics, as discussed in \S\ref{Se:MotivatingExample} 
and \S\ref{Se:encoding}.
For example, the three rules in Figure \ref{fig:vector-sem-top} are 
standard semantic rules that define the semantics of the terms 
``$\Eassign{x}{e}$'' and ``$\Ewhile{b}{s}$''.
These semantics operate over a 
single state, and compute exact values for all terms in the program.

However, this straightforward approach induces a substantial drawback in the $\mathsf{Query}$ rule
in Equation~\eqref{eq:general-query}.
In each premise of the $\mathsf{Query}$ rule, the solver must re-derive 
proof trees for each example, even though they are all structurally similar
due to sharing the same term representation.

To mitigate this inefficiency, we develop
a different exact semantics, called the \emph{vectorized semantics}, and show 
that \semgus can be instantiated with this semantics as well.
The vectorized semantics modifies the semantics of standard imperative programs 
to accommodate and execute multiple examples simultaneously in the form of vectors.
This idea allows us to merge the examples, as well as the semantic premises 
$\NFlreduceTree{\aT, \ex_1}{o_1}, \cdots, \NFlreduceTree{\aT, \ex_n}{o_n}$ of the $\mathsf{Query}$ 
rule, into a single semantic premise 
$\NFlreduceTree{\aT, \Vec{\ex}}{\Vec{o}}$, where 
$\Vec{\ex}$ and $\Vec{o}$ represent the vectorized input-output examples.

The main challenge in defining a vectorized semantics is that, in the 
presence of loops and conditionals, different examples can cause a given loop 
to run a different number of times, and can take different branches of an 
if-statement.
Here, we note that \semgus is not the cause of these challenges, nor does 
it require the vectorized semantics; rather, \semgus is what 
provides us with the 
possibility of defining different semantics that are better suited
to solving the task at hand.

\begin{figure}[t!]
  \begin{subfigure}{\textwidth}
    {\[
\begin{array}{c}
\infer[\mathsf{Assign}]{\Sreduce{\Eassign{x}{e}}{\Gamma}{\Gamma_r}}
  {\Ereduce{e}{\Gamma}{v} \quad 
  \Gamma_r=\Gamma[x \mapsto v]}
  \quad
\infer[\mathsf{WFalse}]{\Sreduce{\Ewhile{b}{s}}{\Gamma}{\Gamma}}
  {
    \Ereduce{b}{\Gamma}{v_b} \quad v_b = \Efalse
  }
\\[2mm]
\infer[\mathsf{WTrue}]
  {\Sreduce{\Ewhile{b}{s}}{\Gamma}{\Gamma_2}}
  {
    \Ereduce{b}{\Gamma}{v_b} \quad v_b = \Etrue \quad \Sreduce{s}{\Gamma}{\Gamma_1} \quad
    \Sreduce{\Ewhile{b}{s}}{\Gamma_1}{\Gamma_2}
  }
  \end{array}
\]}
  \caption{Standard semantic rules for the terms $\Eassign{x}{e}$ and $\Ewhile{b}{s}$ in $L(G_{\mathit{Impv}})$, 
    where the semantic function is denoted by $\sem{\cdot}$.}
  \label{fig:vector-sem-top}
  \end{subfigure} \newline
\begin{subfigure}{\textwidth}
  {\[
\begin{array}{c}
\infer[\mathsf{Assign_\examples}]{\eSreduce{\Eassign{x}{e}}{\VG}{\VG_r}}
  {\eEreduce{e}{\VG}{\Vv} \quad 
\forall i.\ 
  \VG_r[i]=\VG[i][x \mapsto \Vv[i]]}
  \quad
\infer[\mathsf{WFalse_\examples}]{\eSreduce{\Ewhile{b}{s}}{\VG}{\VG}}
  {
    \eEreduce{b}{\VG}{v_b} \quad \forall i.\ \Vv_b[i] = \Efalse
  }
\\[2mm]
\infer[\mathsf{WTrue_\examples}]
  {\eSreduce{\Ewhile{b}{s}}{\VG}{\Vjoin{\Vproj{\VG}{\lnot \Vv_b}}{\Vproj{\VG_2}{\Vv_b}}}}
  {
    \eEreduce{b}{\VG}{\Vv_b} \quad \exists i.\ \Vv_b[i] = \Etrue \quad \eSreduce{s}{\Vproj{\VG}{\Vv_b}}{\VG_1} \quad
    \eSreduce{\Ewhile{b}{s}}{\VG_1}{\VG_2}
  }
  \\[2mm]
  \Vproj{\VG}{\Vv_b} \defeq [\Eifthenelse{\Vv_b[0]}{\VG[0]}{\bot}$, $\cdots$, $\Eifthenelse{\Vv_b[n-1]}{\VG[n-1]}{\bot}]
  \\[2mm]
  \Vjoin{\Vproj{\VG}{\lnot \Vv_b}}{\Vproj{\VG'}{\Vv_b}} \defeq
  \\[1mm]
  [\Eifthenelse{\Vv_b[0]}{\VG[0]}{\VG'[0]}, \cdots, \Eifthenelse{\Vv_b[n-1]}{\VG[n-1]}{\VG'[n-1]}]
  \end{array}
\]}
  \caption{Sample vectorized semantic rules for the terms $\Eassign{x}{e}$ and $\Ewhile{b}{s}$ in $L(G_{\mathit{Impv}})$, 
  where the (vectorized) semantic function is denoted by $\sem{\cdot}_{\examples}$.
  $\bot$ is a special state that ignores all computation performed.}
  \label{fig:vector-sem-bottom}
\end{subfigure}
  \caption{Standard and vectorized semantics for the terms $\Eassign{x}{e}$ and $\Ewhile{b}{s}$.
  }
\label{fig:vector-sem}
\end{figure}

The three rules in Figure~\ref{fig:vector-sem-bottom} present the big-step semantics for the terms 
$\Eassign{x}{e}$ and $\Ewhile{b}{s}$, the terms 
that are most relevant to overcoming these challenges.
The most interesting rule here is $\mathsf{WTrue_\examples}$.
This rule states that as long as one of the examples in the vector makes the guard 
$b$ true, the body of the loop should be entered.
However, only the variable valuations that make the guard true are updated
in the loop-body $s$ (the $\Vproj{\VG}{\Vv_b}$ operator sets all valuations for which
the guard is false to the special value $\bot$).
The whole process is repeated (using the projected vector of valuations) until
all entries of $\Vv_b$ are $\bot$, as stated in $\mathsf{WFalse_{\examples}}$.
Finally, the vector of valuations in the bottom of the rule contains the \textsc{merge} of
valuations for which the guard was false, and valuations $\VG_2$ that resulted from
running the loop on the valuations $\Vproj{\VG}{\Vv_b}$ for which the guard was true.

The following theorem states the correctness of the vectorized semantics.

\begin{theorem}[Correctness of Vectorized Semantics]
  \label{vectorized-correctness}
Given a set of examples $\examples=[\Gamma_1,\ldots, \Gamma_n]$ and a term $t$,
$\sem{t}_\examples([\Gamma_1,\ldots, \Gamma_n], [\Gamma_1',\ldots, \Gamma_n'])$
if and only if for every $1\leq i\leq n$, we have
$\sem{t}(\Gamma_i,\Gamma_i')$.
\end{theorem}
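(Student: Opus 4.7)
The plan is to prove both directions by induction, with structural induction on the term $t$ serving as the outer scaffolding and an auxiliary induction needed to handle loops. The cases for atomic and non-looping compound terms (variables, constants, $\Eassign{x}{e}$, $\Eseq{s_1}{s_2}$, $\Eifthenelse{b}{s_1}{s_2}$, and boolean/arithmetic expression formers) follow a common template: the vectorized rule is obtained from the standard rule by lifting each premise pointwise, so the induction hypothesis applied componentwise immediately bridges $\sem{t}_\examples$ and the conjunction $\bigwedge_i \sem{t}(\Gamma_i, \Gamma_i')$. The cases $\mathsf{WFalse}$ and $\mathsf{WTrue}$ carry all the interesting content.

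For the forward direction ($\Rightarrow$), I would induct on the height of the derivation of $\sem{t}_\examples(\vec{\Gamma}, \vec{\Gamma'})$. In the $\mathsf{WFalse_\examples}$ case, $\Vv_b[i] = \Efalse$ for every $i$ and $\vec{\Gamma'} = \vec{\Gamma}$, so $\mathsf{WFalse}$ applies at each index. In the $\mathsf{WTrue_\examples}$ case, I would partition the indices by the value of $\Vv_b[i]$: for indices where $\Vv_b[i] = \Etrue$, the induction hypothesis applied to the body derivation $\sem{s}_\examples(\Vproj{\VG}{\Vv_b}, \VG_1)$ and to the recursive loop derivation $\sem{\Ewhile{b}{s}}_\examples(\VG_1, \VG_2)$ assembles into a standard $\mathsf{WTrue}$ derivation of $\sem{\Ewhile{b}{s}}(\Gamma_i, \Gamma_i')$ with $\Gamma_i' = \VG_2[i]$; for indices where $\Vv_b[i] = \Efalse$, the merge returns $\Gamma_i' = \Gamma_i$, matching $\mathsf{WFalse}$ directly.

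For the backward direction ($\Leftarrow$), the main obstacle is that different examples may require very different numbers of loop iterations, yet the vectorized derivation must synchronize them into one proof tree. The plan is to induct on $k = \max_i k_i$, where $k_i$ is the number of loop iterations that example $i$ takes in its individual $\sem{\Ewhile{b}{s}}(\Gamma_i, \Gamma_i')$ derivation (this is well-defined since the individual derivations are finite). If $k = 0$ then every guard is initially false and $\mathsf{WFalse_\examples}$ suffices. Otherwise, I would apply $\mathsf{WTrue_\examples}$: indices with $k_i = 0$ have $\Vv_b[i] = \Efalse$ and get projected to $\bot$, while indices with $k_i > 0$ have $\Vv_b[i] = \Etrue$ and their first body iteration, stitched together componentwise via the induction hypothesis for $s$, produces the state $\VG_1$ feeding the remaining iterations. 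Applying the outer induction hypothesis to $\VG_1$ (whose maximum iteration count is now $k-1$ on the live entries) yields the recursive premise, and the merge recovers the original $\Gamma_i$ for indices that had already exited.

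The hard part will be a bookkeeping lemma about $\bot$: since the projection inserts $\bot$ at indices whose guard was false, I need that $\bot$ is preserved by every vectorized evaluation (of both expressions and statements), so that $\bot$-entries propagate harmlessly through the recursive $\sem{\Ewhile{b}{s}}_\examples$ call and are finally overwritten by the merge with the original $\vec{\Gamma}$. Formally, this is a simple structural induction stating that $\sem{t}_\examples(\vec{\Gamma}, \vec{\Gamma'})$ implies $\vec{\Gamma}[i] = \bot \Rightarrow \vec{\Gamma'}[i] = \bot$, but it is the lemma that makes the projection/merge dance in $\mathsf{WTrue_\examples}$ sound. Once this lemma is in hand, the two inductions outlined above go through uniformly case by case.
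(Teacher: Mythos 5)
Your proposal matches the paper's proof in essentials: both proceed by structural induction on the term, with a nested induction on the number of loop iterations for the $\mathsf{while}$ case, using the $\textsc{Proj}$/$\textsc{Merge}$ bookkeeping to reconcile examples that exit the loop at different times. The only notable difference is that you isolate the $\bot$-propagation property as an explicit lemma to be proved by structural induction, whereas the paper simply stipulates $\sem{t}(\bot,\bot)$ as the defining behavior of the special state $\bot$ — your version is slightly more careful but does not change the argument.
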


When supplying vectorized semantics to a \semgus-with-examples problem, one
should supply a \emph{single} vectorized example that contains all the examples 
from the original example set.
Aside from this difference in how examples should be supplied, the vectorized 
semantics can be treated just like any other semantics, meaning that the CHC-based 
solving procedure from \S\ref{Se:MotivatingExample} and \S\ref{Se:encoding} still 
holds.
Moreover, as stated at the start of this section, 
the vectorized semantics illustrated above can be generated automatically for all 
subgrammars of $G_{impv}$, which allows it to be used as a general optimization 
for solving imperative \semgus problems (as our tool \ourtool does).

\subsection{Using Abstract Semantics in \semgus to Prove Unrealizability}
\label{Se:abstract-semantics}

In this section, we show how the grammar $G_{impv}$ can be instantiated 
with an abstract semantics to prove the unrealizability of \semgus problems, 
following the idea introduced in \S\ref{Se:semgus-abstract}.

There are many abstract semantics with which one can equip a language.
Here, we use the abstract domain $\mathbb{B}_{i}$
presented in \S\ref{Se:semgus-other} 
as an example, which tracks only the $i$-th bit of 
a variable using three values: $\Etrue, \Efalse$ and $\top$ (the join of $\Etrue$ and $\Efalse$).

\begin{example}
	Recall Equation~\eqref{eq:abstract-and}, which represents the abstract semantics for a term 
  $e_1~\&~e_2$ from $G_{ex}$ of \S\ref{Se:MotivatingExample}, using the abstract domain $\mathbb{B}_3$.
  The right-hand side of the final premise describes the abstract semantic function $\sem{\&}^{\#}$ for the 
  operator $\&$, which sends the computation to $\top$ if any of $v_1^\#$ or $v_2^\#$ are $\top$, 
  and computes the exact value otherwise.
  Note how the semantic relations, as well as the structure of the semantic rule, remain unchanged---from 
  the viewpoint of $\semgus$, an abstract semantics expressed using CHCs is merely a different semantics supplied to \semgus, 
  for which one can apply the same solving procedure as given in \S\ref{Se:MotivatingExample} and \S\ref{Se:encoding}.
\end{example}

Different abstract domains have different degrees of efficiency and precision 
in \semgus.
To see why, consider how one would deal with branches using the 
abstract domain described above.
This particular abstract domain cannot handle comparisons well because it 
only tracks a single bit, and thus 
it is almost always the case that one does not know which branch to take
in an if-statement.
There are two possible approaches in this situation---one may just choose to assign 
$\top$ to the result of the branch, or one may try and execute 
both branches and assign their join to the result.
This problem arises for both if-then-else statements and loops.
As an example, two different rules for loop iteration
are described in Example~\ref{ex:abstract-while}.

\begin{example}
  \label{ex:abstract-while}
  Equations~\eqref{eq:abstract-while-imprecise} and~\eqref{eq:abstract-while-precise} present
  different abstract semantics for the term $\Ewhile{b}{s}$ from 
  $G_{ex}$ of \S\ref{Se:MotivatingExample}, 
  using the abstract domain $\mathbb{B}_3$, which tracks only the third bit of each variable.

	\begin{equation}
	\label{eq:abstract-while-imprecise}
    {\infer[\rWT_{\mathsf{Havoc}}^{\#}]
	{\sem{\Ewhile{b}{s}}^\#(\Gamma^\#,\Gamma^\#_r)}
	{
		\sem{b}^\#(\Gamma^\#,v^\#_b)
		\quad
		\sem{s}^\#(\Gamma^\#, \Gamma^\#_1)
		\quad
		\sem{\Ewhile{b}{s}}^\#(\Gamma^\#_1, \Gamma^\#_2)
		\quad \Gamma^\#_r = \top
  }}
 \end{equation}
	\begin{equation}
	\label{eq:abstract-while-precise}
    {\infer[\rWT_{\mathsf{Join}}^{\#}]
	{\sem{\Ewhile{b}{s}}^\#(\Gamma^\#,\Gamma^\#_r)}
	{
		\sem{b}^\#(\Gamma^\#,v^\#_b)
		\quad
		\sem{s}^\#(\Gamma^\#, \Gamma^\#_1)
		\quad
		\sem{\Ewhile{b}{s}}^\#(\Gamma^\#_1, \Gamma^\#_2)
		\quad \Gamma^\#_r = \textsc{join}(\Gamma^\#, \Gamma^\#_2)
    }}
	\end{equation}
  In both scenarios, the value of $v_b$ will be $\top$ because knowing
  only the third bit does not give us enough information to resolve 
  a condition of the term ``$e < e$'' from $G_{ex}$.
	In this situation, the rule $\rWT_{\mathsf{Havoc}}$ simply gives up and assigns 
  $\top$ to the resulting value $\Gamma_r$.
  On the other hand,the rule $\rWT_{\mathsf{Join}}$ attempts to preserve some precision 
  by assigning the join of when the condition evaluates to $\Etrue$ ($\Gamma_2$, as the loop
  iterates in this case) and when the condition evaluates to $\Efalse$ ($\Gamma$, as the loop 
  body does not execute).
  If both $\Gamma$ and $\Gamma_2$ contain $x^\# = \Etrue$, then 
  $\rWT_{\mathsf{join}}$ is capable of inferring that the result of 
  $\Ewhile{b}{s}$ also has $x^\# = \Etrue$, while $\rWT_{\mathsf{Havoc}}$ cannot. 
\end{example}

The semantics expressed by $\rWT_{\mathsf{Join}}$ 
is more precise and more expensive than the first option.
For the example in \S\ref{Se:MotivatingExample}, an abstract semantics 
using $\rWT_{\mathsf{Havoc}}$ will fail to prove unrealizability of synthesizing 
bitwise-xor, because it cannot resolve the branch of the loop.
On the other hand, the added precision from  $\rWT_{\mathsf{Join}}$ succeeds in 
proving unrealizability, showing how different abstract domains 
can solve different $\semgus$ problems.

\cbstart
\jinwoo{Review B: Discussion on the selection of an abstract domain}
In general, there are many different abstract domains that one could use for a \semgus problem, 
as well as automated methods~\cite{learning_abs_synth, abs_synth} to discover them.
An interesting line of future work would be to design an algorithm for extracting abstract domains 
from \semgus proofs (like we did in our selection of the domain $\mathbb{B}_i$)
to solve other \semgus problems more efficiently.
\cbend

\subsection{Using Underapproximating Semantics in \semgus for Program Synthesis}
\label{Se:underapproximate-semantics}

In this section, we demonstrate how $\semgus$ can be equipped with an
\emph{underapproximating semantics} to perform program synthesis, following the 
idea from \S\ref{Se:semgus-underapproximate}.
Example~\ref{ex:under-example} shows an underapproximating semantics that sets a bound on the number of times 
each loop may be executed, as in bounded model checking~\cite{cbmc}.
The change to the semantics is simple---one simply adds a bound to the state 
and decreases the bound by one each time a loop iteration is performed.

\begin{example} 
  \label{ex:under-example}
  Equations~\eqref{eq:while-true-bound} and~\eqref{eq:while-false-bound} present 
  an underapproximating semantics for the term $\Ewhile{b}{s}$, 
  where the number of loop iterations is bounded by a fresh variable $i$.
  \begin{equation}
  \label{eq:while-true-bound}
    {\infer[\mathsf{WTrue}^{\flat}]
    {\sem{\Ewhile{b}{s}}^{\flat}(\langle \Gamma, i \rangle, \langle \Gamma_r, i \rangle)}
  {
    \sem{b}^{\flat}(\langle \Gamma, i \rangle, \Etrue)
    \quad 
    i > 0 \quad 
    \sem{s}^{\flat}(\langle \Gamma, i \rangle, \langle \Gamma', i \rangle)
    \quad
    \sem{\Ewhile{b}{s}}^{\flat}(\langle \Gamma', i{-}1 \rangle, \langle \Gamma_r, i{-}1 \rangle)
  }}
  \end{equation}
  \begin{equation}
  \label{eq:while-false-bound}
    {\infer[\mathsf{WFalse}^{\flat}]
    {\sem{\Ewhile{b}{s}}^{\flat}(\langle \Gamma, i \rangle, \langle \Gamma, i \rangle)}
  {
    \sem{b}^{\flat}(\langle \Gamma, i \rangle, \Efalse)
    \quad 
    i > 0
  }}
  \end{equation}
  
  One can see how these rules are underapproximating by 
  considering why one is unable to build a proof tree for a loop that must execute more iterations than the unrolling bound.
  For example, let the unrolling bound be $i = 1$.
  To prove that $\sem{\Ewhile{b}{s}}^{\flat}(\langle \Gamma, 1 \rangle, \langle \Gamma_r, 1 \rangle)$, i.e., the 
  conclusion with $i = 1$, 
  one would also require a proof for the final premise in the rule, namely 
  $\sem{\Ewhile{b}{s}}^{\flat}(\langle \Gamma',  0 \rangle, \langle \Gamma_r, 0 \rangle)$.
  However, a proof of $\sem{\Ewhile{b}{s}}^{\flat}(\langle \Gamma',  0 \rangle, \langle \Gamma_r, 0 \rangle)$
  requires that $0 > 0$ due to the third premise $i > 0$, which is unsatisfiable.
  Thus, nothing can be proved about $\sem{\Ewhile{b}{s}}^{\flat}(\langle \Gamma',  0 \rangle, \langle \Gamma_r, 0 \rangle)$---corresponding 
  to the fact that $\sem{\Ewhile{b}{s}}^{\flat}(\langle \Gamma',  0 \rangle, \langle \Gamma_r, 0 \rangle)$, and any 
  relations that rely on this premise, are \emph{undefined}.

  In contrast, the semantics described by Equation~\eqref{eq:while-true-bound} match exactly the standard 
  semantics of a while loop for a loop that executes fewer iterations than the unrolling bound.
\end{example}

The constraints that make a semantics underapproximating---for 
example, $i > 0$ in Example~\ref{ex:under-example}---can be encoded in the constraint element 
$\phi$ of a CHC.

\subsection{Instantiating \semgus to Synthesize Regular Expressions}
\label{Se:regex-semantics}

\cbstart
\jinwoo{Shepherd Item 4: An additional domain solvable by \semgus}
In this section, we move away from imperative programs and show how one
can use the \semgus framework to solve problems from a different
domain---namely, regular expressions.
For convenience, in this section we revert to expressing semantics as CHCs
(as done in \S\ref{Se:MotivatingExample}).
We assume a standard grammar for regular expressions.
{
$$
  \mathit{Regex} \ni \ R \ ::= \ \ c \quad \mid \quad \epsilon \quad \mid \quad \phi \quad \mid \quad (R \mid R) \quad \mid \quad R \cdot R \quad \mid \quad R^* 
$$
}
Encoding semantics for regular expressions via CHCs poses an interesting challenge, in that 
the semantics often involves nondeterminism, either when dealing with Kleene star, or finding matching substrings 
for concatenation.
Generally, nondeterminism is naturally dealt with when using only positive examples, where, for example, 
the \emph{existence} of a proof of the relation 
$\RFlreduceTree{\omega, r}{\Etrue}$, where $\omega$ is the string to match and $r$ a regex, 
ensures that there \emph{exists} a run of $r$ that accepts $\omega$.
However, nondeterminism mixes poorly with negative examples, which specify strings that a regex should reject: the existence of a proof of the relation 
$\RFlreduceTree{\omega, r}{\Efalse}$ merely states that there exists a run of $r$ that rejects $\omega$, whereas the semantics of 
regular expressions dictate that \emph{no} run of $r$ should accept $\omega$ for $\omega$ to be rejected.


Fortunately, the expressiveness of CHCs and \semgus allows us to
develop an alternative, deterministic semantics for regexes.
Given an input string $\omega$ of length $n$, the semantics is expressed
in terms of $(n + 1)$-by-$(n + 1)$ upper-triangular matrices of
Boolean variables:
in matrix $X$, an element $X[i, j]$ indicates whether the considered regex
matches the substring $\omega[i, j]$ (as presented in~\cite{rfixer}).
Then, given regexes $r_1$ and $r_2$ that return the matrices $X_1$
and $X_2$, respectively,
the semantics of concatenation is Boolean matrix multiplication---i.e.,
$X_1 \cdot X_2$---where the element-multiplication operation is logical-and,
and element-addition is logical-or.
For the regex $r^*$, the entire set of substrings of $\omega$ matched
by $r^*$ can be computed by taking $X^0 + X^1 + \cdots X^n$, where $X$
is the matrix for substring acceptance by $r$, and the $+$ operator
denotes pointwise logical-or of the operand matrices.
The $+$ operator is also the interpretation of alternation in
regular expressions.


This approach gives us a deterministic semantics for regular expressions, but another interesting 
challenge lies in how these semantics should be concretely embedded in an SMT solver.
One could simply choose to encode the semantics directly using the theory of strings; some SMT solvers also directly 
support regular expressions as part of their theories.
However, it is also the case that strings are poorly supported at best, especially in CHC solvers---in our experiments, 
Z3 would often throw segmentation faults when asked to solve CHCs that contained strings.

For this reason, we again exploit the generality of \semgus, and develop a more solver-friendly semantics by encoding strings as sequences of integers.
The base cases for regular expressions---matching single characters, $\epsilon$, and $\phi$---can be encoded using equality between integers, 
and the rest of the semantics is compositional.
Figure~\ref{fig:regex_final} shows the final rule for the term $r^*$: the input string is encoded as a sequence of integers 
$s_0, s_1, \cdots, s_n$. 
The output is a matrix of Boolean variables $X$ as described above: 
the output argument of $\mathsf{Star_{\epsilon}}$ shows the substring matrix that matches $\epsilon$ (i.e., the identity matrix $I$).
The input parameter $k$ in $\mathsf{Star}$ iterates through $\mathsf{Star}$ to compute $X^k$,
along with the outputs of $\mathsf{sem}_{R}$ which show how concatenation is computed by matrix multiplication.
A regular expression $r$ accepts $\omega$ iff in $X_r$, the matrix for $r$, $X_r[0,n] = \Etrue$.

\begin{figure}
{\footnotesize
    $$
      \infer[\mathsf{Star_{\epsilon}}]{\RFlreduceTree{(s_0, s_1, \cdots, s_n), 0, r^*}{I}}{
      }
    $$

    $$
      \infer[\mathsf{Star}]{\RFlreduceTree{(s_0, s_1, \cdots, s_n), k, r^*}
        {X \cdot X_{k-1} + X_{k-1}}}{
        \RFlreduceTree{(s_0, s_1, \cdots, s_n), k, r}{X} \quad 
        \RFlreduceTree{(s_0, s_1, \cdots, s_n), k - 1, r^*}{X_{k - 1}}
      }
    $$
}
\vspace{-4mm}
  \caption{Semantics for the regular expression $r^*$, where $(s_0, s_1, \cdots s_n)$ represents a sequence of integers that encodes the input string.
} \label{fig:regex_final}
\end{figure} 
\cbend



\section{Implemention and Optimizations}
\label{Se:Implementation}

In this section, we describe our implementation of \ourtool, 
a solver for \semgus problems, as well as some optimizations 
that were applied in \ourtool.

\subsection{Implementing \ourtool}

At a high level, \ourtool accepts \semgus problems and 
encodes them as CHCs using the encoding in $\S\ref{Se:encoding}$.
It then passes the CHCs to Z3~\cite{z3}, which performs the actual proof search 
and produces an answer.
The output from Z3 is either UNSAT, which means the problem is unrealizable, or a proof for
$Realizable$ from Equation~\eqref{eq:general-query} using the inference rules from the \semgus problem: in this case, 
\ourtool can extract a solution to the \semgus problem from the proof.\footnote{
  Z3 may also time out, or produce an error for various reasons, for example when dealing with algebraic datatypes.
}
We note that the capability of \ourtool to synthesize programs also allows 
it to perform CEGIS for both program synthesis and unrealizability, which is unsupported in 
previous work on proving unrealizability~\cite{unreal, semilinear}.

We report here that Z3 itself varied in performance depending on whether 
particular internal flags were enabled.\footnote{
  The particular flags are $\mathsf{fp.xform.slice}, \mathsf{fp.xform.inline\_linear}$, and $\mathsf{fp.xform.inline\_eager}$.
}
While enabling these flags are the default setting for Z3 and result in better performance, 
they also made it difficult to recover the term representation from the output of Z3 
(which is required to synthesize a term).
Thus, during our evaluation in \S\ref{Se:Evaluation}, we disabled the flags; \ourtool can also 
be configured to run with the flags enabled.


%

In \S\ref{Se:encoding}, we looked at different ways of translating  \semgus 
problems into CHCs depending on whether trees or listings are used to 
represent terms.
\ourtool supports three configurations for representing terms---a configuration that uses 
algebraic datatypes to model trees, and two configurations that respectively use 
lists and arrays to encode listings.
In addition, \ourtool also implements a $\semgus$-specific optimization 
called the \emph{fused semantics} , described in \S\ref{Se:fused-semantics}.
For regular expressions, \ourtool simply implements the semantics in \S\ref{Se:regex-semantics} 
with terms encoded as arrays.

\cbstart
\jinwoo{Shepherd item 5: Enumerative Solver}
We also implemented an enumerative solver that uses the same CHC formalism, 
but replaces the syntax relations with concrete term representations (encoded using arrays) 
instead. This  solver enumerates terms and checks whether they are correct using the given semantics---i.e.,
for each enumerated term the CHCs are then again passed to Z3, which will verify whether the given term is correct or not.
The enumerative solver can be treated as a baseline alternative solver for \semgus problems that 
plugs into \ourtool (which is responsible for generating CHC scripts for Z3 to solve).

\jinwoo{Shepherd item 2: What does it take to implement a new \semgus domain?}
As presented throughout our paper, \ourtool supports various \semgus problems defined over many different 
domains and term representations.
Implementing a new domain for \semgus problems consists of two steps: 
a theoretical step where one develops a semantics for the domain on paper, then an implementation step 
to actually implement the domain.
The generality of \semgus often results in the first step requiring much more thought and effort: as shown in \S\ref{Se:Evaluation}, 
different ways of encoding semantics can lead to big differences in performance, 
and one must also consider the correctness of the semantics with respect to CHC solving,
as discussed in \S\ref{Se:regex-semantics}.

On the other hand, once the first step is completed, or if one is able to 
use a standard semantics, expressing the semantics as concrete semantics is a routine task.
For example, the vectorized semantics detailed in \S\ref{Se:vectorized-semantics} totals
around 350 lines of code in our implementation;
the regex semantics from \S\ref{Se:regex-semantics} totals around 100 lines,\footnote{
  The semantics were written in a DSL we designed to interface with the SMT-LIB CHC format for easy development.
}
which mostly implement pattern matching on terms and encoding SMT queries.
Moreover, given a library of various semantics for operators, defining a new \semgus problem is just as easy as 
defining a \sygus problem.
\cbend

\subsection{Optimizing Imperative \semgus Problems with Fused Semantics}
\label{Se:fused-semantics}

$\ourtool$ offers an optimization that utilizes a slightly different method of encoding 
syntax and semantic rules: instead
of building a term using the syntax rules and propagating 
it through the semantic rules separately, one can also think of a scheme
where the semantics of a term is executed on-the-fly while the
term is being constructed.
We refer to this kind of encoding as the \emph{fused semantics}.
Fused semantics are \emph{different} from supplying \semgus with a different semantics, 
because they are \emph{derived} from an original semantics that \semgus 
is supplied with.
Instead, one may think of them as an \emph{optimization} for \semgus problems 
over subgrammars of $G_{impv}$.

The key idea for fused semantics is to modify the syntax relations
so that they can check semantics as well as the syntactic structure,
and modify the syntax rules accordingly as well.
Thus, a syntax relation is now defined over three inputs---a term $t$, an input state $\Gamma$, and an output value $v$.
The relation should evaluate to $\Etrue$ if and only if
$t$ is a valid term, and $\sem{t}(\Gamma, v)$ is also $\Etrue$.
Generally, the syntax rule for a production
$A_0 \rightarrow \sigma(A_1, \cdots, A_i) \sball{n}$, again equipped with the semantics 
$\phi \wedge \mathsf{sem}_{A_1}(\langle \Gamma_1, t_1 \rangle, v_1), \cdots, \mathsf{sem}_{A_i}(\langle \Gamma_i, t_i \rangle, v_i)
\implies \mathsf{sem}_{A_0}(\langle \Gamma, t \rangle, v_0)$, 
can be generated in the
form of Equation~\eqref{eq:fused-rule}:
\cbstart
the structure of Equation~\eqref{eq:fused-rule} matches \emph{exactly} the structure of the supplied semantics 
(using the listing representation of terms).
\begin{equation}
  \label{eq:fused-rule}
  \infer[syntax^{\mathsf{fused}}_{A_0 \rightarrow \sigma(A_1, \cdots, A_i) \sball{n}}]
  {\mathsf{syn}^{\mathsf{fused}}_{A_0}(\langle \Gamma, \aL_{in} \rangle, \langle v, \sballlrns{n}::\aL_{out} \rangle)}
  {
    \phi \quad
    \mathsf{syn}^{\mathsf{fused}}_{A_1}(\langle \Gamma_1, \aL_{in} \rangle, \langle v_1, \aL_{mid} \rangle) \cdots 
    \mathsf{syn}^{\mathsf{fused}}_{A_i}(\langle \Gamma_i, \aL_{mid} \rangle, \langle v_i, \aL_{out} \rangle)
  }
\end{equation}

\jinwoo{Review B, E: Fused semantics and syntax relations}
One may ask how the fused semantics is different from a set of CHCs where the syntax relations have been eliminated, and the 
ordinary semantic rules implicitly check the syntax of a term while computing its semantics as well.
They are indeed similar, but there is a subtle difference in the order of the arguments: the semantics from 
Equation~\eqref{eq:fused-rule} have $\aL_{in}$ as the first argument, and $\aL_{out}$ as the second; whereas 
ordinary semantic rules have the list order reversed.\footnote{
  This subtle difference also has a difference in performance---syntax-relation-less CHCs perform similarly to 
  \semgus problems with syntax relations, which perform worse compared to the fused semantics as shown in \S\ref{Se:Evaluation}.
}
Intuitively, the specification of the fused semantics constructs a term while
computing the semantics on the fly, while the specification of the
ordinary semantics can only compute the semantics for a fully
constructed term.
\cbend

The new encoding presented in Equation~\eqref{eq:fused-rule} is enough to allow 
\emph{only the syntax rules} to describe both the syntax and semantics of terms within a \semgus problem, 
provided that the grammar does not contain productions with while loops.
However, productions that contain loops, such as $N \rightarrow \Ewhile{B}{S} \sball{n}$, 
require a separate procedure because
there must be a guarantee that the \emph{same loop body} is synthesized for each iteration.
To ensure that the same loop body is synthesized, one can either impose an additional constraint that states that
each synthesized loop body must be identical, or more simply, one can apply the
semantic relations described from $\S\ref{Se:encoding}$ instead.
\begin{equation}
\label{eq:fused-loop}
  \infer[syntax^{\mathsf{fused}}_{N \rightarrow \Ewhile{B}{S} \sballns{n}}]{\NFrreduceFused{\langle \Gamma, \Ewhile{b}{s} \rangle}{\Gamma_2}}
{
 \begin{array}{l}
  \BFrreduceFused{\langle \Gamma, b \rangle}{\Etrue} \quad
  \SFrreduceFused{\langle \Gamma, s \rangle}{\Gamma_1} \quad
   \NFlreduceTree{\Gamma_1, \Ewhile{b}{s}}{\Gamma_2}
 \end{array}
}
\end{equation}
Consider the rule 
given in Equation~\eqref{eq:fused-loop}.
Note that the first two relations from the premise are syntax relations that
both synthesize a term and execute its semantics.
In contrast, the third relation is a semantic relation, which is defined identically
to the semantic relations in $\S\ref{Se:MotivatingExample}$.
The semantic relations do not suffer from the problem of having to synthesize the
same loop body over multiple iterations.
The idea here is that the syntax relations synthesize the loop body on the first iteration,
then pass the representation to the semantic relations for subsequent iterations.

Finally, multiple $\mathsf{sem}_N$ premises
in the $\mathsf{Query}$ rule must be rewritten as $\mathsf{syn}^{\mathsf{fused}}_N$ as well;
when using non-vectorized semantics, this approach raises the same problem of potentially synthesizing different
\cbstart
\jinwoo{Review E: can the fused semantics be used with non-vectorized semantics as well?}
solutions for each example, and necessitates a constraint 
to ensure that all generated representations are identical.
Although it is possible to use fused semantics for non-vectorized semantics in this way,
$\ourtool$ employs the fused semantics as an optimization
to vectorized semantics only, which does not suffer from this problem,
because there is only a single vector of examples.
\cbend

\begin{theorem}[Soundness and Completeness of Fused Semantics]
   \label{fused-correctness}
   Let $\mathcal{R}_{syn}^{\mathsf{fused}}$ denote a set of fused syntax rules created according to the fused-semantic optimization
   described in \S\ref{Se:fused-semantics}, from a grammar $G$ equipped with a semantics $\mathcal{R}_{sem}$.
   Then for any nonterminal $N \in G$, $\NFrreduceFused{\langle \Gamma, t \rangle}{v}$ is a theorem over
  $\mathcal{R}_{syn}^{\mathsf{fused}}$ and $\mathcal{R}_{sem}$ if and only if $t \in L(N)$ and
  $\NFlreduceTree{\Gamma, t}{v}$ is a theorem over $\mathcal{R}_{sem}$.
\end{theorem}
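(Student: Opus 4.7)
The plan is to prove both directions by induction, where the forward direction (soundness) proceeds by induction on the height of the fused-syntax derivation of $\NFrreduceFused{\langle \Gamma, t\rangle}{v}$, and the backward direction (completeness) proceeds by lexicographic induction on the pair (size of $t$, height of the $\mathsf{sem}$ derivation). In both directions, the key observation is that the structure of the fused rule in Equation~\eqref{eq:fused-rule} mirrors exactly the structure of the semantic rule it was derived from, with the constraint $\phi$ unchanged, so the inductive step is essentially a syntactic rearrangement of premises.

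For the forward direction, I consider the last rule applied in the fused-syntax derivation. If it corresponds to a non-loop production $A_0 \to \sigma(A_1, \ldots, A_i)\sball{n}$, then its premises yield $\mathsf{syn}^{\mathsf{fused}}_{A_j}(\langle \Gamma_j, t_j\rangle, v_j)$ for each $j$. By the induction hypothesis, each $t_j \in L(A_j)$ and $\mathsf{sem}_{A_j}(\Gamma_j, t_j, v_j)$ holds. Membership $t = \sigma(t_1, \ldots, t_i) \in L(A_0)$ follows from the definition of $L$, and applying the original semantic rule for this production---whose constraint $\phi$ is shared with the fused rule---yields $\mathsf{sem}_{A_0}(\Gamma, t, v)$. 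For the while-loop rule in Equation~\eqref{eq:fused-loop}, the first two premises (after applying IH) give syntactic validity of $b$ and $s$ together with a one-step semantic evaluation, while the third premise is already an ordinary semantic premise; combining them reconstructs a $\mathsf{sem}_N$ derivation matching the standard $\mathsf{WTrue}$ rule.

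For the backward direction, I assume $t \in L(N)$ and a $\mathsf{sem}_N(\Gamma, t, v)$ derivation, and construct a corresponding fused-syntax derivation. Since $t \in L(N)$, it has the form $\sigma(t_1, \ldots, t_i)$ for some production of $N$. For non-loop productions, the semantic derivation decomposes into subderivations $\mathsf{sem}_{A_j}(\Gamma_j, t_j, v_j)$ and a constraint $\phi$; the IH yields fused-syntax derivations for each subterm, which I assemble with the fused rule (reusing $\phi$ verbatim). For the while-loop case, I perform a sub-induction on the number of loop iterations: for zero iterations the $\mathsf{WFalse}$ rule applies unchanged, while for $k+1$ iterations I invoke the IH on $b$ and $s$ (smaller subterms) to obtain fused-syntax premises, and pair them with the remaining $k$-iteration semantic derivation, which is directly usable as the third premise of Equation~\eqref{eq:fused-loop}.

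The main obstacle is the while-loop case, since the fused-loop rule awkwardly mixes fused-syntax premises for $b$ and $s$ with an ordinary $\mathsf{sem}_N$ premise for subsequent iterations. One must verify that this mixture does not cause loss of information: in particular, the recursive premise $\NFlreduceTree{\Gamma_1, \Ewhile{b}{s}}{\Gamma_2}$ carries the \emph{same} term $\Ewhile{b}{s}$ forward, so the loop body is not re-synthesized per iteration (which would permit a distinct body each time and break soundness). The argument is that the fused syntax relation commits to a specific $s$ once, and the semantic premise then replays this exact $s$ for all remaining iterations, exactly matching the behavior of the original big-step $\mathsf{WTrue}$/$\mathsf{WFalse}$ rules. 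This observation closes both directions of the proof.
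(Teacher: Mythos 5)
Your proposal is correct and follows essentially the same route as the paper's proof: both directions are handled by induction on derivation height, exploiting the fact that each fused rule mirrors the structure of the semantic rule it was generated from (with $\phi$ unchanged), and the while-loop case is closed by observing that the recursive premise of Equation~\eqref{eq:fused-loop} is an ordinary $\mathsf{sem}$ premise carrying the already-committed loop body, so the derivation height still decreases for terminating loops. Your use of a lexicographic measure for the completeness direction and an explicit sub-induction on iteration count is a minor packaging difference from the paper's single height induction, not a substantively different argument.
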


Theorem~\ref{fused-correctness} states the soundness and completeness of the fused-semantics optimization.



\section{Evaluation}
\label{Se:Evaluation}

In this section, we evaluate the feasibility of our algorithm to solve \semgus problems 
through our implementation \ourtool.
Specifically, we investigate the following four issues:

\begin{description}
  \item \q1: We evaluate the effectiveness of \ourtool on \sygus benchmarks. 
  \item \q2: We evaluate the effectiveness of \ourtool on imperative program-synthesis problems.  
  \item \q3: We evaluate the effectiveness of the optimizations discussed in \S\ref{Se:encoding} and \S\ref{Se:Implementation}. 
  \item \q4: We evaluate the effectiveness of approximate semantics supplied to \ourtool.
    \cbstart
  \item \q5: We evaluate the effectiveness of \ourtool on regular-expression benchmarks.
    \cbend
\end{description}

Overall, our evaluation is tilted towards proving unrealizability compared to synthesizing programs.
This is because because there already exist many 
program synthesizers that incorporate multiple years of engineering effort~\cite{cvc4_orig, cvc4, sketch}; 
it is beyond the scope of this paper and \ourtool 
to directly compete with these synthesizers.



\subsection{Benchmarks}
\label{Se:evaluation-benchmarks}

We performed our evaluation using three sets of benchmarks.

\cbstart
\jinwoo{Review C: Where do these benchmarks come from?}
The first set consists of 132 unrealizable variants
of the 60 LIA (Linear Integer Arithmetic) benchmarks from the LIA \sygus competition track.
These benchmarks were generated by Hu et al.\ \cite{unreal} and have been used as benchmarks for unrealizability
in previous work~\cite{unreal, semilinear}.
These benchmarks originate from \sygus problems 
where the goal is to prove that a synthesized solution is optimal with respect to some metric~\cite{qsygus} 
(e.g., minimizing the number of If-Then-Else operators).
One can prove optimality by proving that a solution 
with a lower score is unrealizable.
\cbend

In each of the benchmarks, the grammar that specifies the search space is recursive, and hence
generates infinitely many LIA terms. These benchmarks are unrealizable because they contain grammars that
restrict how many times a certain operator (e.g., plus or if-then-else) can appear in the solution.
To see how effective \ourtool is as a synthesizer, we also test \ourtool on the 
60 original LIA \sygus benchmarks in \S\ref{Se:evaluation-sygus}.
These benchmarks have a completely unrestricted grammar, as opposed to the 132 unrealizable 
variants generated from them.

The second set consists of 289 imperative \semgus problems defined over various 
fragments of the imperative grammar $G_{impv}$.
Out of these, 36 benchmarks were created by hand from common imperative programming questions, 
such as synthesizing a Fibonacci function or swapping variables using bitwise-xor.
The remaining 253 benchmarks were derived by using the 30 benchmarks employed in a previous paper 
on synthesizing imperative programs via enumeration~\cite{SoO17} as a template.
Out of the 30 templates, we ignored 7 that contained division, on which Z3 would return an error, 
and derived 11 benchmarks from each of the 23 remaining templates for a total of 253 benchmarks.
The 23 base templates consist largely of two categories: those that compute a function over a 
range of numbers $1$ to $n$ using a loop (such as factorials or sums), and those that 
compute a function over an array, again using a loop to iterate 
(such as finding the maximum element of an array, or adding two arrays together).
To derive our benchmarks, we first instantiated \semgus with the problem specification and 
the unbounded grammar $G_{impv}$ with a restriction on the number of loops: the grammar 
in this case replicates the templates used to specify the search space from~\cite{SoO17}.
Then, various restrictions were imposed on the grammar, such as 
limiting the number of statements allowed, or limiting the kinds of expressions that can occur as the 
loop condition.
Out of the 11 benchmarks generated from each template, 
2 were designed to be realizable, and 9 to be unrealizable.
We developed our own set of benchmarks this way because the unrealizability of imperative programs
is a previously unstudied field.

\cbstart
The final set of benchmarks consists of the 25 regular-expression problems from textbooks on automata theory, 
where the specification is given as a set of positive and negative examples, averaging around 10 examples total~\cite{alpharegex}.
\cbend

Each benchmark was given 10 minutes to complete on a machine with a 2.6GHz Intel Xeon processor 
with 32GB of RAM, with version 4.8.9 of Z3 as the external CHC solver.
We note that the front-end processing step, to encode a \semgus problem into CHCs, took less than 3 minutes for all of our 
benchmarks and configurations combined; the 10-minute timeout was separate from the front-end processing step, and devoted entirely 
to CHC solving.

Table~\ref{table:summary_all} summarizes the numbers of solved benchmarks for various configurations of \ourtool we tested, 
as well as comparisons for \nay~\cite{semilinear}, ESolver~\cite{alur2017esolver}, CVC4~\cite{cvc4}, SIMPL~\cite{SoO17}, and 
AlphaRegex~\cite{alpharegex}.
\begin{table}
 \caption{Number of solved benchmarks for various configurations of \ourtool, 
  alongside results for \nay, ESolver, CVC4, and SIMPL. \xmark \ indicates cases where the tool is non-applicable. 
  SIMPL could only be evaluated on 23 realizable imperative benchmarks, 
  because SIMPL cannot accept a grammar.}
  \begin{tabular}{cc|rr|rr|r}
    & \multirow{2}{*}{\textbf{Solver}} & \multicolumn{2}{c|}{\sygus} & \multicolumn{2}{c|}{Imperative} & \multicolumn{1}{c}{Regex}\\
      & & \multicolumn{1}{c}{Realizable} & \multicolumn{1}{c|}{Unrealizable} & \multicolumn{1}{c}{Realizable} & 
      \multicolumn{1}{c|}{Unrealizable} & \multicolumn{1}{c}{Realizable}\\ \hline\hline
      & \multicolumn{1}{c|}{\nay} & \xmark & 70 & \xmark & \xmark & \xmark \\
      & \multicolumn{1}{c|}{ESolver} & 6 & \xmark & \xmark & \xmark & \xmark \\
      & \multicolumn{1}{c|}{CVC4} & 59 & \xmark & \xmark & \xmark & \xmark \\
      & \multicolumn{1}{c|}{SIMPL} & \xmark & \xmark & $23^*$ & \xmark & \xmark \\ 
      & \multicolumn{1}{c|}{AlphaRegex} & \xmark & \xmark & \xmark & \xmark & 25 \\ \hline
      \parbox[t]{1mm}{\multirow{7}{*}{\rotatebox[origin=c]{90}{\ourtool}}} \ \ & Total & 4 & 66 & 8 & 112 & 5 \\ \cdashline{2-7}
      & Fused Trees & 3 & 66 & 5 & 31 & \xmark \\ 
      & Fused Lists & 4 & 66 & 5 & 62 & \xmark \\ 
      & Fused Arrays & 2 & 64 & 6 & 91 & \xmark \\ 
      & Vectorized Arrays & 2 & 66 & 5 & 56 & \xmark \\ 
      & Individual  & 0 & 56 & 3 & 10 & 0 \\ \cdashline{2-7}
      & Abstract & \xmark & 18 & \xmark & 37 & \xmark \\ 
      & Underapproximate & \xmark & \xmark & 6 & \xmark & \xmark \\ \cdashline{2-7}
      & Enumerative & \xmark & \xmark & 2 & \xmark & 5 \\ \hline\hline
      & Total benchmarks  & 60 & 132 & 67 & 222 & 25 \\
  \end{tabular}
  \label{table:summary_all}
\end{table}

\subsection{\q1: Evaluating \ourtool on \sygus Benchmarks}
\label{Se:evaluation-sygus}

\cbstart
\jinwoo{Review C: CVC4 settings, Shepherd item 5: Enumerative solver}
In this section, we evaluate the effectiveness of \ourtool on \sygus benchmarks by 
comparing it against \nay, the state-of-the-art tool for proving unrealizablity for \sygus problems, 
and against the \sygus synthesizers CVC4, which was run using the default settings from \sygus-COMP 2019~\cite{cvc4sy}, and ESolver.
It is worth noting that both CVC4 and ESolver solve synthesis problems directly, while \ourtool relies on an external 
CEGIS loop to generate examples.
We did not test the enumerative \semgus solver described in \S\ref{Se:Implementation} for \sygus benchmarks, 
because ESolver already serves as a basic enumerator.
\cbend

Like \ourtool, \nay checks the (un)realizability of a \sygus problem when the specification is given as 
a set of examples; unlike \ourtool, however, \nay cannot synthesize a solution for realizable problems.
We used the set of 132 unrealizable \sygus benchmarks described in 
\S\ref{Se:evaluation-benchmarks} for evaluation.
We report that \ourtool or \nay solves a problem if it can solve a problem using any of its 
configurations---for \ourtool, this encompasses the three different term representations, 
as well as the different semantics that \semgus can be instantiated with.

We implemented a CEGIS algorithm for \ourtool, and compared it against the CEGIS algorithm of \nay.
Because \nay is incapable of synthesizing an answer to realizable a \sygus problem, 
the CEGIS loop of \nay relies on an external synthesizer ESolver to produce a term.
Because \ourtool and \nay rely on different methods to produce counterexamples,
their CEGIS iterations may differ.

	With the standard CEGIS algorithm, \name can prove 61/132 benchmarks unrealizable, while \nay can do so for 65/132.
	\nay also provides a modified ``random'' variant of CEGIS 
	that allows random examples to be added to the set of counterexamples throughout
	the CEGIS loop. Using this technique, \nay can prove unrealizability for an additional 5 benchmarks.
	We ran \ourtool on the same set of examples produced using this technique; it was able to
	prove unrealizability for the same 5 benchmarks as well, for a total of 66/132 benchmarks.
	There are 67 benchmarks where both solvers timed out---stuck at some iteration of the CEGIS loop. On 17 
	of them, \name can
	complete more iterations of the CEGIS loop than \nay (avg. 6.1 for \ourtool vs.\ 4.8 for \nay). 
  On 13 of them, \nay can progress further (avg. 2.2 for \ourtool vs.\ 3.1 for \nay). The two solvers were stuck at the same iteration on the rest of the benchmarks.
	Table~\ref{Ta:results} in Appendix~\ref{Appendix:results} presents a detailed comparison of the runtimes for each tool, 
	on benchmarks from the \textsc{LimitedIf} and \textsc{LimitedPlus} categories.

Next, we compared the abilities of \ourtool as a \sygus synthesizer 
on the 60 original LIA benchmarks upon which the unrealizable benchmarks were derived.
\ourtool solved 4/60 benchmarks.
\cbstart
\jinwoo{Review E: Addressed ESolver version}
This number is comparable to ESolver, which solved 6/60 benchmarks
\cbend
and was the winner of the first \sygus competition in 2014.
Moreover, \ourtool 
solved one benchmark that ESolver could not solve.
While \ourtool is not competitive with current \sygus solvers, such as CVC4~\cite{cvc4}, which solved 59/60 
benchmarks, the fact that its performance is already comparable with an early version of a \sygus solver 
is encouraging, and one might hope that more efficient algorithms for synthesizing solutions to \semgus problems are 
possible in the future.\footnote{
  We also note that the LIA \sygus benchmarks have an entirely free grammar and are single-invocation, which allows 
  CVC4 to use a specialized method involving quantifier elimination to synthesize programs.
}

\ourtool was efficient at proving unrealizability: 56 out of the 66 benchmarks solved were solved in under 
10 seconds, 
and \ourtool also has comparable runtimes with \nay.
For synthesizing programs, ESolver solved all solvable 
benchmarks in under two minutes each, while \ourtool required 6 minutes for two of the solved benchmarks.

\textbf{To answer \q1:} \ourtool \emph{is quite effective on unrealizable \sygus problems, to a degree that 
is comparable with \nay, and can also synthesize solutions for realizable \sygus problems}:
in particular, \ourtool is more general than previous tools as it can solve non-\sygus problems, and can 
produce two-sided answers to synthesis problem.

\subsection{\q2: Evaluating \ourtool on Imperative Synthesis Benchmarks}
\label{Se:evaluation-imperative}

In this section, we evaluate the effectiveness of \ourtool by seeing how well 
it can deal with imperative synthesis benchmarks.
We consider  \semgus-with-examples problems, as opposed to 
ordinary \semgus problems, due to the challenges of checking whether an imperative program 
satisfies a specification or not (which makes it difficult to implement a CEGIS loop).
In principle, one could implement a CEGIS loop using an external verifier.

Out of our 289 imperative benchmarks, a total of 67 were designed to be realizable, while 
the remaining 222 were designed to be unrealizable.
As shown in Table~\ref{table:summary_all}, \ourtool solved 8/67 realizable benchmarks, and 112/222 unrealizable 
benchmarks, for a total of 120 benchmarks solved.

\cbstart
\jinwoo{Shepherd item 3: Discussion of solved benchmarks}
Out of the 120 solved benchmarks, 10 benchmarks---2 realizable, and 8 unrealizable---were those with infinite syntactic search spaces and also 
contained the possibility of an infinite loop.
\ourtool also solved 15 benchmarks that did not contain loops, but nevertheless 
had infinite syntactic search spaces.
Overall, \ourtool had more success with proving unrealizability than synthesizing programs for both 
\sygus and imperative benchmarks, especially when a simple lemma proved sufficient for showing 
unrealizablity of the whole \semgus problem (such as the example given in \S\ref{Se:MotivatingExample}).

One benchmark proved unrealizable in a similar manner uses the grammar shown in Figure~\ref{fig:ex_grm1}, 
where the specification was to compute the sum of integers from $1$ to $x$ and store it in $y$.
Here, Z3 inferred the lemma that for the input example $[(x, y)] = [(2, 0)]$,
the values of $x$ and $y$ remain even regardless of the program 
being considered, because the constant $1$ is not present in the grammar---and thus cannot reach the correct output $y = 3$.

\begin{figure}[t!]
  \begin{subfigure}{0.47\textwidth}
    $$
    \begin{array}{llcl}
       Start & ::= & \Ewhile{B}{S} \\
       S & ::= & \Eassign{x}{E} \mid \Eassign{y}{E} \mid \Eseq{S}{S} \\
       E & ::= & x \mid y \mid E + E \mid  E - E \\
       B & ::= & E < E
    \end{array}
    $$
    \caption{A grammar for a benchmark where the goal is to synthesize a program adding integers from $1$ to $n$.} \label{fig:ex_grm1}
  \end{subfigure} \hspace{0.04\textwidth}
  \begin{subfigure}{0.47\textwidth}
    $$
    \begin{array}{llcl}
       Start & ::= & \Ewhile{B}{S} \\
       S & ::= & \Eassign{x}{E} \mid \Eseq{S}{S} \\
       E & ::= & x \mid y \mid E + E \mid 0 \mid 1 \mid -1 \\
       B & ::= & E < E
    \end{array}
    $$
    \caption{A grammar for a benchmark where the goal is to synthesize a program resulting in $x == y$.} \label{fig:ex_grm2}
  \end{subfigure}
  \caption{Grammars for example benchmarks that were proved unrealizable.}
\end{figure}

An example of a lemma that makes use of multiple input examples comes from a benchmark that uses the grammar shown in Figure~\ref{fig:ex_grm2},
where the goal was to set $x$ to be equal to $y$ by modifying $x$
in a while loop. This problem is unrealizable because the value of $x$ can increase
or decrease as the loop iterates, but cannot do both (based on what input is given).
When given the input examples $[(x, y)] = [(12, 20), (20, 12)]$, Z3 infers the lemma that ultimately when the loop terminates, 
``$x > 12 \ \mathsf{or} \ x < 20$''.
If $x > 12$, the second example is unsatisfiable, and if $x < 20$, the first example is unsatisfiable---thus 
Z3 is able to prove the whole synthesis problem as unrealizable.
\cbend

Another reason why \ourtool performs better on unrealizable problems is in
part due to how the generated CHCs are dealt with internally in Z3---as described above, 
Z3 proves unrealizability by discovering a lemma that conflicts with the specification.
For realizable problems, however, Z3 in the worst case must conduct a search over all possible 
concrete terms from a possibly infinite search space, in a process similar to generate-and-test.
The authors are unsure of whether Z3 is capable of discovering lemmas that can be used to prune the 
search space for realizable benchmarks; regardless of the answer, the results suggest that 
program execution expressed as CHC solving introduces overhead that
is large enough to make synthesis relatively more difficult compared to proving unrealizability.


\cbstart
\jinwoo{Shepherd item 5: Enumerative solver}
The baseline enumerative solver for \semgus succeeded in solving only 2 realizable benchmarks.
Interestingly, the runtimes for enumerative scripts (each of which verifies a candidate term) displayed a high amount of variance, 
with some scripts terminating in under a second, and others exceeding the timeout of 10 minutes.
There was no clear pattern behind the variation, except that terms containing loops tended to time out more compared to those  that did not contain loops.
This result is in line with the hypothesis about overhead in program execution expressed as CHC solving above---it 
seems that the internal algorithms of a CHC solver suffer for
the relatively simpler task of checking whether a proof tree is correct.
\cbend

In contrast, SIMPL~\cite{SoO17}, whose benchmarks we use in our evaluation of \ourtool, 
employs a strategy of performing static analysis in tandem with enumeration; 
SIMPL also employs heuristics that prefer smaller programs, and 
directly executes candidates to see if the specification is met.
This enumerative approach makes SIMPL perform better as a synthesizer:
SIMPL solves the full set of 23 realizable benchmarks upon which our benchmarks are based, 
while \ourtool can solve none.
However, SIMPL is incapable of 
proving unrealizability, because it is based on enumeration.
One also cannot express a syntactic search space in SIMPL outside of simple templates, which 
prevented us from running the rest of our realizable benchmarks on SIMPL.

\ourtool took less than 10 seconds to solve  82/120 benchmarks, and the other 12 benchmarks required more than a 
  minute to complete.
  Whether a benchmark contained an unbounded loop or an infinite search space seemed to have little correlation with the 
  runtimes: there were finite-search space benchmarks that took over a minute to complete, and benchmarks with both unbounded loops 
  and infinite search spaces that took less than a second to complete.
  This phenomenon suggests the importance of discovered lemmas in solving \semgus problems: given a strong lemma, 
  a \semgus problem can be solved quickly, even with infinite loops and search spaces.
  On the other hand, without a lemma, the problem can take a long time to solve even if the search space is finite.

\textbf{To answer \q2:} \ourtool \emph{is capable of solving \semgus problems with infinite search 
spaces and imperative semantics, especially if the given problem is unrealizable.}
Notably, \ourtool is the first tool that can prove unrealizablity for imperative synthesis problems.

\subsection{\q3: Evaluating Optimized Methods for Solving \semgus Problems}
\label{Se:evaluation-optimizations}

In this section, we compare the effectiveness of the various optimizations we described in 
$\S\ref{Se:encoding}$ and $\S\ref{Se:Implementation}$ for solving \semgus problems.
Specifically, we investigate the following two issues:

\begin{enumerate}
  \item We assess the effectiveness of the flattened term representation from \S\ref{Se:listing-sol}, by 
    comparing the performance of \ourtool configured to use trees, lists, and arrays as the term representation.
  \item We assess the effectiveness of vectorized and fused semantics, by comparing 
    the performance of \ourtool on (i) individual semantics, (ii) vectorized but non-fused semantics, and (iii) 
    vectorized and fused semantics.
\end{enumerate}

\subsubsection*{Effectiveness of Flattened Term Representations}

To evaluate the effectiveness of the three term representations, 
we supplied \semgus with vectorized semantics and 
enabled the fused-semantic-optimization, the configuration that yielded the overall best results 
in our evaluation.
In this section, we say that a particular term representation 
``solved'' a benchmark if Z3 was able to solve the CHCs produced by encoding the \semgus 
problem using the given term representation.

\cbstart
\jinwoo{Review E: Clarify Table 1}
The `Fused Trees', `Fused Lists', and `Fused Arrays' rows of Table~\ref{table:summary_all} summarize the results for the different 
\cbend
term representations: for \sygus benchmarks, all three representations were similar.
For imperative benchmarks, the array representation is clearly better compared to the list and 
tree representations: in particular, the tree representation could only solve one benchmark that 
the array representation could not solve, while the list representation solved 
a strict subset of the benchmarks solved by the array representation.

Figures~\ref{fig:lists_vs_trees} and~\ref{fig:lists_vs_arrays} compare the performances of
the list versus tree representations, 
and the list versus array representations, on imperative benchmarks solved by at 
least one of the representations.

\begin{figure}
  \begin{subfigure}{0.31\textwidth}
    \includegraphics[width=\linewidth]{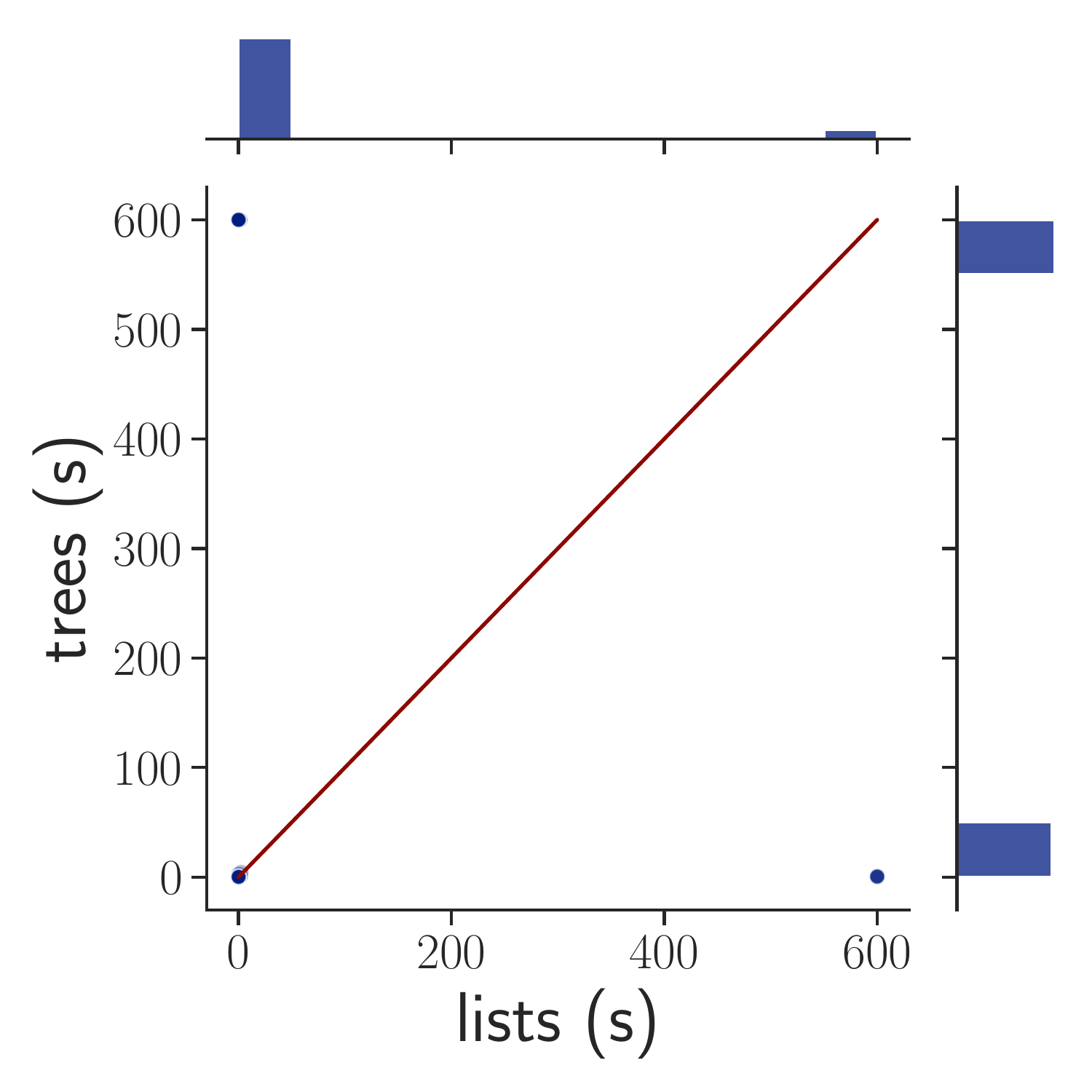}
    \caption{Runtime for list representations vs. tree representations, for 
    imperative benchmarks solved by at least one configuration.
    }
    \label{fig:lists_vs_trees}
  \end{subfigure} \hspace{0.015\textwidth}
  \begin{subfigure}{0.31\textwidth}
    \includegraphics[width=\linewidth]{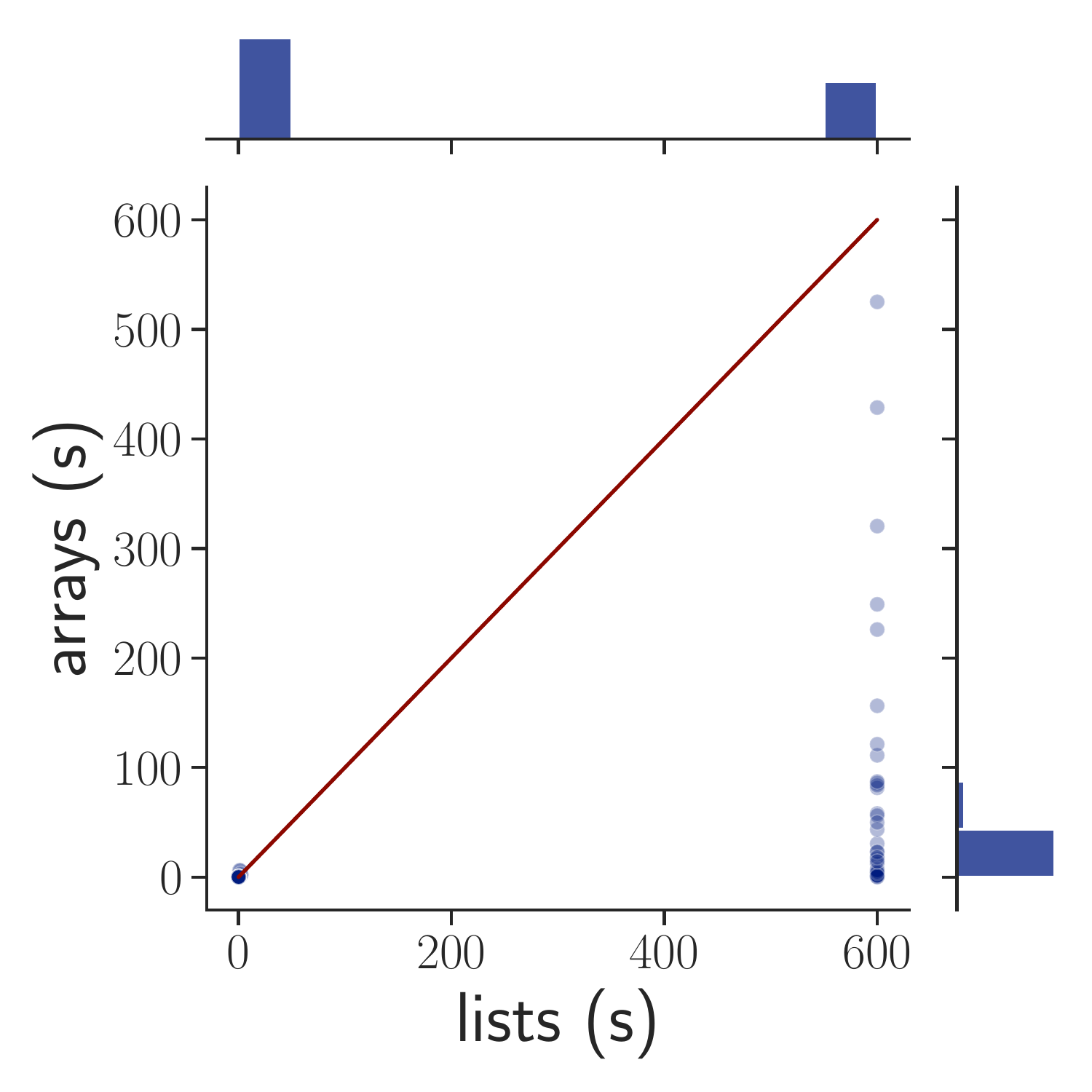}
    \caption{Runtime for list representations vs. array representations, for 
    imperative benchmarks solved by at least one configuration.}
    \label{fig:lists_vs_arrays}
  \end{subfigure} \hspace{0.015\textwidth}
  \begin{subfigure}{0.31\textwidth}
    \includegraphics[width=\linewidth]{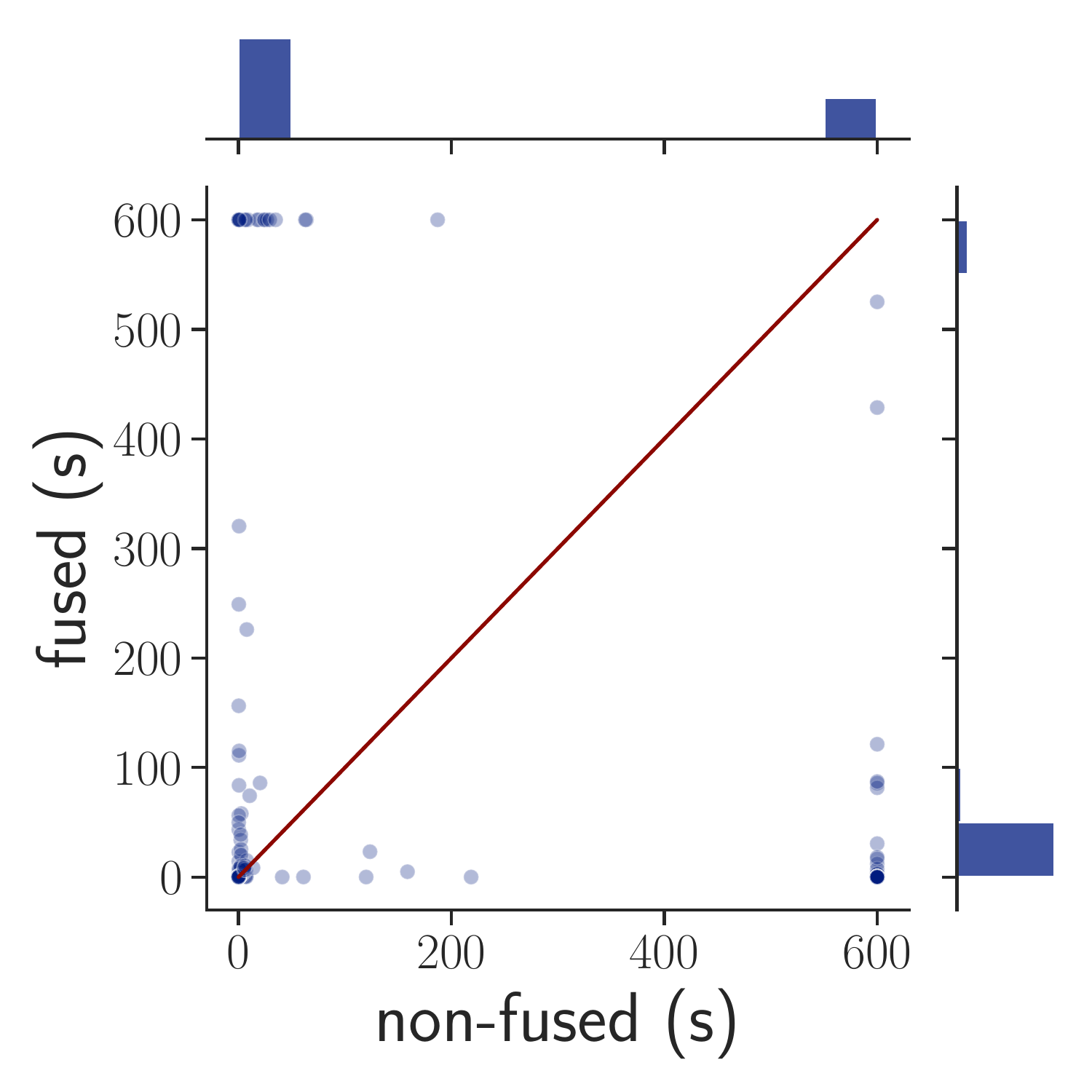}
    \caption{Runtime  for the non-fused vs. fused vectorized semantics, for benchmarks solved by at least one configuration.$\qquad \qquad$}
    \label{fig:vectorized_vs_fused}
  \end{subfigure}
  \caption{Runtime comparisons for various configurations of \ourtool.
  Bar graphs on the outer axes show the distribution of the data points.
  600 seconds indicates a timeout.}
  \label{fig:vs_graph}
\end{figure}

As mentioned in \S\ref{Se:listing-sol}, we suspect the differences between the different 
term representations is due to the fact that
support for algebraic datatypes in Z3 still remains relatively limited.\footnote{
  In our correspondence of the authors of Z3 and Spacer, 
  they mentioned that using inductive datatypes with the Horn Clause solver was highly experimental.
}
When the tree or list representations were used, Z3 terminated with the error ``stuck on a lemma'' far more 
often than when the array representation was used.

\textbf{To answer part} (1) \textbf{of \q3:} \emph{Flattened term representations are indeed effective, 
especially when using arrays to avoid the use of algebraic datatypes altogether.}

\subsubsection*{Effectiveness of the Vectorized and Fused Semantics}

To evaluate the effectiveness of the vectorized and fused semantics optimizations, 
we compared them against each other, and against individual semantics that are not 
vectorized nor fused (corresponding to the ``standard'' form of semantics mentioned in \S\ref{Se:Semantics}).
We note that while the vectorized semantics is actually a different semantics that 
\semgus can be supplied with, \ourtool can automatically vectorize the semantics for any subgrammar of 
$G_{impv}$ as an optimization; thus we treat it as an optimization for our evaluation.

\mypar{Individual semantics vs. Non-fused vectorized semantics}
Using a list representation of terms, the individual semantics could solve 56 unrealizable \sygus benchmarks, 
a strict subset of the 66 solved by the vectorized semantics.
For imperative benchmarks, the individual semantics could only solve 3 realizable and 10 unrealizable 
benchmarks, compared to 5 realizable and 62 unrealizable when using the vectorized semantics.

\mypar{Non-fused vectorized semantics vs. Fused vectorized semantics}
The ``Fused Arrays'' and ``Non-Fused Arrays'' rows of Table~\ref{table:summary_all} describe the number of 
solved benchmarks: again, performance for the \sygus benchmarks was similar.
For imperative benchmarks, the non-fused vectorized array semantics solved 5 realizable and 56 unrealizable 
benchmarks, compared to 6 realizable and 91 unrealizable for the fused vectorized array semantics.

Figure~\ref{fig:vectorized_vs_fused} compares the results of the non-fused vectorized versus the fused vectorized
semantics using array representations: while the fused semantics solve more benchmarks, the graph suggests that 
the fused semantics are not strictly better than the non-fused semantics.
In particular, there are 
11 unrealizable benchmarks that only the non-fused semantics can solve.


When using a list representation, the difference becomes less pronounced---the non-fused vectorized semantics 
can solve 4 realizable and 58 unrealizable benchmarks, compared to 
5 realizable and 62 unrealizable for fused vectorized list semantics.
We think the reason is again the limited support in Z3 for algebraic datatypes, 
which remains the main bottleneck when using list representations.

\textbf{To answer part} (2) \textbf{of \q3:} \emph{The fused vectorized semantics is effective as an optimization, especially for imperative
benchmarks, but there exist some benchmarks 
for which the non-fused vectorized semantics performs better.}
Both are consistently better than the individual semantics.

\subsection{\q4: Evaluating \semgus and \ourtool with Approximate Semantics}
\label{Se:evaluation-semantics}

We evaluated how well \ourtool performs when it is instantiated using an 
approximate semantics 
to produce one-sided answers to either synthesis or unrealizability.
In this section, we focus on identifying the number of \emph{new} benchmarks that 
an approximate semantics can solve compared to the standard version of the semantics.
This approach is motivated by the fact that the nature of an approximate semantics 
can sometimes change the kind of answer that can be obtained---for 
example, using an abstract semantics for an unrealizable problem might make it realizable---and 
thus a direct performance comparison makes little sense.

Both an abstract and underapproximating semantics were implemented using arrays as the 
term representation, with the fused-semantics and vectorizing optimizations, which 
displayed the best overall performance in \S\ref{Se:evaluation-optimizations}.
The new benchmarks solved are ones that the exact, fused and vectorized array semantics 
was unable to solve.

\mypar{Abstract semantics for unrealizability}
To test the capabilities of abstract semantics, we implemented five 
variants of the abstract domain 
$\mathbb{B}_3$ from $\S\ref{Se:abstract-semantics}$, where each domain tracked the 
first, second, third, fourth, and fifth bit of variables, respectively.
This choice was driven by the fact that most of the input examples for our imperative synthesis benchmarks 
were small, between 0 to 31.

The ``Abstract'' row of Table~\ref{table:summary_all} describes the number of benchmarks solved using 
the abstract semantics.
The abstract semantics did not make a difference for the \sygus benchmarks; all benchmarks that were solved 
by the abstract semantics were also solvable by the exact semantics.
However, for the imperative benchmarks, the abstract semantics was able to solve 
17 unrealizable benchmarks that the exact semantics could not solve.
Using abstract semantics yielded faster runtimes: the abstract semantics timed out 
for at most 15 benchmarks regardless of the variant of abstract domain, 
compared to over 200 for the exact semantics
(although realizability results in the abstract semantics have no meaning).

One reason that the abstract semantics failed to make a difference 
on the \sygus benchmarks could be that 
the \sygus benchmarks themselves used inputs with very small values, often between 0 and 7: 
the abstract semantics were able to prove some \sygus benchmarks as unrealizable using the lower 
bits, but nothing new.
In addition, the abstract domains that we used do not work well in the presence of addition, 
because carry bits often render a result to be $\top$.
All of our \sygus benchmarks contain addition, while some imperative benchmarks do not.

\mypar{Underapproximating semantics for program synthesis}
For the underapproximating semantics, we implemented the technique of 
bounding the number of loop-unrollings from 
\S\ref{Se:underapproximate-semantics}, and experimented with loop bounds of 10, 50, and 100.
We only compare the imperative benchmarks here, because the \sygus benchmarks do not contain loops.
The ``Underapproximate'' row of Table~\ref{table:summary_all} describes the number of benchmarks 
solved using this semantics.
The bound semantics was able to synthesize one more program compared to the non-bound semantics.
Interestingly, \ourtool succeeded in synthesizing the program (to compute the factorial function using 
a while loop) when the bound was set to 100, but not when the bound was 10 or 50.
The small difference in performance may be due to the fact that 
\ourtool generally performs worse as a synthesizer than a tool for proving unrealizability.
The results also tell us that synthesizing imperative programs is difficult, 
even without the presence of infinite loops: it could be because unrolled loops 
still pose a significant burden when trying to compute the semantics of an imperative 
program, especially because our approach must ultimately prove that a candidate term 
satisfies the specification using semantics encoded as CHCs 
(which is likely to be slower compared to direct execution).

\textbf{To answer \q4:} \emph{Abstract semantics allows} \ourtool 
\emph{to solve many more unrealizable 
\semgus problems compared to using only exact semantics.
The bound underapproximating semantics did allow} \ourtool \emph{to solve more realizable \semgus problems, 
but the improvement was small.}

\subsection{\q5: Evaluating \semgus and \ourtool on Regular Expressions}

\cbstart
\jinwoo{Shepherd item 4: regexes}
Finally, we evaluated the effectiveness of \ourtool on regular-expression
benchmarks.
The final column of Table~\ref{table:summary_all} displays the results: the CHC-solving approach failed to solve any of the 
benchmarks within 10 minutes,
while the enumerative approach described in \S\ref{Se:Implementation} 
solved 5 benchmarks.
This is an interesting case where the enumerative approach outperformed the CHC-solving approach.

The most likely hypothesis for this behavior is that the iterative nature of the $\mathsf{Star}$ rule
in Figure~\ref{fig:regex_final} (due to the unrolling parameter $k$),
as well as the behavior of strings themselves, 
mixes poorly with the overarching principle of finding a powerful lemma 
or invariant.
In addition, we did not impose restrictions on the grammar for regular expressions, which made all benchmarks 
realizable.

The enumerative solver, while succeeding in solving 5 benchmarks, displayed
behavior similar to what it displayed when solving imperative benchmarks:
the variance between termination times was high, without a clear pattern behind the variance.
In contrast, AlphaRegex~\cite{alpharegex}, from which we took our benchmarks, reports being able to solve all 
25 benchmarks; we believe this difference is due to the overhead in using a CHC solver instead of an automaton to check 
regexes, as well as the additional enumeration optimizations that AlphaRegex employs.

\textbf{To answer \q5:}
\textit{Regular-expression synthesis problems can be expressed in \ourtool.}
\ourtool could solve some of the benchmarks using enumeration,
but none of the benchmarks using the CHC-solving method.
The most likely explanation for this behavior is the difficulty of finding good lemmas for regex problems, upon which \ourtool---more specifically, 
Z3 and Spacer---heavily relies.
\cbend



\section{Related Work}
\label{Se:related}

\mypar{General Synthesis Frameworks}
Sketch~\cite{sketch} and Rosette~\cite{rosette} are both solver-aided languages, where one specifies a 
synthesis problem using a domain-specific language, which is 
translated into an SMT problem.
FlashMeta~\cite{flashmeta} is a synthesis framework that allows one to specify the 
semantics of operators in the language
using witness functions, which roughly correspond to the 
``inverse'' semantics of operators.
In these tools, the way synthesis problems are defined is directly 
tied with how they are solved: one needs to develop non-standard 
inverse semantics for FlashMeta, or phrase the synthesis problem 
within the language of Sketch or Rosette, which are requirements 
imposed by their particular synthesis algorithms.
Due to these reasons, these tools disallow defining (and therefore solving) synthesis problems involving infinite search spaces.

The first attempt to unify these frameworks into a logical one was  provided by \sygus~\cite{sygus}.
However, \sygus is not general enough as it cannot express synthesis problems over arbitrary
syntactic constructs that do not lie inside a decidable SMT theory.
\semgus, on the other hand, provides a \textit{logical} way to 
\textit{define} synthesis problems with \emph{custom} semantics. 
Moreover, the solving 
procedure for \semgus is \emph{motivated by the definition}, not the other way around.\footnote{
  One may argue that semantics expressed as a CHC is a restriction, 
  but as stated in \S\ref{Se:Introduction}, they are more of a formalization.
  One may also assume a different surface syntax, such as the format in 
  Equation~\eqref{eq:while-true}; a translation to CHCs is straightforward.
}
%

\mypar{Customizing Semantics in \sygus}
  \sygus allows one to provide semantics for user-defined terms, but the support is limited to 
   functions/operators
  that can be used in the grammar.
  Concretely, if we consider our formalization of semantics (Definition~\ref{Def:prod-semantics}), the degree of customization available 
  in \sygus is limited to customizing the constraint $\phi$ to a formula expressible in the background theory.\footnote{
    To be precise, this formula is further limited to a formula of the form $v_0 = f(v_1, \cdots, v_i)$, where $f$ is a non-recursive 
    function expressible in the background theory.
    Notably, this prevents expressing relations between $v_1, \cdots, v_i$, the results of nonterminals in the RHS.
  }
  This limitation prevents \sygus from expressing imperative program-synthesis problems; \semgus eases this restriction 
  by allowing one to replace $\phi$ with any first-order formula, as well as introduce new relations or arguments for the relations.

  \cbstart
  \jinwoo{Review C, E: comparison to CVC4 and its term encodings}
  CVC4 also has a setting where one can encode a grammar via algebraic datatypes, and build an interpretation of these 
  datatypes in terms of a background theory $T$ into an SMT solver~\cite{cvc4}.
  This feature is designed to extend the SMT-based synthesis approach of CVC4 towards grammars, and like \sygus, is restricted 
  to what one can express within the theory $T$.
  Semantics in \semgus and \ourtool are expressed using CHCs, which allow us to express a wider variety of semantics, and the 
  solving procedure is a proof search over CHCs as opposed to solving a universally quantified SMT formula (as in CVC4).
  \cbend

\mypar{Synthesis for Imperative Programs}
There have  been attempts at designing synthesizers specifically for 
imperative programs.
Existing tools require the user to provide templates that specify
most of the program~\cite{SrivastavaGF10,sketch}; the tools then
resort to various constraint-solving techniques to complete
missing parts of the template, which often do not contain loops~\cite{SrivastavaGF10}.

SIMPL~\cite{SoO17} can
synthesize imperative programs from input-output examples and 
a template that specifies most of the program.
SIMPL employs a simple \emph{enumeration-based} strategy, and uses
abstract interpretation to rule out templates that will not 
result in a solution.
Because SIMPL is based on enumeration, it performs well as a
synthesizer.
However, unlike \ourtool, SIMPL cannot restrict the terms allowed in a program and
it cannot establish that a problem is
unrealizable.

\mypar{Unrealizability}
\nope~\cite{unreal} and \nay~\cite{semilinear} are, to the best of our knowledge, 
the only two tools that can prove unrealizability for \sygus benchmarks
in which the grammar can generate infinitely many terms.
Because \nay consistently outperforms \nope, we only compare
against \nay  in our evaluation.
\ourtool can solve synthesis problems over any specified language, 
including imperative languages,
whereas both \nope and \nay can only solve \sygus problems.
One variant of \nay also uses Constrained Horn Clauses, 
which are used to encode the problem of solving a
set of equations that describes the sets of possible outputs of the program.
In \ourtool, the constraints are used
for describing both the syntax and the semantics of the programs in the search space.
Because of the syntactic constraints, \ourtool can extract the
synthesized program when the problem is realizable, which \nay is
unable to do.

There exist other tools that are capable of proving unrealizability in 
limited situations, such as CVC4~\cite{cvc4} or DryadSynth~\cite{sygus-comp}.
However, CVC4 can only prove unrealizability when the grammar is completely unrestricted~\cite{unreal, semilinear}.
DryadSynth does not accept a grammar as part of its specification;
\ourtool is the only tool that can perform synthesis and unrealizabilty for general \semgus problems.

\mypar{The Use of Semantics in Program Synthesis}
\emph{Synthesis using abstraction refinement} (SYNGAR)~\cite{POPL:WDS18} uses predicate
abstraction to prune the search space of a synthesis-from-examples problem.
SYNGAR builds a tree automaton representing all trees in the search space
that are correct with respect to an abstract semantics expressed using
predicate abstraction.
SYNGAR can be viewed as a special case of \semgus in which predicate abstraction is used to 
\emph{overapproximate} the semantics of terms in the programming language.
SYNGAR's approach is tied to the use of an \emph{abstract semantics} that can
be expressed using a finite abstract domain, whereas our 
approach extends to infinite domains.
In particular, with our approach, one can express the \emph{concrete
semantics} of a programming language.

FlashMeta~\cite{flashmeta}, is also a way of using 
semantics in program synthesis.

\mypar{The Use of Horn Clauses in Program Synthesis}
In Inductive Logic Programming (ILP) \cite{ML:Quinlan90,NGC:Muggleton91,Book:LD94},
given background knowledge, typically in the form of Horn Clauses,
the goal of ILP is to learn the defining formula for a logical relation
that agrees with a given classification of input examples.
Both ILP and our framework use Horn Clauses to specify background
knowledge---which for our algorithm consists of the syntax and
semantics of the target programming language.
However, the respective goals for the output answer are different:
(i) In ILP, the goal is to create a Horn-Clause program as the answer.
(ii) In our algorithm for \semgus, the goal is to create a program in the
    language that has been specified via the background knowledge.
Whether ILP techniques  can be adapted to
\semgus is left for future work.


\section{Conclusion and Future Work}
\label{Se:conclusion}

This paper develops \semgus, a new framework
for program synthesis that allows one to 
specify both the syntax and the semantics of a synthesis problem.
\semgus can be used for specifying synthesis problems over 
an imperative programming language; it also allows one to work with 
a variety of different semantics that may be better suited to 
solve a synthesis problem efficiently.
The paper also presents a general procedure for 
solving \semgus problems capable of both program synthesis and 
proving unrealizability, and an implementation \ourtool to solve \semgus problems.

\cbstart
\jinwoo{Review E: Future work}
\semgus opens many future directions of work.

\mypar{Inferring Lemmas for \semgus}
As mentioned in \S\ref{Se:MotivatingExample}, 
our procedure for solving \semgus problems 
relies on an external CHC solver to infer lemmas over 
sets of programs in the syntactic search space, using the semantics of terms.
While we have relied on an external solver (Z3) to perform this inference for us, 
it is also unclear to what degree CHC solvers 
are capable of discovering lemmas to prune a syntactic search space.
An algorithm to explicitly infer lemmas and
prune parts of the search space would be 
especially useful in enhancing our solving algorithm as a synthesizer,
allowing \ourtool to compete with state-of-the-art synthesizers as well.

\mypar{\semgus with Abstract Domains}
As discussed in \S\ref{Se:underapproximate-semantics}, \semgus provides a 
natural framework for discovering and composing abstract domains for program synthesis.
Coupled with the fact that lemmas play an important role in solving \semgus problems, 
a system for inferring lemmas efficiently using abstract domains, and vice versa, 
has the potential to make \semgus solvers more efficient.
We have already witnessed this in our paper: 
the abstract domain $\mathbb{B}_i$ in \S\ref{Se:abstract-semantics} was 
inspired by the lemma from \S\ref{Se:semgus-solve}, which in turn 
proved the benchmark from Figure~\ref{fig:ex_grm2} unrealizable.

\mypar{Specialized Solvers for \semgus}
One limitation of our approach was that the synthesis procedure was quite 
inefficient, especially for regular expressions.
Since \semgus now gives us a way to specify synthesis problems, is it possible to 
develop solvers that are specialized for specific \semgus scenarios?
An example of such an approach could be a better enumerative solver, 
which does not rely on CHC solving but employs a separate algorithm to check the 
validity of proof trees.
\cbend
%
%

\begin{acks}
Supported, in part, by a gift from Rajiv and Ritu Batra; 
by ONR under grants N00014-17-1-2889 and N00014-19-1-2318; 
by NSF under grants 1420866, 1763871, and 1750965; 
by the Wisconsin Alumni Research Foundation; 
by a Facebook fellowship;
and by a grant from the Korea Foundation for Advanced Studies.
We also thank our shepherd, Nadia Polikarpova, 
for making numerous suggestions to improve the paper.
Opinions, findings, or conclusions expressed 
in this publication are those of the authors, 
and do not necessarily reflect the views of the sponsoring agencies.
\end{acks}

\bibliographystyle{ACM-Reference-Format}
\bibliography{reference}


\begin{thebibliography}{32}


\ifx \showCODEN    \undefined \def \showCODEN     #1{\unskip}     \fi
\ifx \showDOI      \undefined \def \showDOI       #1{#1}\fi
\ifx \showISBNx    \undefined \def \showISBNx     #1{\unskip}     \fi
\ifx \showISBNxiii \undefined \def \showISBNxiii  #1{\unskip}     \fi
\ifx \showISSN     \undefined \def \showISSN      #1{\unskip}     \fi
\ifx \showLCCN     \undefined \def \showLCCN      #1{\unskip}     \fi
\ifx \shownote     \undefined \def \shownote      #1{#1}          \fi
\ifx \showarticletitle \undefined \def \showarticletitle #1{#1}   \fi
\ifx \showURL      \undefined \def \showURL       {\relax}        \fi
\providecommand\bibfield[2]{#2}
\providecommand\bibinfo[2]{#2}
\providecommand\natexlab[1]{#1}
\providecommand\showeprint[2][]{arXiv:#2}

\bibitem[\protect\citeauthoryear{Alur, Bodik, Juniwal, Martin, Raghothaman,
  Seshia, Singh, Solar-Lezama, Torlak, and Udupa}{Alur et~al\mbox{.}}{2013}]%
        {sygus}
\bibfield{author}{\bibinfo{person}{Rajeev Alur}, \bibinfo{person}{Rastislav
  Bodik}, \bibinfo{person}{Garvit Juniwal}, \bibinfo{person}{Milo~MK Martin},
  \bibinfo{person}{Mukund Raghothaman}, \bibinfo{person}{Sanjit~A Seshia},
  \bibinfo{person}{Rishabh Singh}, \bibinfo{person}{Armando Solar-Lezama},
  \bibinfo{person}{Emina Torlak}, {and} \bibinfo{person}{Abhishek Udupa}.}
  \bibinfo{year}{2013}\natexlab{}.
\newblock \showarticletitle{Syntax-guided synthesis}. In
  \bibinfo{booktitle}{\emph{Formal Methods in Computer-Aided Design (FMCAD),
  2013}}. IEEE, \bibinfo{pages}{1--8}.
\newblock


\bibitem[\protect\citeauthoryear{Alur, Fisman, Singh, and Solar-Lezama}{Alur
  et~al\mbox{.}}{2017a}]%
        {sygus-comp}
\bibfield{author}{\bibinfo{person}{Rajeev Alur}, \bibinfo{person}{Dana Fisman},
  \bibinfo{person}{Rishabh Singh}, {and} \bibinfo{person}{Armando
  Solar-Lezama}.} \bibinfo{year}{2017}\natexlab{a}.
\newblock \showarticletitle{Sygus-comp 2017: Results and analysis}.
\newblock \bibinfo{journal}{\emph{arXiv preprint arXiv:1711.11438}}
  (\bibinfo{year}{2017}).
\newblock


\bibitem[\protect\citeauthoryear{Alur, Radhakrishna, and Udupa}{Alur
  et~al\mbox{.}}{2017b}]%
        {alur2017esolver}
\bibfield{author}{\bibinfo{person}{Rajeev Alur}, \bibinfo{person}{Arjun
  Radhakrishna}, {and} \bibinfo{person}{Abhishek Udupa}.}
  \bibinfo{year}{2017}\natexlab{b}.
\newblock \showarticletitle{Scaling enumerative program synthesis via divide
  and conquer}. In \bibinfo{booktitle}{\emph{International Conference on Tools
  and Algorithms for the Construction and Analysis of Systems}}. Springer,
  \bibinfo{pages}{319--336}.
\newblock


\bibitem[\protect\citeauthoryear{Barrett, Conway, Deters, Hadarean,
  Jovanovi{\'c}, King, Reynolds, and Tinelli}{Barrett et~al\mbox{.}}{2011}]%
        {cvc4_orig}
\bibfield{author}{\bibinfo{person}{Clark Barrett},
  \bibinfo{person}{Christopher~L Conway}, \bibinfo{person}{Morgan Deters},
  \bibinfo{person}{Liana Hadarean}, \bibinfo{person}{Dejan Jovanovi{\'c}},
  \bibinfo{person}{Tim King}, \bibinfo{person}{Andrew Reynolds}, {and}
  \bibinfo{person}{Cesare Tinelli}.} \bibinfo{year}{2011}\natexlab{}.
\newblock \showarticletitle{Cvc4}. In \bibinfo{booktitle}{\emph{International
  Conference on Computer Aided Verification}}. Springer,
  \bibinfo{pages}{171--177}.
\newblock


\bibitem[\protect\citeauthoryear{Blanc, Gupta, Kov{\'a}cs, and Kragl}{Blanc
  et~al\mbox{.}}{2013}]%
        {vampire}
\bibfield{author}{\bibinfo{person}{R{\'e}gis Blanc}, \bibinfo{person}{Ashutosh
  Gupta}, \bibinfo{person}{Laura Kov{\'a}cs}, {and} \bibinfo{person}{Bernhard
  Kragl}.} \bibinfo{year}{2013}\natexlab{}.
\newblock \showarticletitle{Tree interpolation in vampire}. In
  \bibinfo{booktitle}{\emph{International Conference on Logic for Programming
  Artificial Intelligence and Reasoning}}. Springer, \bibinfo{pages}{173--181}.
\newblock


\bibitem[\protect\citeauthoryear{Clarke, Kroening, and Yorav}{Clarke
  et~al\mbox{.}}{2003}]%
        {cbmc}
\bibfield{author}{\bibinfo{person}{Edmund Clarke}, \bibinfo{person}{Daniel
  Kroening}, {and} \bibinfo{person}{Karen Yorav}.}
  \bibinfo{year}{2003}\natexlab{}.
\newblock \showarticletitle{Behavioral consistency of C and Verilog programs
  using bounded model checking}. In \bibinfo{booktitle}{\emph{Proceedings 2003.
  Design Automation Conference (IEEE Cat. No. 03CH37451)}}. IEEE,
  \bibinfo{pages}{368--371}.
\newblock


\bibitem[\protect\citeauthoryear{Cousot and Cousot}{Cousot and Cousot}{1992}]%
        {absint}
\bibfield{author}{\bibinfo{person}{Patrick Cousot} {and}
  \bibinfo{person}{Radhia Cousot}.} \bibinfo{year}{1992}\natexlab{}.
\newblock \showarticletitle{Abstract interpretation frameworks}.
\newblock \bibinfo{journal}{\emph{Journal of logic and computation}}
  \bibinfo{volume}{2}, \bibinfo{number}{4} (\bibinfo{year}{1992}),
  \bibinfo{pages}{511--547}.
\newblock


\bibitem[\protect\citeauthoryear{De~Moura and Bj{\o}rner}{De~Moura and
  Bj{\o}rner}{2008}]%
        {z3}
\bibfield{author}{\bibinfo{person}{Leonardo De~Moura} {and}
  \bibinfo{person}{Nikolaj Bj{\o}rner}.} \bibinfo{year}{2008}\natexlab{}.
\newblock \showarticletitle{Z3: An Efficient {SMT} Solver}. In
  \bibinfo{booktitle}{\emph{Proceedings of the Theory and Practice of Software,
  14th International Conference on Tools and Algorithms for the Construction
  and Analysis of Systems}} (Budapest, Hungary)
  \emph{(\bibinfo{series}{TACAS'08/ETAPS'08})}.
  \bibinfo{publisher}{Springer-Verlag}, \bibinfo{address}{Berlin, Heidelberg},
  \bibinfo{pages}{337--340}.
\newblock
\showISBNx{3-540-78799-2, 978-3-540-78799-0}
\urldef\tempurl%
\url{http://dl.acm.org/citation.cfm?id=1792734.1792766}
\showURL{%
\tempurl}


\bibitem[\protect\citeauthoryear{Feser, Chaudhuri, and Dillig}{Feser
  et~al\mbox{.}}{2015}]%
        {lambda2}
\bibfield{author}{\bibinfo{person}{John~K Feser}, \bibinfo{person}{Swarat
  Chaudhuri}, {and} \bibinfo{person}{Isil Dillig}.}
  \bibinfo{year}{2015}\natexlab{}.
\newblock \showarticletitle{Synthesizing data structure transformations from
  input-output examples}.
\newblock \bibinfo{journal}{\emph{ACM SIGPLAN Notices}} \bibinfo{volume}{50},
  \bibinfo{number}{6} (\bibinfo{year}{2015}), \bibinfo{pages}{229--239}.
\newblock


\bibitem[\protect\citeauthoryear{Gulwani}{Gulwani}{2011}]%
        {flashfill}
\bibfield{author}{\bibinfo{person}{Sumit Gulwani}.}
  \bibinfo{year}{2011}\natexlab{}.
\newblock \showarticletitle{Automating string processing in spreadsheets using
  input-output examples}.
\newblock \bibinfo{journal}{\emph{ACM Sigplan Notices}} \bibinfo{volume}{46},
  \bibinfo{number}{1} (\bibinfo{year}{2011}), \bibinfo{pages}{317--330}.
\newblock


\bibitem[\protect\citeauthoryear{Hu, Breck, Cyphert, D'Antoni, and Reps}{Hu
  et~al\mbox{.}}{2019}]%
        {unreal}
\bibfield{author}{\bibinfo{person}{Qinheping Hu}, \bibinfo{person}{Jason
  Breck}, \bibinfo{person}{John Cyphert}, \bibinfo{person}{Loris D'Antoni},
  {and} \bibinfo{person}{Thomas Reps}.} \bibinfo{year}{2019}\natexlab{}.
\newblock \showarticletitle{Proving unrealizability for syntax-guided
  synthesis}. In \bibinfo{booktitle}{\emph{International Conference on Computer
  Aided Verification}}. Springer, \bibinfo{pages}{335--352}.
\newblock


\bibitem[\protect\citeauthoryear{Hu, Cyphert, D'Antoni, and Reps}{Hu
  et~al\mbox{.}}{2020}]%
        {semilinear}
\bibfield{author}{\bibinfo{person}{Qinheping Hu}, \bibinfo{person}{John
  Cyphert}, \bibinfo{person}{Loris D'Antoni}, {and} \bibinfo{person}{Thomas
  Reps}.} \bibinfo{year}{2020}\natexlab{}.
\newblock \showarticletitle{Exact and approximate methods for proving
  unrealizability of syntax-guided synthesis problems}. In
  \bibinfo{booktitle}{\emph{Proceedings of the 41st ACM SIGPLAN Conference on
  Programming Language Design and Implementation}}.
  \bibinfo{pages}{1128--1142}.
\newblock


\bibitem[\protect\citeauthoryear{Hu and D'Antoni}{Hu and D'Antoni}{2018}]%
        {qsygus}
\bibfield{author}{\bibinfo{person}{Qinheping Hu} {and} \bibinfo{person}{Loris
  D'Antoni}.} \bibinfo{year}{2018}\natexlab{}.
\newblock \showarticletitle{Syntax-guided synthesis with quantitative syntactic
  objectives}. In \bibinfo{booktitle}{\emph{International Conference on
  Computer Aided Verification}}. Springer, \bibinfo{pages}{386--403}.
\newblock


\bibitem[\protect\citeauthoryear{Johnson}{Johnson}{1975}]%
        {BellLabs-TR32:J75}
\bibfield{author}{\bibinfo{person}{S.C. Johnson}.}
  \bibinfo{year}{1975}\natexlab{}.
\newblock \bibinfo{booktitle}{\emph{{YACC}: {Y}et Another Compiler-Compiler}}.
\newblock \bibinfo{type}{{T}echnical {R}eport} Comp.\ Sci.\ Tech.\ Rep.\ 32.
  \bibinfo{institution}{Bell Laboratories}.
\newblock


\bibitem[\protect\citeauthoryear{Komuravelli, Gurfinkel, and Chaki}{Komuravelli
  et~al\mbox{.}}{2016}]%
        {spacer}
\bibfield{author}{\bibinfo{person}{Anvesh Komuravelli}, \bibinfo{person}{Arie
  Gurfinkel}, {and} \bibinfo{person}{Sagar Chaki}.}
  \bibinfo{year}{2016}\natexlab{}.
\newblock \showarticletitle{SMT-based model checking for recursive programs}.
\newblock \bibinfo{journal}{\emph{Formal Methods in System Design}}
  \bibinfo{volume}{48}, \bibinfo{number}{3} (\bibinfo{year}{2016}),
  \bibinfo{pages}{175--205}.
\newblock


\bibitem[\protect\citeauthoryear{Lavra\v{c} and D\v{z}eroski}{Lavra\v{c} and
  D\v{z}eroski}{1994}]%
        {Book:LD94}
\bibfield{author}{\bibinfo{person}{N. Lavra\v{c}} {and} \bibinfo{person}{S.
  D\v{z}eroski}.} \bibinfo{year}{1994}\natexlab{}.
\newblock \bibinfo{booktitle}{\emph{Inductive Logic Programming: Techniques and
  Applications}}.
\newblock \bibinfo{publisher}{Ellis Horwood}.
\newblock


\bibitem[\protect\citeauthoryear{Lee, So, and Oh}{Lee et~al\mbox{.}}{2016}]%
        {alpharegex}
\bibfield{author}{\bibinfo{person}{Mina Lee}, \bibinfo{person}{Sunbeom So},
  {and} \bibinfo{person}{Hakjoo Oh}.} \bibinfo{year}{2016}\natexlab{}.
\newblock \showarticletitle{Synthesizing regular expressions from examples for
  introductory automata assignments}. In \bibinfo{booktitle}{\emph{Proceedings
  of the 2016 ACM SIGPLAN International Conference on Generative Programming:
  Concepts and Experiences}}. \bibinfo{pages}{70--80}.
\newblock


\bibitem[\protect\citeauthoryear{McMillan and Rybalchenko}{McMillan and
  Rybalchenko}{2013}]%
        {chcinterpolation}
\bibfield{author}{\bibinfo{person}{Kenneth~L McMillan} {and}
  \bibinfo{person}{Andrey Rybalchenko}.} \bibinfo{year}{2013}\natexlab{}.
\newblock \showarticletitle{Solving constrained Horn clauses using
  interpolation}.
\newblock \bibinfo{journal}{\emph{Tech. Rep. MSR-TR-2013-6}}
  (\bibinfo{year}{2013}).
\newblock


\bibitem[\protect\citeauthoryear{Muggleton}{Muggleton}{1991}]%
        {NGC:Muggleton91}
\bibfield{author}{\bibinfo{person}{S. Muggleton}.}
  \bibinfo{year}{1991}\natexlab{}.
\newblock \showarticletitle{Inductive logic programming}.
\newblock \bibinfo{journal}{\emph{New Generation Comp.}} \bibinfo{volume}{8},
  \bibinfo{number}{4} (\bibinfo{year}{1991}), \bibinfo{pages}{295--317}.
\newblock


\bibitem[\protect\citeauthoryear{Pan, Hu, Xu, and D'Antoni}{Pan
  et~al\mbox{.}}{2019}]%
        {rfixer}
\bibfield{author}{\bibinfo{person}{Rong Pan}, \bibinfo{person}{Qinheping Hu},
  \bibinfo{person}{Gaowei Xu}, {and} \bibinfo{person}{Loris D'Antoni}.}
  \bibinfo{year}{2019}\natexlab{}.
\newblock \showarticletitle{Automatic repair of regular expressions}.
\newblock \bibinfo{journal}{\emph{Proceedings of the ACM on Programming
  Languages}} \bibinfo{volume}{3}, \bibinfo{number}{OOPSLA}
  (\bibinfo{year}{2019}), \bibinfo{pages}{1--29}.
\newblock


\bibitem[\protect\citeauthoryear{Phothilimthana, Elliott, Wang, Jangda,
  Hagedorn, Barthels, Kaufman, Grover, Torlak, and Bodik}{Phothilimthana
  et~al\mbox{.}}{2019}]%
        {swizzle}
\bibfield{author}{\bibinfo{person}{Phitchaya~Mangpo Phothilimthana},
  \bibinfo{person}{Archibald~Samuel Elliott}, \bibinfo{person}{An Wang},
  \bibinfo{person}{Abhinav Jangda}, \bibinfo{person}{Bastian Hagedorn},
  \bibinfo{person}{Henrik Barthels}, \bibinfo{person}{Samuel~J Kaufman},
  \bibinfo{person}{Vinod Grover}, \bibinfo{person}{Emina Torlak}, {and}
  \bibinfo{person}{Rastislav Bodik}.} \bibinfo{year}{2019}\natexlab{}.
\newblock \showarticletitle{Swizzle inventor: data movement synthesis for GPU
  kernels}. In \bibinfo{booktitle}{\emph{Proceedings of the Twenty-Fourth
  International Conference on Architectural Support for Programming Languages
  and Operating Systems}}. \bibinfo{pages}{65--78}.
\newblock


\bibitem[\protect\citeauthoryear{Polozov and Gulwani}{Polozov and
  Gulwani}{2015}]%
        {flashmeta}
\bibfield{author}{\bibinfo{person}{Oleksandr Polozov} {and}
  \bibinfo{person}{Sumit Gulwani}.} \bibinfo{year}{2015}\natexlab{}.
\newblock \showarticletitle{FlashMeta: a framework for inductive program
  synthesis}. In \bibinfo{booktitle}{\emph{Proceedings of the 2015 ACM SIGPLAN
  International Conference on Object-Oriented Programming, Systems, Languages,
  and Applications}}. \bibinfo{pages}{107--126}.
\newblock


\bibitem[\protect\citeauthoryear{Quinlan}{Quinlan}{1990}]%
        {ML:Quinlan90}
\bibfield{author}{\bibinfo{person}{J.R. Quinlan}.}
  \bibinfo{year}{1990}\natexlab{}.
\newblock \showarticletitle{Learning logical definitions from Relations}.
\newblock \bibinfo{journal}{\emph{Mach. Learn.}}  \bibinfo{volume}{5}
  (\bibinfo{year}{1990}), \bibinfo{pages}{239--266}.
\newblock


\bibitem[\protect\citeauthoryear{Reynolds, Barbosa, N{\"o}tzli, Barrett, and
  Tinelli}{Reynolds et~al\mbox{.}}{2019}]%
        {cvc4sy}
\bibfield{author}{\bibinfo{person}{Andrew Reynolds}, \bibinfo{person}{Haniel
  Barbosa}, \bibinfo{person}{Andres N{\"o}tzli}, \bibinfo{person}{Clark
  Barrett}, {and} \bibinfo{person}{Cesare Tinelli}.}
  \bibinfo{year}{2019}\natexlab{}.
\newblock \showarticletitle{CVC4SY for SyGuS-COMP 2019}.
\newblock \bibinfo{journal}{\emph{arXiv preprint arXiv:1907.10175}}
  (\bibinfo{year}{2019}).
\newblock


\bibitem[\protect\citeauthoryear{Reynolds, Deters, Kuncak, Tinelli, and
  Barrett}{Reynolds et~al\mbox{.}}{2015}]%
        {cvc4}
\bibfield{author}{\bibinfo{person}{Andrew Reynolds}, \bibinfo{person}{Morgan
  Deters}, \bibinfo{person}{Viktor Kuncak}, \bibinfo{person}{Cesare Tinelli},
  {and} \bibinfo{person}{Clark Barrett}.} \bibinfo{year}{2015}\natexlab{}.
\newblock \showarticletitle{Counterexample-guided quantifier instantiation for
  synthesis in SMT}. In \bibinfo{booktitle}{\emph{International Conference on
  Computer Aided Verification}}. Springer, \bibinfo{pages}{198--216}.
\newblock


\bibitem[\protect\citeauthoryear{So and Oh}{So and Oh}{2017}]%
        {SoO17}
\bibfield{author}{\bibinfo{person}{Sunbeom So} {and} \bibinfo{person}{Hakjoo
  Oh}.} \bibinfo{year}{2017}\natexlab{}.
\newblock \showarticletitle{Synthesizing Imperative Programs from Examples
  Guided by Static Analysis}. In \bibinfo{booktitle}{\emph{Static Analysis -
  24th International Symposium, {SAS} 2017, New York, NY, USA, August 30 -
  September 1, 2017, Proceedings}}. \bibinfo{pages}{364--381}.
\newblock
\urldef\tempurl%
\url{https://doi.org/10.1007/978-3-319-66706-5\_18}
\showDOI{\tempurl}


\bibitem[\protect\citeauthoryear{Solar{-}Lezama}{Solar{-}Lezama}{2013}]%
        {sketch}
\bibfield{author}{\bibinfo{person}{Armando Solar{-}Lezama}.}
  \bibinfo{year}{2013}\natexlab{}.
\newblock \showarticletitle{Program sketching}.
\newblock \bibinfo{journal}{\emph{{STTT}}} \bibinfo{volume}{15},
  \bibinfo{number}{5-6} (\bibinfo{year}{2013}), \bibinfo{pages}{475--495}.
\newblock
\urldef\tempurl%
\url{https://doi.org/10.1007/s10009-012-0249-7}
\showDOI{\tempurl}


\bibitem[\protect\citeauthoryear{Srivastava, Gulwani, and Foster}{Srivastava
  et~al\mbox{.}}{2010}]%
        {SrivastavaGF10}
\bibfield{author}{\bibinfo{person}{Saurabh Srivastava}, \bibinfo{person}{Sumit
  Gulwani}, {and} \bibinfo{person}{Jeffrey~S. Foster}.}
  \bibinfo{year}{2010}\natexlab{}.
\newblock \showarticletitle{From program verification to program synthesis}. In
  \bibinfo{booktitle}{\emph{Proceedings of the 37th {ACM} {SIGPLAN-SIGACT}
  Symposium on Principles of Programming Languages, {POPL} 2010, Madrid, Spain,
  January 17-23, 2010}}. \bibinfo{pages}{313--326}.
\newblock
\urldef\tempurl%
\url{https://doi.org/10.1145/1706299.1706337}
\showDOI{\tempurl}


\bibitem[\protect\citeauthoryear{Torlak and Bod{\'{\i}}k}{Torlak and
  Bod{\'{\i}}k}{2014}]%
        {rosette}
\bibfield{author}{\bibinfo{person}{Emina Torlak} {and}
  \bibinfo{person}{Rastislav Bod{\'{\i}}k}.} \bibinfo{year}{2014}\natexlab{}.
\newblock \showarticletitle{A lightweight symbolic virtual machine for
  solver-aided host languages}. In \bibinfo{booktitle}{\emph{{ACM} {SIGPLAN}
  Conference on Programming Language Design and Implementation, {PLDI} '14,
  Edinburgh, United Kingdom - June 09 - 11, 2014}}. \bibinfo{pages}{530--541}.
\newblock
\urldef\tempurl%
\url{https://doi.org/10.1145/2594291.2594340}
\showDOI{\tempurl}


\bibitem[\protect\citeauthoryear{Wang, Anderson, Dillig, and McMillan}{Wang
  et~al\mbox{.}}{2018a}]%
        {learning_abs_synth}
\bibfield{author}{\bibinfo{person}{Xinyu Wang}, \bibinfo{person}{Greg
  Anderson}, \bibinfo{person}{Isil Dillig}, {and} \bibinfo{person}{Kenneth~L
  McMillan}.} \bibinfo{year}{2018}\natexlab{a}.
\newblock \showarticletitle{Learning Abstractions for Program Synthesis}. In
  \bibinfo{booktitle}{\emph{International Conference on Computer Aided
  Verification}}. Springer, \bibinfo{pages}{407--426}.
\newblock


\bibitem[\protect\citeauthoryear{Wang, Dillig, and Singh}{Wang
  et~al\mbox{.}}{2017}]%
        {abs_synth}
\bibfield{author}{\bibinfo{person}{Xinyu Wang}, \bibinfo{person}{Isil Dillig},
  {and} \bibinfo{person}{Rishabh Singh}.} \bibinfo{year}{2017}\natexlab{}.
\newblock \showarticletitle{Program synthesis using abstraction refinement}.
\newblock \bibinfo{journal}{\emph{Proceedings of the ACM on Programming
  Languages}} \bibinfo{volume}{2}, \bibinfo{number}{POPL}
  (\bibinfo{year}{2017}), \bibinfo{pages}{1--30}.
\newblock


\bibitem[\protect\citeauthoryear{Wang, Dillig, and Singh}{Wang
  et~al\mbox{.}}{2018b}]%
        {POPL:WDS18}
\bibfield{author}{\bibinfo{person}{Xinyu Wang}, \bibinfo{person}{Isil Dillig},
  {and} \bibinfo{person}{Rishabh Singh}.} \bibinfo{year}{2018}\natexlab{b}.
\newblock \showarticletitle{Program Synthesis Using Abstraction Refinement}.
\newblock \bibinfo{journal}{\emph{{PACMPL}}} \bibinfo{volume}{2},
  \bibinfo{number}{{POPL}} (\bibinfo{year}{2018}),
  \bibinfo{pages}{63:1--63:30}.
\newblock


\end{thebibliography}

\appendix

\clearpage
\section{A Detailed Comparison of \ourtool and \nay on Unrealizable \sygus Benchmarks}
\label{Appendix:results}

Table~\ref{Ta:results} presents a detailed comparison of the runtimes for \ourtool and \nay
on the \sygus benchmarks from the \textsc{LimitedIf} and \textsc{LimitedPlus} categories.
Both solvers could easily solve all the benchmarks in the easier \textsc{LimitedConst} category with comparable running   times (less than 1 second on most benchmarks).

\setlength{\tabcolsep}{3pt}
  \begin{table}[H]
	\caption{
		Performance of \name and \nay on a selected set of benchmarks from the \textsc{LimitedIf} and \textsc{LimitedPlus} categories, 
    which are the benchmarks that best highlight the differences between the tools (other benchmarks were quickly solved by both tools).
    CEGIS indicates the performance when using the CEGIS algorithm (with an external synthesizer for $\namenay$)
		while Oracle Examples describes the performance of the two solvers on a predefined set of examples
		for which the problem is known to be unrealizable.
		The table shows the number of nonterminals ($|N|$), productions ($|\delta|$), and variables ($|V|$)
		in the problem grammar; the number of examples produced in the CEGIS loop ($|E|$);
		and the total running time of \name and \nay.		\xmark\ denotes a timeout.}
		\footnotesize
		\centering
		\begin{tabular}{cc|rrr|rr|rr|rr}
				&\multirow{3}{*}{\bf Problem} &\multicolumn{3}{c|}{\bf } &\multicolumn{4}{c|}{\bf CEGIS } & \multicolumn{2}{c}{\bf Oracle Examples }\\
		&	  & \multicolumn{3}{c|}{\bf Grammar}& \multicolumn{2}{c|}{\bf \namenay} &  \multicolumn{2}{c|}{\bf \name} &\multicolumn{1}{c}{\bf \namenay} & \multicolumn{1}{c}{\bf \name} \\
			 &   &  {\bf $|N|$}  &  {\bf $|\delta|$} & {\bf $|V|$} &\multirow{1}{*}{\bf $|E|$} &\multirow{1}{*}{\bf time (s)} & \multirow{1}{*}{\bf $|E|$} &\multirow{1}{*}{\bf time (s)} &{\bf time (s)} &{\bf time (s)}   \\
			    \hline
\parbox[t]{-11mm}{\multirow{11}{*}{\rotatebox[origin=c]{90}{\textsc{LimitedPlus}}}}	& guard1	&	7	&	24	&	3	&	2	&	1.56	&	2	&	5.37	&	0.41	&	0.2	\\
& guard2	&	9	&	34	&	3	&	3	&	19.27	&	3	&	26.12	&	21.51	&	1.12	\\
& guard3	&	11	&	41	&	3	&	1	&	1.3	&	1	&	4.46	&	0.07	&	2.88	\\
& guard4*	&	11	&	72	&	3	&	2	&	\xmark	&	2	&	\xmark	&	72.58	&	1.93	\\
& plane1	&	2	&	5	&	2	&	1	&	1.28	&	2	&	6.65	&	0.12	&	0.02	\\
& plane2	&	17	&	60	&	2	&	2	&	\xmark	&	5	&	\xmark	&	1.24	&	7.51	\\
& plane3	&	29	&	122	&	2	&	1	&	\xmark	&	4	&	\xmark	&	25.42	&	95.17	\\
& ite1*	&	7	&	2	&	3	&	2	&	2.6	&	3	&	7.7	&	2.07	&	0.13	\\
& ite2*	&	9	&	34	&	3	&	2	&	\xmark	&	7	&	\xmark	&	29.54	&	45.67	\\
&  sum\_2\_5	&	11	&	40	&	2	&	2	&	\xmark	&	4	&	\xmark	&	20.47	&	15.65	\\
&  search\_2	&	5	&	16	&	3	&	3	&	3.13	&	3	&	6.29	&	2	&	0.13	\\
&  search\_3	&	7	&	25	&	4	&	4	&	4.07	&	4	&	8.17	&	4.81	&	0.36	\\
\hdashline																			
\parbox[t]{1mm}{\multirow{13}{*}{\rotatebox[origin=c]{90}{\textsc{LimitedIf}}}}																			
&max2	&	1	&	5	&	2	&	4	&	1.42	&	4	&	1.48	&	0.18	&	1.48	\\
&max3	&	3	&	15	&	3	&	9	&	16.57	&	9	&	\xmark	&	9.67	&	\xmark	\\
&  sum\_2\_5	&	1	&	5	&	2	&	3	&	1.49	&	3	&	0.69	&	0.26	&	0.69	\\
&  sum\_2\_15	&	1	&	5	&	2	&	3	&	1.42	&	3	&	0.87	&	0.26	&	0.87	\\
&  sum\_3\_5	&	3	&	15	&	3	&	8	&	34.84	&	8	&	\xmark	&	29.85	&	\xmark	\\
&  sum\_3\_15	&	3	&	15	&	3	&	9	&	41.87	&	6	&	\xmark	&	31.03	&	\xmark	\\
&  search\_2	&	3	&	15	&	3	&	5	&	42.55	&	5	&	\xmark	&	29.92	&	\xmark	\\
& example1	&	3	&	10	&	2	&	3	&	1.41	&	3	&	1.12	&	0.16	&	6.54	\\
& guard1	&	1	&	6	&	2	&	4	&	1.38	&	4	&	0.43	&	0.14	&	0.46	\\
& guard2	&	1	&	6	&	2	&	4	&	1.54	&	4	&	0.49	&	0.24	&	0.23	\\
& guard3	&	1	&	6	&	2	&	4	&	1.47	&	4	&	0.46	&	0.25	&	0.85	\\
& guard4	&	1	&	6	&	2	&	4	&	1.37	&	4	&	0.58	&	0.13	&	0.21	\\
& ite1	&	3	&	15	&	3	&	8	&	3.59	&	8	&	\xmark	&	5.35	&	\xmark	\\
			

\end{tabular}
\label{Ta:results}
\end{table}

\clearpage
\section{Proofs}
\label{Appendix:Proofs}

This section presents proofs for the soundness and completeness of our approach (Theorem~\ref{tree-correctness}) 
and the listing term representation (Theorem~\ref{listing-correctness}), as well as the correctness of the 
vectorized semantics (Theorem~\ref{vectorized-correctness}).
It also proves that the domains described in \S\ref{Se:Semantics} are indeed abstract and underapproximating 
(Lemma~\ref{is-abstract} and Lemma~\ref{is-underapproximating}), and finally shows the correctness of the 
fused-semantics optimization (Theorem~\ref{fused-correctness}).

In the proofs, we denote rules using their implication forms: for example, the inference rule in 
Equation~\eqref{eq:semantics_ex} can be written as 
$\BFlreduceTree{\Gamma, b}{\Etrue} \wedge \SFlreduceTree{\Gamma, s}{\Gamma_1} \wedge \StartlreduceTree{\Gamma_1, \Ewhile{b}{s}}{\Gamma_{2}}
\implies \StartlreduceTree{\Gamma, \Ewhile{b}{s}}{\Gamma_{2}}$.

\begin{theorem*}[Soundness and Completeness, \ref{tree-correctness}]
  Consider a \semgus-with-examples problem $sem = (G, \forall x \in \examples.\psi(x, f(x)))$,
  equipped with semantic rules $\mathcal{R}_{sem}$, a specification set $\examples$, and the $\mathsf{Query}$ rule
  (Equation~\eqref{eq:general-query}).
  Let the CHC form of $G$ be $\mathcal{R}_{syn}$.
  Then, \textit{Realizable} is a theorem over $\mathcal{R}_{sem}$ and $\mathcal{R}_{syn}$
  if and only if the \semgus-with-examples problem $sem$ is realizable.
  Moreover, if \textit{Realizable}  is a theorem, then the value of $t$ in the
  $\mathsf{Query}$ rule satisfies $t \in L(G)$ and $\forall x \in \examples. \: \psi(x, \sem{t}(x))$.
\end{theorem*}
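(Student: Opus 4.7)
The plan is to decompose the proof into two adequacy lemmas, one for the syntactic encoding $\mathcal{R}_{syn}$ and one for the semantic encoding $\mathcal{R}_{sem}$, and then glue them together via a single case analysis on the $\mathsf{Query}$ rule. Concretely, I will first establish:

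\textbf{Lemma (Syntactic Adequacy).} For every nonterminal $N$ of $G$ and every ranked tree $t$ of the appropriate type, $\mathsf{syn}_N(t)$ is derivable from $\mathcal{R}_{syn}$ if and only if $t\in L(N)$.

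\textbf{Lemma (Semantic Adequacy).} For every nonterminal $N$, every state $\Gamma$, every value $v$, and every $t\in L(N)$, the literal $\mathsf{sem}_N(\Gamma, t, v)$ is derivable from $\mathcal{R}_{sem}$ if and only if $\sem{t}(\Gamma, v)$ holds in the inductive semantics of Definition~\ref{Def:prod-semantics}.

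Both lemmas go through by (mutual) structural induction. For Syntactic Adequacy, the right-to-left direction is proved by induction on the derivation of $t\in L(N)$: each application of a production $A_0\to\sigma(A_1,\dots,A_i)$ corresponds, by construction of $\mathcal{R}_{syn}$, to exactly one CHC (cf.~Equation~\eqref{eq:syntax_ex}), and the induction hypothesis supplies proofs of $\mathsf{syn}_{A_j}(t_j)$ for each subterm $t_j$. The left-to-right direction is a dual induction on the CHC derivation: the only rules whose head is $\mathsf{syn}_{A_0}$ come from productions with left-hand side $A_0$, so each derivation step corresponds to a unique production. Semantic Adequacy is proved analogously, by induction on the height of the $\mathcal{R}_{sem}$-derivation in one direction and on the derivation of $\sem{t}(\Gamma,v)$ in the other; the base cases are arity-zero productions whose CHCs reduce to the constraint $\phi$ from Definition~\ref{Def:prod-semantics}, and the inductive step uses Syntactic Adequacy to match the $t_j$ arguments of the $\mathsf{sem}_{A_j}$ premises to genuine elements of $L(A_j)$.

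Given these two lemmas, the theorem follows by unfolding Equation~\eqref{eq:general-query}. For the ``only if'' direction, suppose $\mathit{Realizable}$ is a theorem. Since the sole rule with head $\mathit{Realizable}$ is $\mathsf{Query}$, any proof tree of $\mathit{Realizable}$ exposes a witness $t$ and outputs $o_i$ such that $\mathsf{syn}_{Start}(t)$, $\mathsf{sem}_{Start}(\ex_i,t,o_i)$, and $\psi(\ex_i,o_i)$ are all derivable. Syntactic Adequacy gives $t\in L(G)$; Semantic Adequacy gives $\sem{t}(\ex_i)=o_i$; and then $\psi(\ex_i,\sem{t}(\ex_i))$ for every $\ex_i\in\examples$, so $t$ is a solution. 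For the ``if'' direction, a solution $t$ together with $o_i\defeq\sem{t}(\ex_i)$ satisfies each premise of $\mathsf{Query}$ by the two adequacy lemmas plus the hypothesis $\psi(\ex_i,\sem{t}(\ex_i))$, whence $\mathit{Realizable}$ is derivable and the extracted witness is exactly $t$.

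The main obstacle will be stating the adequacy lemmas precisely in the presence of mutual recursion between nonterminals and the open constraint element $\phi$ appearing in semantic CHCs. The induction must be over derivation trees rather than term structure per se, and the induction hypothesis must be generalized simultaneously over all nonterminals of $G$ (and all $\Gamma, v$) before it is strong enough to feed back into the premises of an arbitrary CHC produced from Definition~\ref{Def:prod-semantics}. Once this generalization is pinned down, each inductive case is a mechanical unfolding of one production and its corresponding CHC, and the extraction statement in the second half of the theorem falls out of the ``only if'' direction above, since the proof tree of $\mathit{Realizable}$ exhibits $t$ syntactically.
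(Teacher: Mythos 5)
Your proposal is correct and follows essentially the same route as the paper: the paper's proof consists of exactly your Syntactic Adequacy lemma (Lemma~\ref{syn-lemma}, proved by induction on the number of productions used to derive $t$) followed by the same unfolding of the $\mathsf{Query}$ rule to read off $t \in L(G)$ and the satisfaction of $\psi$ on each example. The only difference is that where you prove a Semantic Adequacy lemma, the paper simply assumes the user-supplied rules $\mathcal{R}_{sem}$ correctly encode the intended semantics (since Definition~\ref{Def:prod-semantics} makes the CHCs \emph{the} definition of $\sem{\cdot}$); your extra lemma is harmless and arguably more careful, but it is not a different decomposition.
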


\begin{proof}
  We prove the theorem by a lemma proving the correctness of the syntax rules $\mathcal{R}_{syn}$.
  \begin{lemma}[Soundness and Completeness of $\mathcal{R}_{syn}$]
    \label{syn-lemma}
    Suppose $\mathcal{R}_{syn}$ is the CHC representation of a grammar $G$, as described in 
    \S\ref{Se:MotivatingExample} and \S\ref{Se:encoding}.
    Then for any nonterminal $N \in G$, and some term $t$, 
    $\NFrreduceTree{t}$ is a theorem of $\mathcal{R}_{syn}$ 
    if and only if $t \in L(N)$.
  \end{lemma}
  \begin{proof}
    By induction on the size of the term $t$ (the number of productions applied to produce $t$), denoted by $n$.
    \begin{itemize}
      \item Base case ($n = 1$):
        To see completeness, assume $t \in L(N)$.
        Then $t$ must be created by a production $P$ of the form $N \rightarrow t$,
        because $t$ is of size $1$, thus $P$ cannot contain any nonterminals in the RHS.
        The CHC representation for $P$ is $\Etrue \implies \NFrreduceTree{t}$; thus $\NFrreduceTree{t}$ holds.

        To see soundness, suppose $\NFrreduceTree{t}$ holds.
        The only rule in $\mathcal{R}_{syn}$ that can be used to derive the theorem $\NFrreduceTree{t}$ 
        is of the form $\Etrue \implies \NFrreduceTree{t}$, because $t$ is of size $1$ and all other rules 
        create terms with size at least $2$.
        That $\Etrue \implies \NFrreduceTree{t}$ exists in $\mathcal{R}_{syn}$ means there exists a 
        production $N \rightarrow t \in G$; thus $t \in L(N).$ 
      \item Induction hypothesis ($n \leq k$): Assume the Lemma holds for all $n \leq k$.
      \item Inductive Step ($n = k + 1$):
        To see completeness, assume $t \in L(N)$; because $n = k + 1$, the first production $P$ in the 
        leftmost derivation of $t$ must contain nonterminals in the RHS (otherwise, $t$ becomes size $1$).
        Without loss of generality, let $P = N \rightarrow op(N_1, \cdots, N_i)$. 
        The CHC representation of this rule is $\NFnrreduceTree{1}{t_1} \wedge \cdots \wedge \NFnrreduceTree{i}{t_i} \implies \NFrreduceTree{t}$, 
        where $t = op(t_1, \cdots t_i)$.
        The induction hypothesis holds on terms $t_1, \cdots, t_i$, thus 
        $\NFnrreduceTree{1}{t_1} \wedge \cdots \wedge \NFnrreduceTree{i}{t_i}$ holds and $\NFrreduceTree{t}$ holds as well.

        To prove soundness, assume $\NFrreduceTree{t}$ is a theorem of $\mathcal{R}_{syn}$.
        This implies that there exists a rule $\NFnrreduceTree{1}{t_1} \wedge \cdots \wedge \NFnrreduceTree{i}{t_i} \implies \NFrreduceTree{t}$
        in $\mathcal{R}_{syn}$; the existence of this rule implies the existence of the production 
        $N \rightarrow op(N_1, \cdots, N_i) \in G$. 
        Because the induction hypothesis holds for $\NFnrreduceTree{1}{t_1}, \cdots, \NFnrreduceTree{i}{t_i}$, 
        it follows that
        $t_1 \in N_1, \cdots, t_i \in N_i$, and thus $t \in N$.
    \end{itemize}
  \end{proof}

  The semantic rules are supplied by the user, and we assume they correctly encode the semantics of the \semgus problem.

	As the final step, recall the $\mathsf{Query}$ rule from Equation~\eqref{eq:general-query}.
  By Lemma~\ref{syn-lemma}, $\StartrreduceTree{t}$ holds if and only if $t$ is a valid term in $L(G)$;
  the semantic relations $\bigwedge_{\ex_i \in \examples} \StartlreduceTree{t, \ex_i}{o_i}$ hold if 
  and only if $t$ executed on the example set $\ex_i$ results in the outputs $o_i$, which satisfy the 
  specification.
  Thus the algorithm is both sound and (relatively) complete.
\end{proof}

\begin{theorem*}[Correctness of Listing Semantics, \ref{listing-correctness}]
  Let $\mathcal{R}^{\mathsf{List}}_{sem}$ be a set of semantic rules using a flattened representation
  of terms, created from the set of semantic rules $R_{sem}$.
  Then for any nonterminal $N$, $\NFlreduce{\Gamma, \aL_{in}}{v, \aL_{out}}$ is a theorem of
  $\mathcal{R}^{\mathsf{List}}_{sem}$ if and only if
  $\NFlreduceTree{\Gamma, t}{v}$ is a theorem of $\mathcal{R}_{sem}$, and $\aL_{in} = \aL_t \aL_{out}$
  (i.e., the concatenation of $\aL_t$ and $\aL_{out}$) where $\aL_t$ is the pre-order listing of a term
  $t \in L(N)$.
\end{theorem*}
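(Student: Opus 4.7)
The plan is to prove both directions of the biconditional by induction on the height $h$ of the derivation tree in the corresponding CHC system, leveraging the fact that the listing $\aL_t$ of a term $t$ constructed from the production $A_0 \rightarrow \sigma(A_1, \ldots, A_i) \sball{n}$ satisfies $\aL_t = [\sballlrns{n}] \mathbin{{+}\!{+}} \aL_{t_1} \mathbin{{+}\!{+}} \cdots \mathbin{{+}\!{+}} \aL_{t_i}$, where $\aL_{t_k}$ is the pre-order listing of the $k$-th subterm. This structural identity is precisely what the listing semantic rule (Equation~\eqref{eq:semantics-list}) threads through its arguments, so the core of the proof is verifying that this threading agrees with concatenation.

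For the forward direction ($\mathcal{R}_{sem} \Rightarrow \mathcal{R}^{\mathsf{List}}_{sem}$), I would show by induction on $h$ that whenever $\NFlreduceTree{\Gamma, t}{v}$ is derivable in $\mathcal{R}_{sem}$, for any list $\aL_{out}$, the judgment $\NFlreduce{\Gamma, \aL_t \mathbin{{+}\!{+}} \aL_{out}}{v, \aL_{out}}$ is derivable in $\mathcal{R}^{\mathsf{List}}_{sem}$. In the base case ($h = 1$), $t$ is a leaf produced by some $A_0 \rightarrow c \sball{n}$, so $\aL_t = [\sballlrns{n}]$ and the listing rule applies directly with no recursive premises. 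In the inductive step, the tree-based derivation exposes child judgments $\mathsf{sem}_{A_k}(\Gamma_k, t_k, v_k)$ together with a constraint $\phi$; applying the induction hypothesis left-to-right with the cumulative tails $\aL_{t_{k+1}} \mathbin{{+}\!{+}} \cdots \mathbin{{+}\!{+}} \aL_{t_i} \mathbin{{+}\!{+}} \aL_{out}$ playing the role of $\aL_{out}$ at each step, the premises of the listing rule line up exactly, and the head production identifier $\sballlrns{n}$ is prepended in the conclusion.

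For the backward direction, I would again induct on the height of the $\mathcal{R}^{\mathsf{List}}_{sem}$ derivation. Given a proof of $\NFlreduce{\Gamma, \aL_{in}}{v, \aL_{out}}$, the last rule used must be some $sem^{\mathsf{List}}_{A_0 \rightarrow \sigma(A_1, \ldots, A_i) \sball{n}}$, so $\aL_{in} = \sballlrns{n} :: \aL_1$ and the premises give listings $\aL_1, \aL_2, \ldots, \aL_{out}$ threaded through the children. By IH each child premise $\mathsf{sem}_{A_k}(\langle \Gamma_k, \aL_k \rangle, \langle v_k, \aL_{k+1} \rangle)$ yields a subterm $t_k \in L(A_k)$ with $\aL_k = \aL_{t_k} \mathbin{{+}\!{+}} \aL_{k+1}$ and a tree-based semantic derivation $\mathsf{sem}_{A_k}(\langle \Gamma_k, t_k \rangle, v_k)$. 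Setting $t = \sigma(t_1, \ldots, t_i)$, Lemma~\ref{syn-lemma} (in the form derived from the listing syntax rules of Equation~\eqref{eq:syntax-list}) gives $t \in L(A_0)$, and a telescoping argument shows $\aL_{in} = \aL_t \mathbin{{+}\!{+}} \aL_{out}$. The corresponding tree-based semantic rule can then be applied with the same constraint $\phi$ to conclude $\NFlreduceTree{\Gamma, t}{v}$.

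The main obstacle I anticipate is bookkeeping the telescoping list equalities carefully: one must verify that the right-to-left prepending done by the listing syntax rule (Equation~\eqref{eq:syntax-list}) and the left-to-right consumption in the listing semantic rule (Equation~\eqref{eq:semantics-list}) combine so that the residual list seen by each child is exactly $\aL_{t_k} \mathbin{{+}\!{+}} \cdots \mathbin{{+}\!{+}} \aL_{t_i} \mathbin{{+}\!{+}} \aL_{out}$. A secondary subtlety is that the production identifier $\sballlrns{n}$ used to select the listing rule must match the root production of $t$; this is guaranteed because the head of the incoming list uniquely determines which listing rule applies, mirroring how the root symbol determines the tree rule. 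Once this correspondence is made explicit, the induction goes through cleanly in both directions.
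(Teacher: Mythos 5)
Your proposal is correct and follows essentially the same route as the paper: both directions are proved by induction on the height of the derivation tree, with the base case handled by leaf productions (where the listing is a singleton production identifier prepended to $\aL_{out}$) and the inductive step matching the listing rule of Equation~\eqref{eq:semantics-list} against the corresponding tree-based rule and applying the induction hypothesis to each premise. Your treatment is in fact slightly more explicit than the paper's about the telescoping of the threaded lists and the universal quantification over $\aL_{out}$; the only point the paper adds that you omit is the remark that the argument presupposes a finite derivation tree, i.e., termination of the program being evaluated.
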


\begin{proof} 
  The correctness of the syntax rules using flattened representations of terms can be proved in an identical 
  manner to Lemma~\ref{syn-lemma}.
  For the proof of correctness of the semantic rules, we proceed by induction on the height of the derivation tree 
  for $\NFlreduce{\Gamma, \aL_{in}}{v, \aL_{out}}$ (for completeness) and 
  $\NFlreduceTree{\Gamma, t}{v}$ (for soundness), denoted by $n$.

  \begin{itemize}
    \item Base case $(n = 1)$: 
      To see completeness when $n = 1$, notice $\NFlreduce{\Gamma, \aL_{in}}{v, \aL_{out}}$ must be proved by a 
      single rule without premises, namely $\Etrue \implies \NFlreduce{\Gamma, \aL_{in}}{v, \aL_{out}}$; 
      by Definition~\ref{Def:prod-semantics}, 
      the only productions that may create such a rule are of the form $N \rightarrow t \sballns{j}$, where $t$ is a leaf node.
      In this case, $\aL_{in} = \sballlrns{j}::\aL_{out}$ and the preorder listing of $t$ is $\sballlrns{j}$.
      Also note that $N \rightarrow t \sballns{j}$ must be equipped with the semantics 
      $\Etrue \implies \NFlreduceTree{\Gamma, t}{v}$; thus $\NFlreduceTree{\Gamma, t}{v}$ is a theorem of 
      $\mathcal{R}_{sem}$, and $t \in L(N)$.

      A similar proof works for soundness: that 
      $\NFlreduceTree{\Gamma, t}{v}$ is a theorem of $\mathcal{R}_{sem}$ for $t \in L(N)$ implies the existence 
      of a production $N \rightarrow t \sballns{j}$ 
      equipped with the semantics $\Etrue \implies \NFlreduce{\Gamma, \sballlrns{j}::\aL_{out}}{v, \aL_{out}}$;
      thus $\NFlreduce{\Gamma, \sballlrns{j}::\aL_{out}}{v, \aL_{out}}$ is a theorem of 
      $\mathcal{R}_{sem}^{\mathsf{List}}$.

    \item Induction hypothesis $(n \leq k)$: We assume the theorem holds for all $n \leq k$.

    \item Inductive step $(n = k + 1)$:
      Both completeness and soundness can be proved using a similar procedure to the base case.
      
      To see completeness, assume $\NFlreduce{\Gamma, \aL_{in}}{v, \aL_{out}}$ is a theorem of 
      $\mathcal{R}_{sem}^{\mathsf{List}}$ with a derivation tree of height $k + 1$.
      Since the height is greater than 1, the derivation tree must contain premises of the form 
      $\phi \wedge \NFnlreduce{1}{\Gamma, \aL_1}{v_1, \aL_2} \wedge \cdots \wedge \NFnlreduce{i}{\Gamma, \aL_i}{v_i, \aL_{out}} 
      \implies \NFlreduce{\Gamma, \sballlrns{n}::\aL_1}{v, \aL_{out}}$, 
      as stated in Equation~\eqref{eq:semantics-list}.
      Now notice the existence of such a rule in $\mathcal{R}_{sem}^{\mathsf{List}}$ implies the 
      existence of a production 
      $A_0 \rightarrow \sigma(A_1, \cdots, A_i) \sball{n}$, equipped with the semantics 
      $\phi \wedge \mathsf{sem}_{A_1}(\langle \Gamma_1, t_1 \rangle, v_1), \cdots, \mathsf{sem}_{A_i}(\langle \Gamma_i, t_i \rangle, v_i)
      \implies  \mathsf{sem}_{A_0}(\langle \Gamma, t \rangle, v_0)$;
      this in turn implies the rule 
      $\phi \wedge \mathsf{sem}_{A_1}(\langle \Gamma_1, t_1 \rangle, v_1), \cdots, \mathsf{sem}_{A_i}(\langle \Gamma_i, t_i \rangle, v_i)
      \implies  \mathsf{sem}_{A_0}(\langle \Gamma, t \rangle, v_0)$ is in $\mathcal{R}_{sem}$.

      Now note that, due to the induction hypothesis, the theorem holds for the premises 
      $\NFnlreduce{1}{\Gamma_1, \aL_1}{v_1, \aL_2}, \cdots, \NFnlreduce{i}{\Gamma, \aL_i}{v_i, \aL_{out}}$, 
      and thus $\mathsf{sem}_{A_0}(\langle \Gamma, t \rangle, v_0)$ is a theorem of $\mathcal{R}_{sem}$, $t \in L(N)$ and $\aL_{in} = \aL_t ++\aL_{out}$ (where $\aL_t$ is the pre-order listing of $t$).

      Soundness can be proved in an identical manner by reversing the flow of rules, and applying the 
      induction hypothesis at the last step.
  \end{itemize}

  We note that this proof assumes that the height of the derivation tree is finite, which 
  assumes that the given program terminates.
  This makes little difference, as our algorithm for program synthesis also requires that the 
  produced derivation tree is finite---as shown in \S\ref{Se:Evaluation}, this does not affect the practicality of our approach.
\end{proof}

Theorem~\ref{vectorized-correctness} states the correctness of the vectorized semantics from \S\ref{Se:vectorized-semantics}.

\begin{theorem*}[Correctness of Vectorized Semantics, \ref{vectorized-correctness}]
	Given a set of examples $\examples=[\Gamma_1,\ldots, \Gamma_n]$ and a term $t$,
	$\sem{t}_\examples([\Gamma_1,\ldots, \Gamma_n], [\Gamma_1',\ldots, \Gamma_n'])$
	if and only if for every $1\leq i\leq n$, we have
	$\sem{t}(\Gamma_i,\Gamma_i')$.
\end{theorem*}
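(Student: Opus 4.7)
The plan is to proceed by structural induction on the term $t$, in tandem with induction on the height of the derivation tree (needed for the $\rWhile$ case). For each syntactic form, I will show both directions of the biconditional by unfolding one application of the vectorized rule (respectively, the standard rule) and relating it, coordinate-by-coordinate, to $n$ independent applications of the standard rule (respectively, the vectorized rule). The base cases ($\Etrue$, $\Efalse$, constants, variables) are immediate because the vectorized rules simply apply the atomic operation pointwise. Straight-line constructs such as $\Eassign{x}{e}$ reduce to the $\forall i$ quantifier in $\mathsf{Assign_\examples}$, which, together with the induction hypothesis for $e$, gives the coordinatewise equivalence.

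The hard case—and the main obstacle—is $\rWhile$, because the vectorized semantics repeatedly partitions the example vector into ``still iterating'' and ``finished'' halves using $\Vproj$ and $\Vjoin$. The core technical step is a lemma about the special state $\bot$: for every term $t'$ and every index $i$, if $\VG[i] = \bot$ then $\sem{t'}_{\examples}(\VG, \VG')$ forces $\VG'[i] = \bot$, i.e., $\bot$ is a ``sink'' that absorbs all subsequent computation. With this invariant in place, I can argue that running $\mathsf{WTrue_\examples}$ is equivalent, at each index $i$, to either (a) applying $\mathsf{WTrue}$ once and recursing, if $\Vv_b[i] = \Etrue$, or (b) leaving $\Gamma_i$ unchanged, if $\Vv_b[i] = \Efalse$, exactly matching $\mathsf{WFalse}$ at that index. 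The $\Vjoin$ operator then reassembles the pieces consistently with the per-index standard derivations.

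For the forward direction, I unfold a $\mathsf{WTrue_\examples}$ step: by the induction hypothesis on $b$, $\Vv_b[i]$ agrees with $\sem{b}(\Gamma_i, \cdot)$ for each $i$; by the inductive hypothesis on $s$ (restricted to the projected subvector), each active index takes one standard loop iteration; and the outer recursive call on $\sem{\Ewhile{b}{s}}_{\examples}$, together with the induction hypothesis on derivation height, handles the remaining iterations. Combining with $\mathsf{WFalse_\examples}$ for the terminating indices yields $\sem{\Ewhile{b}{s}}(\Gamma_i, \Gamma_i')$ for every $i$. The reverse direction is obtained by taking the $n$ individual derivations, bucketing indices by whether their guards are true or false at each round, and assembling a single vectorized derivation step-by-step; the $\bot$-sink lemma ensures that ``early-terminating'' indices contribute nothing further and are correctly restored by $\Vjoin$. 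Sequencing ($\Eseq{S_1}{S_2}$) and conditionals are handled analogously to $\rWhile$ but without the outer induction on height.
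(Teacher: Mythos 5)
Your proposal is correct and follows essentially the same route as the paper's proof: structural induction over the grammar, with an inner induction for the loop case (your derivation-height measure and the paper's iteration count coincide here), and the observation that $\bot$ absorbs all computation so that $\Vproj{\cdot}{\cdot}$ and $\Vjoin{\cdot}{\cdot}$ correctly park and restore the terminated coordinates. The only presentational difference is that you isolate the $\bot$-absorption property as an explicit ``sink'' lemma to be proved, whereas the paper builds it into the definition of $\bot$ and asserts $\sem{t}(\bot,\bot)$ for all $t$; making it a lemma is a mild improvement in rigor but not a different argument.
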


\begin{proof}
  We proceed by structural induction on $G_{impv}$, from Figure~\ref{fig:grm}. 
  Note $\bot$ is a special state that ignores all computations, i.e., $\sem{t}(\bot, \bot)$ for any term $t$.
  \begin{itemize}
    \item Base case: Expressions ($BVExpr, IntExpr, BoolExpr$): 
      Take the expression $x$ as an example.
      It is clear that $\sem{t}_{\examples}([\Gamma_1, \ldots, \Gamma_n], [v_1, \ldots, v_n])$ 
      if and only if for every $1 \leq i \leq n$, $\sem{t}(\Gamma_i, v_i)$, because each $v_i$ 
      will be obtained using $\Gamma_i(x)$, remaining independent for all examples.
      Other base cases in the expressions category can be proved in a similar manner; 
      then one may apply structural induction on the expressions to prove the theorem for expressions.

    \item Assignments and Updates ($\Eassign{x}{E}, \Eassign{x}{C}, \Earrayupdate{arr}{E}{E}$): 
      Assignments and updates form the base case for other statements in $G_{impv}$.
      Take $\Eassign{x}{E}$ as an example.
      Then the rule $\mathsf{Assign_\examples}$ in Figure~\ref{fig:vector-sem-bottom} directly shows that 
      $\sem{t}_{\examples}([\Gamma_1, \ldots, \Gamma_n], [\Gamma'_1, \ldots, \Gamma'_i])$ 
      if and only if for every $1 \leq i \leq n$, $\sem{t}(\Gamma_i, \Gamma'_i)$, 
      as each $\Gamma_i$ is updated separately (as shown on the second premise).
      A similar argument works for the other cases.
    \item Sequential Composition ($\Eseq{s_1}{s_2}$):
      The vectorized semantic rule for $s_1; s_2$ is as following:
      $$
      \infer[\mathsf{Seq_\examples}]{\sem{\Eseq{s_1}{s_2}}_\examples(\VG, \VG_2)}
      {\sem{s_1}_\examples(\VG, \VG_1) \quad \sem{s_2}_\examples(\VG_1, \VG_2)}
      $$
      The case holds due to the induction hypothesis.
    \item Branch Statements ($\Eifthenelse{b}{s_1}{s_2}$):
      The vectorized semantic rule for $\Eifthenelse{b}{s_1}{s_2}$ is as following:
      $$
        \infer[\mathsf{SITE_\examples}]{\sem{\Eifthenelse{b}{s_1}{s_2}}(\VG, \Vjoin{\VG_1}{\VG_2})}
        {\sem{b}_\examples(\VG, \Vv_b) \quad \sem{s_1}_\examples(\Vproj{\VG}{\Vv_b}, \VG_1) \quad 
        \sem{s_2}_\examples(\Vproj{\VG}{\lnot \Vv_b}, \VG_2)
        }
      $$
      Where $\textsc{Proj}$ and $\textsc{Merge}$ are defined in Figure~\ref{fig:vector-sem-bottom}.
      By the induction hypothesis, the entries of $\VG_1$ are correct on those states which not have 
      been projected to $\bot$, i.e. $b$ evaluates to $\Etrue$; in other cases it is $\bot$.
      The same holds for $\VG_2$ for states where $b$ evaluates to $\Efalse$; in other cases it is $\bot$.
      Then by the definition of $\textsc{Merge}$, the theorem holds for if-then-else as well.

    \item Loops ($\Ewhile{b}{s}$):
      The rules $\mathsf{WTrue_\examples}$ and $\mathsf{WFalse_\examples}$ are the two semantic rules 
      associated with the term $\Ewhile{b}{s}$.
      To prove the case for while loops, we introduce an additional induction on the number of loop iterations, 
      denoted by $n$.
      \begin{itemize}
        \item Base case $(n = 0)$: In this case, there are 0 iterations and thus the tree consists of only 
          one application of $\mathsf{WFalse}_\examples$; the theorem holds as all conditions are $\Efalse$ 
          or $\bot$ (which implies the incoming state is $\bot$ as well), and the input states equal the output states 
          for all examples, for both the vectorized and standard semantics.
        \item Induction hypothesis $(n \leq k)$: Assume the theorem holds for all $n \leq k$.
        \item Inductive step $(n = k + 1)$:
          If $n \geq 1$, note that there must be at least one application of the $\mathsf{WTrue}_\examples$ rule 
          in the derivation tree.
          Consider the first of these applications, and note that the induction hypothesis holds for all three semantic premises:
          $\sem{b}_\examples(\VG, \Vv_b)$, $\eSreduce{s}{\Vproj{\VG}{\Vv_b}}{\VG_1}$ (structural induction), 
          and $\eSreduce{\Ewhile{b}{s}}{\VG_1}{\VG_2}$ (height of the derivation tree).
          Again, by the semantics of the $\textsc{merge}$ operator, the theorem holds for the inductive step as well:
          the $\textsc{Proj}$ and $\textsc{merge}$ operators send individual examples on which the loop should not iterate 
          (i.e., $v_{b_i}$ evaluates to $\Efalse$ for the $i$-th entry $\Gamma_i$) to $\bot$, and recovers them to their
          original values after iteration has finished for all examples in the vector.
      \end{itemize}
  \end{itemize}
\end{proof}

Lemma~\ref{is-abstract} states that the domain $\mathbb{B}_i$ from \S\ref{Se:abstract-semantics} satisfies 
our definition of an abstract semantics.

\begin{lemma}
  \label{is-abstract}
  The semantics $\sem{\cdot}^\#$ defined in \S\ref{Se:abstract-semantics} over the domain $\mathbb{B}_i$ is an abstract semantics for 
  $G_{impv}$, with respect to the standard semantics $\sem{\cdot}$ partly given in Figure~\ref{fig:vector-sem-top}, defined 
  over the integers $\mathbb{Z}$.
\end{lemma}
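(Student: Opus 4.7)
The plan is to exhibit the Galois connection $(\alpha,\gamma)$, lift it to program states, and then prove the soundness condition by induction on the height of the derivation tree of $\sem{t}(\Gamma,v)$, case-splitting on the last rule applied.

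First I would define the abstraction pointwise on bits. For a single value $v \in \mathbb{Z}$ (or in the bitvector sort), let $\alpha_i(v) \in \{\Etrue,\Efalse\}$ denote the $i$-th bit of $v$, and for a set $V \subseteq \mathbb{Z}$ let $\alpha_i(V) = \Etrue$ if every $v \in V$ has $i$-th bit $\Etrue$, $\alpha_i(V) = \Efalse$ if every $v \in V$ has $i$-th bit $\Efalse$, and $\top$ otherwise. The concretization is $\gamma_i(\Etrue) = \{v : \text{$i$-th bit of $v$ is } \Etrue\}$, $\gamma_i(\Efalse) = \{v : \text{$i$-th bit of $v$ is } \Efalse\}$, and $\gamma_i(\top) = \mathbb{Z}$. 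A routine check shows $\alpha_i(V) \sqsubseteq a \iff V \subseteq \gamma_i(a)$, so $(\alpha_i,\gamma_i)$ is a Galois connection on $\mathbb{B}_i$. For arrays one extends this componentwise; for states $\Gamma$, define $\alpha(\Gamma)$ as the abstract state that assigns $\alpha_i(\Gamma(x))$ to each variable $x$, and similarly lift $\gamma$.

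Next I would prove the main soundness claim by induction on the height $n$ of the derivation tree establishing $\sem{t}(\Gamma,v)$. In the base case ($n=1$), $t$ is a constant or variable, and one checks directly that $\sem{t}^\#(\alpha(\Gamma),\alpha(v))$ holds: for a constant $c$ the abstract rule returns $\alpha_i(c)$ exactly; for a variable $x$, it returns $\alpha(\Gamma)(x) = \alpha_i(\Gamma(x))$. In the inductive step, one considers each semantic rule of $G_{impv}$. For compositional operators like $\mathsf{And^\#}$ (already shown in Equation~\eqref{eq:abstract-and}), the inductive hypothesis gives $\sem{e_j}^\#(\alpha(\Gamma), v_j^\#)$ with $v_j \in \gamma(v_j^\#)$, and then the definition of the abstract transformer (join-to-$\top$ whenever a subterm is $\top$, exact bitwise-and otherwise) ensures that $\alpha_i(v_1 \,\&\, v_2) \sqsubseteq v^\#$. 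For $+$, the transformer simply returns $\top$, which trivially overapproximates; the Galois connection property $v \in \gamma(\top)$ is immediate. For conditionals one must observe that the join in $\mathbb{B}_i$ is an upper bound on both branches; for loops, one uses either $\rWT_{\mathsf{Havoc}}^{\#}$ (trivially sound since $\top$ concretizes to all values) or $\rWT_{\mathsf{Join}}^{\#}$, which I would justify by a nested induction on the number of loop iterations: the zero-iteration case corresponds to $\Gamma^\#$ itself (which abstracts the unchanged state), and the inductive step joins the abstract state after one iteration with the abstract state after the remaining iterations, soundly overapproximating the concrete final state.

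The main obstacle I expect is the loop case with $\rWT_{\mathsf{Join}}^{\#}$: the rule does not syntactically mirror the concrete fixpoint semantics, since the abstract rule applies a \emph{single} join of the pre-state with the post-body-state, whereas the concrete semantics iterates until the guard fails. To handle this I would strengthen the inductive hypothesis to say that for any concrete state reachable during the loop, its abstraction is $\sqsubseteq$ the abstract state computed by the rule. This requires showing that the join operator in $\mathbb{B}_i$ commutes appropriately with repeated applications of the abstract body transformer—a standard observation for finite-height abstract domains. The rest of the cases (assignments, sequencing, boolean conjunction and negation, array accesses and updates) are routine given the Galois connection and the compositional form of the abstract transformers.
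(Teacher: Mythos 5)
Your proposal is correct and takes essentially the same route as the paper's proof: exhibit the bit-$i$ abstraction/concretization pair as a Galois connection and then verify soundness of the abstract transformers, with the $\rWT_{\mathsf{Havoc}}^{\#}$ and $\rWT_{\mathsf{Join}}^{\#}$ loop rules as the only non-trivial cases (Havoc trivially via $\gamma(\top)=\mathbb{Z}$, Join because it upper-bounds both the exit state and the post-iteration state). You simply spell out the structural induction and the nested induction on iteration count that the paper's much terser argument leaves implicit.
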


\begin{proof}
  The abstraction function $\alpha$ is given as $\alpha(v) = bit_i(v)$, where $bit_i(v)$ denotes the $i$th bit of $v$.
  The concretization function $\gamma$ can be given as $\gamma(\Etrue) = \{v \in \mathbb{Z} \mid bit_i(v) = \Etrue\}$, 
  $\gamma(\Efalse) = \{v \in \mathbb{Z} \mid bit_i(v) = \Efalse\}$, and $\gamma(\top) = \mathbb{Z}$; 
  $\alpha$ and $\gamma$ form a Galois connection, thus the semantics are abstract.
  The different abstract transformers presented in Equations~\eqref{eq:abstract-while-imprecise} ($\rWT_{\mathsf{Havoc}}^{\#}$)
  and~\eqref{eq:abstract-while-precise} ($\rWT_{\mathsf{Join}}^{\#}$) are
  both sound: clearly $\rWT_{\mathsf{Havoc}}^{\#}$ is sound as $\Gamma_r^\# = \top$ and thus all values in the state are
  concretized to $\mathbb{Z}$ through $\gamma$.
  Similarly, $\rWT_{\mathsf{Join}}^{\#}$ is sound as well as $\Gamma_r^\# = \top$, or, given $\sem{\Ewhile{b}{s}}(\Gamma, \Gamma_r)$ for
  the standard semantics, the elements of $\Gamma_r$ are guaranteed to be within the concretization of $\Gamma_r^\#$ by the definition
  of $\textsc{Join}$.
\end{proof}

Lemma~\ref{is-underapproximating} states that the underapproximating semantics from \S\ref{Se:underapproximate-semantics} and 
Example~\ref{ex:under-example} satisfies 
our definition of an underappoximating semantics.

\begin{lemma}
  \label{is-underapproximating}
  The semantics $\sem{\cdot}^\flat$ described in \S\ref{Se:underapproximate-semantics} and Example~\ref{ex:under-example} is an 
  underapproximating semantics with respect to the standard semantics $\sem{\cdot}$ partly given in Figure~\ref{fig:vector-sem-top}.
\end{lemma}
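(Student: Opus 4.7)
The plan is to show that whenever $\sem{t}^{\flat}$ is defined on $\langle \Gamma, i \rangle$ and produces $\langle \Gamma_r, i' \rangle$, then stripping the bound component yields a valid derivation in the standard semantics, i.e.\ $\sem{t}(\Gamma, \Gamma_r)$. ``$\sem{\cdot}^{\flat}$ is defined'' here means that there exists a finite derivation tree in the rule system of Example~\ref{ex:under-example}; where no such derivation exists, the semantics is simply silent, so there is nothing to verify. Thus the lemma reduces to showing that every derivable judgment in $\sem{\cdot}^{\flat}$ is also derivable in $\sem{\cdot}$.

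I would prove this by induction on the height $n$ of the derivation tree of $\sem{t}^{\flat}(\langle \Gamma, i \rangle, \langle \Gamma_r, i' \rangle)$. For the base cases (leaf terms such as constants, variable reads, and other expression atoms), the rules of $\sem{\cdot}^{\flat}$ are identical to those of $\sem{\cdot}$ modulo the extra bound component, which is threaded through unchanged; the correspondence with the standard semantics is therefore immediate. For the inductive cases on non-loop constructs (assignments, sequential composition, conditionals, arithmetic and Boolean operators), the underapproximating rules are simply the standard rules with $i$ carried along as an inert parameter, so the induction hypothesis applied to each premise yields the corresponding standard-semantic judgment, and the standard rule then delivers the conclusion.

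The two interesting cases are the loop rules. For $\mathsf{WFalse}^{\flat}$: a derivation of $\sem{\Ewhile{b}{s}}^{\flat}(\langle \Gamma, i \rangle, \langle \Gamma, i \rangle)$ contains a subderivation of $\sem{b}^{\flat}(\langle \Gamma, i \rangle, \Efalse)$, which by the induction hypothesis gives $\sem{b}(\Gamma, \Efalse)$, from which the standard $\mathsf{WFalse}$ rule yields $\sem{\Ewhile{b}{s}}(\Gamma, \Gamma)$; the side condition $i > 0$ and the bound are simply discarded. For $\mathsf{WTrue}^{\flat}$: the derivation contains subderivations of $\sem{b}^{\flat}(\langle \Gamma, i \rangle, \Etrue)$, $\sem{s}^{\flat}(\langle \Gamma, i \rangle, \langle \Gamma', i \rangle)$, and $\sem{\Ewhile{b}{s}}^{\flat}(\langle \Gamma', i-1 \rangle, \langle \Gamma_r, i-1 \rangle)$; all three have strictly smaller derivation height, so the induction hypothesis gives $\sem{b}(\Gamma, \Etrue)$, $\sem{s}(\Gamma, \Gamma')$, and $\sem{\Ewhile{b}{s}}(\Gamma', \Gamma_r)$, and the standard $\mathsf{WTrue}$ rule then assembles these into $\sem{\Ewhile{b}{s}}(\Gamma, \Gamma_r)$.

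The main obstacle is making precise the statement ``where $\sem{\cdot}^{\flat}$ is defined'', and ensuring we correctly justify that undefined cases impose no obligation: the bounded rules can fail to produce a derivation (for example, when a loop would require more than $i$ iterations, the premise $i > 0$ eventually becomes $0 > 0$ and no rule applies), but this only means $\sem{\cdot}^{\flat}$ is undefined on the corresponding triple. Since the definition of underapproximation only requires agreement where $\sem{\cdot}^{\flat}$ \emph{is} defined, this is consistent with the lemma, and the induction above covers exactly those cases.
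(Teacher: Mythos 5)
Your proposal is correct and takes essentially the same route as the paper: the paper's proof simply observes that $\sem{\cdot}^{\flat}$ is the standard semantics with the added constraint $i > 0$ and identical rule structure, so a structural induction shows any derivable $\sem{t}^{\flat}(\Gamma, v)$ yields a derivation of $\sem{t}(\Gamma, v)$. Your version spells out the induction on derivation height and the loop cases explicitly, but the underlying argument is the same.
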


\begin{proof}
  $\sem{\cdot}^\flat$ is defined by adding the constraint $i > 0$ to the standard semantics.
  Because the semantic rules have identical structure, one can apply a simple structural induction to show that 
  a theorem $\sem{t}^\flat(\Gamma, v)$ can only be proved if $\sem{t}(\Gamma, v)$.
\end{proof}

Finally, Theorem~\ref{fused-correctness} states the soundness and completeness
of the fused semantics from \S\ref{Se:fused-semantics}.

\begin{theorem*}[Soundness and Completeness of Fused Semantics, \ref{fused-correctness}]
   Let $\mathcal{R}_{syn}^{\mathsf{fused}}$ denote a set of fused syntax rules created according to the fused-semantic optimization 
   described in \S\ref{Se:fused-semantics}, from a grammar $G$ equipped with a semantics $\mathcal{R}_{sem}$.
   Then for any nonterminal $N \in G$, $\NFrreduceFused{\langle \Gamma, t \rangle}{v}$ is a theorem over 
  $\mathcal{R}_{syn}^{\mathsf{fused}}$ and $\mathcal{R}_{sem}$ if and only if $t \in L(N)$ and 
  $\NFlreduceTree{\Gamma, t}{v}$ is a theorem over $\mathcal{R}_{sem}$.
\end{theorem*}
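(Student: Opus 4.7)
The plan is to prove both directions by induction on the height of the derivation tree, in the same style as Theorem~\ref{listing-correctness}. Each direction amounts to decomposing or assembling the combined notions of syntactic validity and semantic evaluation that the fused relation $\mathsf{syn}^{\mathsf{fused}}_N$ captures in a single judgement. The structural correspondence between the fused rule in Equation~\eqref{eq:fused-rule} and the pair consisting of the listing-style syntax rule (Equation~\eqref{eq:syntax-list}) and the listing-style semantic rule (Equation~\eqref{eq:semantics-list}) is what makes the induction go through, since the constraint $\phi$ and the child variables $\Gamma_j, v_j$ appearing in the fused rule are exactly the ones appearing in the syntax/semantic pair for the same production.

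For the forward direction, suppose $\NFrreduceFused{\langle \Gamma, t \rangle}{v}$ is derivable over $\mathcal{R}_{syn}^{\mathsf{fused}} \cup \mathcal{R}_{sem}$. The base case is a leaf production $N \to t \sballns{j}$: the fused rule has only $\phi$ as a premise, and by construction of the fused rule from the corresponding semantic rule, $\phi$ also implies $\NFlreduceTree{\Gamma, t}{v}$ over $\mathcal{R}_{sem}$, while $t \in L(N)$ is witnessed by the production itself. For a non-loop internal production $A_0 \to \sigma(A_1, \ldots, A_i) \sball{n}$, the induction hypothesis applied to each premise $\mathsf{syn}^{\mathsf{fused}}_{A_j}(\langle \Gamma_j, \cdot \rangle, \langle v_j, \cdot \rangle)$ yields both $t_j \in L(A_j)$ and $\NFnlreduceTree{j}{\Gamma_j, t_j}{v_j}$; combining these with $\phi$ via the semantic rule from Definition~\ref{Def:prod-semantics} gives $\NFlreduceTree{\Gamma, \sigma(t_1, \ldots, t_i)}{v}$, and the matching syntax rule gives $t \in L(A_0)$.

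The main obstacle, and the case that requires the most care, is the while-loop rule in Equation~\eqref{eq:fused-loop}: here the fused premises for $B$ and $S$ yield, by IH, both $b \in L(B)$, $s \in L(S)$ and their semantic values, but the third premise is already an ordinary semantic relation $\NFlreduceTree{\Gamma_1, \Ewhile{b}{s}}{\Gamma_2}$ rather than a fused one. The key observation is that the $b$ and $s$ in this third premise are, by variable sharing inside the rule, the \emph{same} syntactic objects produced by the fused premises for $B$ and $S$; thus no separate ``same-body'' obligation arises, and applying the semantic rule $\mathsf{WTrue}$ (or $\mathsf{WFalse}$ at the terminating iteration) to the three premises yields a derivation of $\NFlreduceTree{\Gamma, \Ewhile{b}{s}}{\Gamma_2}$ over $\mathcal{R}_{sem}$, while the syntax rule for $N \to \Ewhile{B}{S}$ instantiated with the same $b$ and $s$ gives $\Ewhile{b}{s} \in L(N)$. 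This is exactly the design reason why Equation~\eqref{eq:fused-loop} uses the ordinary semantic relation on the recursive premise rather than the fused one.

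For the backward direction, I would assume $t \in L(N)$ and $\NFlreduceTree{\Gamma, t}{v}$, and build a fused derivation by simultaneously consuming the syntactic derivation (which identifies the production at the root) and the semantic derivation (which fixes the intermediate states and values). For a non-loop production, the top-level semantic rule and the matching syntax rule share children $t_j$, states $\Gamma_j$, and values $v_j$, so the IH on each child produces a fused derivation and we assemble them via Equation~\eqref{eq:fused-rule}. For the loop case, IH on $b$ and $s$ (which are strictly smaller) yields the fused premises for $B$ and $S$; the recursive semantic premise is already a theorem of $\mathcal{R}_{sem}$ by assumption and is passed through unchanged, completing the instance of Equation~\eqref{eq:fused-loop}. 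As in Theorem~\ref{listing-correctness}, the argument tacitly assumes finite derivation trees, i.e.\ terminating executions, which is consistent with the rest of the paper.
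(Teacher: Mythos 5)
Your proposal is correct and follows essentially the same route as the paper's proof: induction on the height of the derivation tree in each direction, exploiting the one-to-one structural correspondence between each fused rule and the production's syntax/semantics rule pair, with the loop case handled by the height induction under the standing assumption of terminating derivations. Your explicit treatment of the variable-sharing in Equation~\eqref{eq:fused-loop} simply spells out a detail the paper leaves implicit.
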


\begin{proof}
  Similar to the way we proved correctness of the listing semantics, we proceed by induction on the height of the derivation tree
  for $\NFlreduceTree{\Gamma, t}{v}$ (for completeness) and $\NFrreduceFused{\langle \Gamma, t \rangle}{v}$ (for soundness), denoted by $n$:
  the proof is essentially a merging of the proofs for Lemma~\ref{syn-lemma} and Theorem~\ref{listing-correctness}.

  \begin{itemize}
    \item Base case $(n = 1)$:
      To see completeness, assume that $t \in L(N)$ and $\NFlreduceTree{\Gamma, t}{v}$ is a theorem over $\mathcal{R}_{sem}$.
      Because the $n = 1$, $t$ is a leaf of size 1; and thus there must exist a production $N \rightarrow t$ equipped with the semantics 
      $\Etrue \implies \NFlreduceTree{\Gamma, t}{v}$.
      Such a production and its semantics is encoded using the fused-semantics optimization as 
      $\Etrue \implies \NFrreduceFused{\langle \Gamma, t \rangle}{v}$ in  $\mathcal{R}_{syn}^{\mathsf{fused}}$; 
      thus the theorem holds.

      To see soundness, assume that $\NFlreduceTree{\Gamma, t}{v}$ is a theorem of $\mathcal{R}_{syn}^{\mathsf{fused}}$.
      Because $n = 1$, the derivation tree for $\mathcal{R}_{syn}^{\mathsf{fused}}$ cannot contain any premises 
      and thus there must be a rule of the form $\Etrue \implies \NFrreduceFused{\langle \Gamma, t \rangle}{v}$ in $\mathcal{R}_{syn}^{\mathsf{fused}}$; 
      this implies the existence of a production $N \rightarrow t$, equipped with the semantics $\Etrue \implies \NFlreduceTree{\Gamma, t}{v}$, 
      in $G$.
      Thus $t \in L(N)$ and $\NFlreduceTree{\Gamma, t}{v}$ is a theorem over $\mathcal{R}_{sem}$; the theorem holds.
    \item Induction hypothesis $(n \leq k)$: We assume the theorem holds for all $n \leq k$.
    \item Inductive step $(n = k + 1)$:
      To see completeness, again assume that $t \in L(N)$ and $\NFlreduceTree{\Gamma, t}{v}$ is a theorem over $\mathcal{R}_{sem}$.
      As $n > 1$, there must be premises in the first rule applied for proving $\mathcal{R}_{sem}$; i.e., the rule must be of form 
      $\phi \wedge \NFnlreduceTree{1}{\Gamma, t_1}{v_1} \wedge \cdots \wedge \NFnlreduceTree{i}{\Gamma, t_i}{v_i} \implies  \NFlreduceTree{\Gamma, t}{v}$: 
      the existence of this rule in $\mathcal{R}_{sem}$ implies the existence of a production 
      $N \rightarrow op(N_1, \cdots, N_i) \in G$ equipped with the rule as a semantics 
      (for example, $N \rightarrow \Ewhile{B}{S}$ from Figure~\ref{fig:fused-stmt-rules} below).
      This rule in $G$ is encoded into the rule 
      $\phi \wedge \NFnrreduceFused{\langle \Gamma, t_1 \rangle}{v_1}{1} \wedge \cdots 
      \wedge \NFnrreduceFused{\langle \Gamma, t_i \rangle}{v_i}{i} \implies \NFrreduceFused{\langle \Gamma, t \rangle}{v}$ 
      in  $\mathcal{R}_{syn}^{\mathsf{fused}}$ (following the example of $N \rightarrow \Ewhile{B}{S}$, this would correspond to either 
      $syn^{\mathsf{fused}}_{S \rightarrow \Ewhile{B}{S}, \Etrue}$ or $syn^{\mathsf{fused}}_{S \rightarrow \Ewhile{B}{S}, \Efalse}$);
      note that every premise holds due to the induction hypothesis, and thus $\NFrreduceFused{\langle \Gamma, t \rangle}{v}$ is a theorem of 
      $\mathcal{R}_{syn}^{\mathsf{fused}}$ as well.
      Because the induction is on the \emph{height of the derivation tree}, the proof works for loops as well, given that the loop terminates.

			Soundness works in a similar procedure: supposing $\NFrreduceFused{\langle \Gamma, t \rangle}{v}$ is a theorem of 
      $\mathcal{R}_{syn}^{\mathsf{fused}}$, because $n > 1$, the first rule applied in deriving $\NFrreduceFused{\langle \Gamma, t \rangle}{v}$ must be of the 
      form  $\phi \wedge \NFnrreduceFused{\langle \Gamma, t_1 \rangle}{v_1}{1} \wedge \cdots
      \wedge \NFnrreduceFused{\langle \Gamma, t_i \rangle}{v_i}{i} \implies \NFrreduceFused{\langle \Gamma, t \rangle}{v}$.
      This in turn implies the existence of a production 
      $N \rightarrow op(N_1, \cdots, N_i) \in G$, equipped with the semantics  
      $\phi \wedge \NFnlreduceTree{1}{\Gamma, t_1}{v_1} \wedge \cdots \wedge \NFnlreduceTree{i}{\Gamma, t_i}{v_i} \implies  \NFlreduceTree{\Gamma, t}{v}$;
      by the induction hypothesis, $t \in L(N)$ and $ \NFlreduceTree{\Gamma, t}{v}$ is a theorem of $\mathcal{R}_{sem}$.
  \end{itemize}
\end{proof}

Figure~\ref{fig:fused-stmt-rules} displays fused syntax rules for the productions 
$S \rightarrow \Eassign{x}{E}$,
$S \rightarrow \Eseq{S}{S}$,
$S \rightarrow \Eifthenelse{B}{S}{S}$, $S \rightarrow \Ewhile{B}{S}$ in $G_{impv}$ 
as an example, generated by applying the fused-semantics optimization on standard semantics.

\begin{figure}[H]
  $$
   \infer[syn^{\mathsf{fused}}_{S \rightarrow \Eassign{x}{E}}]{\SFrreduceFused{\langle \Gamma, \Eassign{x}{e} \rangle}{\Gamma_1}}
   {\EFrreduceFused{\langle \Gamma, e \rangle}{v} \quad \Gamma_1 = \Gamma[x \mapsto v]}
  $$
  $$
   \infer[syn^{\mathsf{fused}}_{S \rightarrow \Eseq{S}{S}}]{ \SFrreduceFused{\langle \Gamma, \Eseq{s_1}{s_2} \rangle}{\Gamma_2} }
   {\SFrreduceFused{\langle \Gamma, s_1 \rangle}{\Gamma_1} \quad \SFrreduceFused{\langle \Gamma, s_2 \rangle}{\Gamma_2}}
  $$
  $$
   \infer[syn^{\mathsf{fused}}_{S \rightarrow \Eifthenelse{B}{S}{S}, \Etrue}]{ \SFrreduceFused{\langle \Gamma, \Eifthenelse{b}{s_1}{s_2} \rangle}{\Gamma_1} }
   {\BFrreduceFused{\langle \Gamma, b \rangle}{\Etrue} \quad \SFrreduceFused{\langle \Gamma, s_1 \rangle}{\Gamma_1}}
  $$
  $$
   \infer[syn^{\mathsf{fused}}_{S \rightarrow \Eifthenelse{B}{S}{S}, \Efalse}]{ \SFrreduceFused{\langle \Gamma, \Eifthenelse{b}{s_1}{s_2} \rangle}{\Gamma_1} }
   {\BFrreduceFused{\langle \Gamma, b \rangle}{\Efalse} \quad \SFrreduceFused{\langle \Gamma, s_2 \rangle}{\Gamma_1}}
  $$
  $$
   \infer[syn^{\mathsf{fused}}_{S \rightarrow \Ewhile{B}{S}, \Etrue}]{\SFrreduceFused{\langle \Gamma, \Ewhile{b}{s} \rangle}{\Gamma_2} }
   {\BFrreduceFused{\langle \Gamma, b \rangle}{\Etrue} \quad  \SFrreduceFused{\langle \Gamma, s \rangle}{\Gamma_1} \quad  \SFlreduceTree{\Gamma_1, \Ewhile{b}{s}}{\Gamma_2} }
  $$
  $$
   \infer[syn^{\mathsf{fused}}_{S \rightarrow \Ewhile{B}{S}, \Efalse}]{\SFrreduceFused{\langle \Gamma, \Ewhile{b}{s} \rangle}{\Gamma}}
   {\BFrreduceFused{\langle \Gamma, b \rangle}{\Efalse}}
  $$
  \caption{Example syntax rules using the fused semantics.}
  \label{fig:fused-stmt-rules}
\end{figure}

\end{document}